\newcommand*{\DtR}[1]{{DtR(#1)}}
\newcommand*{\DtQ}[2]{{DtQ(#1, #2)}}
\newcommand*{\C}[2]{{C^{#1}_{#2}}}
\newcommand*{\Ch}[2]{{\hat{C}^{#1}_{#2}}}
\newcommand*{\Chat}[1]{{\hat{C}_{#1}}}
\newcommand*{\Cu}[1]{{C_{u_{#1}}}}
\newcommand*{\Cuu}[2]{{C^{#1}_{u_{#2}}}}
\newtheorem{theorem}{$Theorem$}
\newtheorem{definition}{$Definition$}
\newtheorem{lemma}{$Lemma$}
\newtheorem{corollary}{$Corollary$}
\theoremstyle{remark}
\newtheorem*{remark}{Remark}
\title{Uniform Partition in Population Protocol Model under Weak Fairness}
\author{Hiroto Yasumi}
\author{Fukuhito Ooshita}
\author{Michiko Inoue}
\date{}
\affil{Nara Institute of Science and Technology}
\begin{document}
\maketitle

\begin{abstract}
We focus on a uniform partition problem in a population protocol model.
The uniform partition problem aims to divide a population into $k$ groups of the same size, where $k$ is a given positive integer.
In the case of $k=2$ (called uniform bipartition), a previous work clarified space complexity under various assumptions: 1) an initialized base station (BS) or no BS, 2) weak or global fairness, 3) designated or arbitrary initial states of agents, and 4) symmetric or asymmetric protocols, except for the setting that agents execute a protocol from arbitrary initial states under weak fairness in the model with an initialized base station. In this paper, we clarify the space complexity for this remaining setting. In this setting, we prove that $P$ states are necessary and sufficient to realize asymmetric protocols, and that $P+1$ states are necessary and sufficient to realize symmetric protocols, where $P$ is the known upper bound of the number of agents.
From these results and the previous work, we have clarified the solvability of the uniform bipartition for each combination of assumptions.
Additionally, we newly consider an assumption on a model of a non-initialized BS and clarify solvability and space complexity in the assumption.
Moreover, the results in this paper can be applied to the case that $k$ is an arbitrary integer (called uniform $k$-partition).

\end{abstract}

\section{Introduction}
\subsection{The Background}
\begin{table}[t!]
\begin{minipage}{1.0\hsize}
\caption{The minimum number of states to solve the uniform bipartition problem under global fairness.}
\begin{center}
\label{tab:global}
\begin{tabular}{|c|c|c|c|c|c|}
\hline
BS & initial states of agents & symmetry & upper bound & lower bound & paper \\
\hline
\multirow{4}{*}{initialized BS} & \multirow{2}{*}{designated} & asymmetric & 3 & 3 & \multirow{6}{*}{\cite{bipartition}} \\
\cline{3-5}
 &  & symmetric & 3 & 3 &  \\
\cline{2-5} 
 & \multirow{2}{*}{arbitrary} & asymmetric & 4 & 4 & \\
\cline{3-5}
 &  & symmetric & 4 & 4 & \\
\cline{1-5}
\multirow{4}{*}{non-initialized BS} & \multirow{2}{*}{designated} & asymmetric & 3 & 3 & \\
\cline{3-5}
 &  & symmetric & 3 & 3 & \\
\cline{2-6} 
 & \multirow{2}{*}{arbitrary} & asymmetric & \multicolumn{2}{|c|}{unsolvable} & this \\
\cline{3-5}
 &  & symmetric & \multicolumn{2}{|c|}{unsolvable} & paper \\
\hline
\multirow{4}{*}{no BS} & \multirow{2}{*}{designated} & asymmetric & 3 & 3 & \multirow{4}{*}{\cite{bipartition}} \\
\cline{3-5}
 &  & symmetric & 4 & 4 &  \\
\cline{2-5}
 & \multirow{2}{*}{arbitrary} & asymmetric & \multicolumn{2}{|c|}{unsolvable} &  \\
\cline{3-5}
 &  & symmetric & \multicolumn{2}{|c|}{unsolvable} &  \\
\hline
\end{tabular}
\end{center}
\caption{The minimum number of states to solve the uniform bipartition problem under weak fairness.}
\begin{center}
\label{tab:weak}
\begin{tabular}{|c|c|c|c|c|c|}
\hline
BS & initial states of agents & symmetry & upper bound & lower bound & paper \\
\hline
\multirow{4}{*}{initialized BS} & \multirow{2}{*}{designated} & asymmetric & 3 & 3 & \multirow{2}{*}{\cite{bipartition}} \\
\cline{3-5}
 &  & symmetric & 3 & 3 & \\
\cline{2-6} 
 & \multirow{2}{*}{arbitrary} & asymmetric & $P$ & $P$ & this\\
\cline{3-5}
 &  & symmetric & $P+1$ & $P+1$ & paper \\
\hline
\multirow{4}{*}{non-initialized BS} & \multirow{2}{*}{designated} & asymmetric & 3 & 3 & \multirow{2}{*}{\cite{bipartition}} \\
\cline{3-5}
 &  & symmetric & 3 & 3 & \\
\cline{2-6} 
 & \multirow{2}{*}{arbitrary} & asymmetric & \multicolumn{2}{|c|}{unsolvable} & this \\
\cline{3-5}
 &  & symmetric & \multicolumn{2}{|c|}{unsolvable} & paper \\
\hline
\multirow{4}{*}{no BS} & \multirow{2}{*}{designated} & asymmetric & 3 & 3 & \multirow{4}{*}{\cite{bipartition}} \\
\cline{3-5}
 &  & symmetric & \multicolumn{2}{|c|}{unsolvable} & \\
\cline{2-5}
 & \multirow{2}{*}{arbitrary} & asymmetric & \multicolumn{2}{|c|}{unsolvable} & \\
\cline{3-5}
 &  & symmetric & \multicolumn{2}{|c|}{unsolvable} & \\
\hline
\end{tabular}
\end{center}
\end{minipage}
\end{table}

A population protocol model \cite{angluin2006computation,angluin2005self} is an abstract model for devices with heavily limited computation and communication capability. 
The devices are represented as passively moving agents, and a set of agents is called a population.
In this model, if two agents approach, an interaction happens between them. At the time of the interaction, the two agents update their states.
By repeating such interactions, agents proceed with computation.
The population protocol model has many application examples such as sensor networks and molecular robot networks.
For example, one may construct a network to investigate the ecosystem by attaching sensors to a flock of wild small animals such as birds. In this system, sensors exchange information with each other when two sensors approach sufficiently close. By repeating such information exchange, the system eventually grasps the entire environment of the flock.
Another example is a system of molecular robots \cite{murata2013molecular}. In this system, a large number of robots cooperate in a human body to achieve an objective (e.g. carrying medicine).
To realize such systems, various fundamental protocols have been proposed in the population protocol model \cite{aspnes2009introduction}.
	For example, there are leader election protocols \cite{angluin2005stably,berenbrink2018simple,bilke2017population,cai2012prove,izumi2015space,sudo2018loosely}, counting protocols \cite{aspnes2017time,beauquier2015space,beauquier2007self}, majority protocols \cite{angluin2008simple,gasieniec2017deterministic}, naming protocols \cite{burman2018brief}, and so on.

In this paper, we study a uniform $k$-partition problem, which divides a population into $k$ groups of the same size, where $k$ is a given positive integer.
The uniform $k$-partition problem has some applications. For example, we can save the battery by switching on only some groups. 
Another example is to execute multiple tasks by assigning different tasks to each group simultaneously.
Protocols for the uniform k-partition problem can be used to attain fault-tolerance \cite{delporte2006birds}.

As a previous work, Yasumi et al. \cite{bipartition,yasumi2019space} studied space complexity of uniform partition when the number of partitions is two (called uniform bipartition). In the paper, they considered four types of assumptions: 1) an initialized base station (BS) or no BS, 2) designated or arbitrary initial states of agents, 3) asymmetric or symmetric protocols, and 4) global or weak fairness.
A BS is a special agent that is distinguishable from other agents and has powerful capability.
An initialized BS means that the BS has a designated initial state in the initial configuration.
The BS enables us to construct efficient protocols, though it is sometimes difficult to implement.
	The assumption of initial states bear on the requirement of initialization and the fault-tolerant property.
If a protocol requires designated initial states, it is necessary to initialize all agents to execute the protocol.
Alternatively, if a protocol solves the problem with arbitrary initial states, we do not need to initialize agents other than the BS. 
This implies that, when agents transit to arbitrary states by transient faults, the protocol can reach the desired configuration by initializing the BS.
Symmetry of protocols is related to the power of symmetry breaking in the population.
 	Asymmetric protocols may include asymmetric transitions that make agents with the same states transit to different states. 
This needs a mechanism to break symmetry among agents and its implementation is not easy with heavily limited devices. 
Symmetric protocols do not include such asymmetric transitions.
Fairness is an assumption of interaction patterns. Though weak fairness guarantees only that every pair of agents interact infinitely often, global fairness makes a stronger assumption on the order of interactions.

For most combinations of assumptions, Yasumi et al. \cite{bipartition} clarified the solvability of the uniform bipartition problem and the minimum number of states to solve the problem. 
Tables \ref{tab:global} and \ref{tab:weak} show the solvability of the uniform bipartition.
These tables show the number of states to solve the uniform bipartition problem under various assumptions, where $P$ is the known upper bound of the number of agents.
The remaining case for an initialized BS and no BS is a protocol with an initialized BS and arbitrary initial states under weak fairness. For this case, they proved only that $P-2$ states are necessary.
In this paper, we will give tight lower and upper bounds of the number of states for this case.
In addition, recently Burman et al. \cite{burman2018brief} have considered the case with a non-initialized BS, which is distinguished from other agents but has an arbitrary initial state, for a naming problem. 
Because Yasumi et al. \cite{bipartition} did not consider the case, we also consider the case in this paper.

For the general case of an arbitrary number of partitions, Yasumi et al. \cite{bipartition} proposed a symmetric protocol with no BS and designated initial sates under global fairness. The protocol uses $3k-2$ states for an agent to construct $k$ groups of the same size. However, no protocol has been proposed for other combinations of assumptions.

 \subsection{Our Contributions}
 
\begin{table}[t]
\begin{minipage}{1.0\hsize}
\caption{The minimum number of states to solve the uniform $k$-partition problem.}
\begin{center}
\label{tab:weak2}
\begin{tabular}{|c|c|c|c|c|c|}
\hline
fairness & BS & initial states of agents & symmetry & upper bound & lower bound\\
\hline
\multirow{2}{*}{weak fairness} & \multirow{2}{*}{single} & \multirow{2}{*}{arbitrary} & asymmetric & $P$ & $P$\\
\cline{4-6}
 &  &  & symmetric & $P+1$ & $P+1$ \\
\hline 
 global fairness & no & designated & symmetric & $3k-2$ \cite{yasumi2019population} & $k$ (truism) \\
\hline
\end{tabular}
\end{center}
\end{minipage}
\end{table}

Our main contribution is to clarify the solvability of the uniform bipartition problem with arbitrary initial states under weak fairness in the model with an initialized BS.
A previous work \cite{bipartition} proved only that $P-2$ states are necessary for each agent, where $P$ is the known upper bound of the number of agents.
In this paper, we improve the lower bound from $P-2$ states to $P$ states for asymmetric protocols and from $P-2$ states to $P+1$ states for symmetric protocols.
Additionally, we propose an asymmetric protocol with $P$ states, and obtain a symmetric protocol with $P+1$ states by a scheme proposed in \cite{beauquier2015space}.

Another contribution is to clarify the solvability in case of a non-initialized BS for the uniform bipartition problem. 
For designated initial states, the protocol with an initialized BS, which is proposed in \cite{bipartition} can still work even if the BS is non-initialized.
In this paper, we prove the impossibility with arbitrary initial states in case of non-initialized BS.

By combining these results with the previous work \cite{bipartition}, we have clarified the tight upper and lower bounds on the number of states for an agent to solve the uniform bipartition problem for all combinations of assumptions (see Tables \ref{tab:global} and \ref{tab:weak}).

For the case of an initialized BS, arbitrary initial states, and weak fairness, it is interesting to compare these results with those of naming protocols \cite{burman2018brief}. A naming protocol aims to assign different states to all agents, and hence it can be regarded as a uniform $P$-partition protocol (the size of each group is zero or one). Burman et al. \cite{burman2018brief} prove that, to realize naming protocols in the same setting, $P$ states are necessary and sufficient for asymmetric protocols and $P+1$ states are necessary and sufficient for symmetric protocols. That is, naming protocols have the same space complexity as uniform $k$-partition protocols. Clearly naming protocols (or uniform $P$-partition protocols) require $P$ states to assign different states to $P$ agents. Interestingly uniform bipartition protocols still require $P$ states in this setting. On the other hand, the number of states is reduced to three or four when we assume designated initial states or global fairness.

Protocols proposed in this paper are available for the uniform $k$-partition problem, where $k$ is a given integer. That is, $P$ states and $P+1$ states are sufficient to realize asymmetric and symmetric protocols, respectively, to solve the uniform $k$-partition problem from arbitrary initial states under weak fairness in the model with an initialized BS.
Since the uniform bipartition is a special case of the uniform $k$-partition, the lower bound for the uniform bipartition problem can be applied to the uniform $k$-partition problem. That is, $P$ states and $P+1$ states are necessary to realize asymmetric and symmetric protocols, respectively, under the assumption.
Therefore, we have clarified the tight upper and lower bounds of the number of states for the uniform $k$-partition problem under the assumption (see Table \ref{tab:weak2}).

 \subsection{Related Works}
 The population protocol model was first introduced in \cite{angluin2006computation,angluin2007computational}.
In those papers, the class of computable predicates in this model was studied.
After that, many fundamental tasks have been studied such as leader election, counting, and majority.
Those problems have been studied under various assumptions such as existence of a base station, fairness, symmetry of protocols, and initial states of agents.
Many researchers have considered the leader election problem for both designated and arbitrary initial states. For designated initial states, leader election protocols have been studied intensively to minimize the time and space complexity \cite{alistarh2017time,alistarh2015polylogarithmic,berenbrink2018simple,bilke2017population,doty2018stable,gkasieniec2018almost,gkasieniec2018fast,sudo2018logarithmic}. 
Alistarh et al. \cite{alistarh2015polylogarithmic} proposed an algorithm that solves the problem in polylogarithmic stabilization time with polylogarithmic states.
In \cite{doty2018stable}, it was clarified that $\Omega(n)$ parallel time is necessary (i.e., $\Omega(n^2)$ interactions are necessary) to solve the problem with probability 1.
After that, many researchers focused on solving the problem with high probability and shrink the time and space complexity \cite{berenbrink2018simple,bilke2017population,gkasieniec2018almost,gkasieniec2018fast,sudo2018logarithmic}. 
On the other hand, for arbitrary initial states, self-stabilizing and loosely-stabilizing protocols are proposed \cite{angluin2005self,cai2012prove,izumi2015space,sudo2018loosely}. 
The counting problem, which aims to count the number of agents in the population, was introduced by \cite{beauquier2007self}.
After that, some researchers have studied the protocol to minimize the space complexity of the counting protocols \cite{beauquier2015space,izumi2014space2}. 
In \cite{aspnes2017time}, a time and space optimal protocol was proposed.
The majority problem is also a fundamental problem that aims to decide majority of initial states.  For the majority problem, many protocols have been studied \cite{alistarh2017time,alistarh2018space,alistarh2015fast,angluin2008simple,bilke2017population,gasieniec2017deterministic}. 
Although there are some difference in the model (existence of failure, deterministic or probabilistic solution, and so on), these works also aim to minimize the time and space complexity. 
Moreover, in recent years, Burman et al. \cite{burman2018brief} proposed a naming protocol which assigns a different state (called name) to each agent. In the paper, they completely clarify the solvability of the naming protocol under various assumptions.

The uniform $k$-partition problem and a similar problem have been considered in \cite{burman2018brief,lamani2016realization,tomoki2018differentiation,yasumi2019population}.
Lamani et al. \cite{lamani2016realization} studied a group composition problem, which aims to divide a population into groups of designated sizes. 
 They assume that half of agents make interactions at the same time and that all agents know $ n $. Therefore the protocol does not work in our setting. 
 In \cite{yasumi2019population}, Yasumi et al. proposed a uniform $k$-partition protocol that requires $3k-2$ states without the BS under global fairness.
Moreover, some of the authors extended the result of \cite{yasumi2019population} to the $R$-generalized partition problem, where the protocol divides all agents into $k$ groups whose sizes follow a given ratio $R$ \cite{tomoki2018differentiation}.
Since they assume designated initial states and global fairness, the protocol does not work in our setting. 
In addition, Delporte-Gallet et al. \cite{delporte2006birds} proposed a protocol solving the $k$-partition problem with less uniformity. 
This protocol guarantees that each group includes at least $n/(2k)$ agents, where $n$ is the number of agents. This protocol requires $k(k + 3)/2$ states under global fairness. 

  \section{Definitions}
 \subsection{Population Protocol Model}
A population is a collection of pairwise interacting agents, denoted by $A$.
A protocol ${\cal P}(Q, \delta)$ consists of $Q$ and $\delta$, where $Q$ is a set of possible states of agents and $\delta$ is a set of transitions on $ Q $.
Each transition in $\delta$ is denoted by $ (p, q) \rightarrow (p ', q') $,
which means that, when an agent with state $ p $ and an agent with state $ q $ interact, they transit their states to $ p'$ and $ q' $, respectively.
Transition $(p,q) \rightarrow (p',q')$ is null if both $p=p'$ and $q=q'$ hold. We omit null transitions in descriptions of algorithms.
Transition $ (p, q) \rightarrow (p ', q') $ is asymmetric if both $p=q$ and $p'\neq q'$ hold; otherwise, the transition is symmetric. Protocol ${\cal P}(Q, \delta)$ is symmetric if every transition in $\delta$ is symmetric. 
Protocol ${\cal P}(Q,\delta)$ is asymmetric if every transition in $\delta$ is symmetric or asymmetric.
Protocol ${\cal P}(Q,\delta)$ is deterministic if, for any pair of states $(p,q)\in Q\times Q$, exactly one transition $(p,q)\rightarrow (p', q') $ exists in $\delta$. We consider only deterministic protocols in this paper. 
 A global state of a population is called a configuration, defined as a vector of (local) states of all agents.
 A state of agent $a$ in configuration $C$, is denoted by $s(a, C)$. 
Moreover, when $C$ is clear from the context, we simply denote $s(a)$.
 Transition of configurations is described in the form $C \rightarrow C'$, which means that configuration $C'$ is obtained from $C$ by a single transition of a pair of agents.
 For configurations $C$ and $C'$, if there exists a sequence of configurations $C = C_0, C_1, \ldots , C_m = C'$ such that $C_i \rightarrow C_{i+1}$ holds for any $i$ ($0 \le i < m$), we say 
 $C'$ is reachable from $C$, denoted by $C \xrightarrow{*} C'$.
 An infinite sequence of configurations $E=C_0, C_1, C_2, \ldots$ is an execution of a protocol if $C_i \rightarrow C_{i+1}$ holds for any $i$ ($i \ge 0$).
An execution $E$ is weakly-fair if every pair of agents $a$ and $a'$ interacts infinitely often.
An execution segment is a subsequence of an execution.

In this paper, we assume that a single BS exists in $A$.
The BS is distinguishable from other non-BS agents, although non-BS agents cannot be distinguished.
That is, state set $ Q $ is divided into state set $ Q_b $ of a BS and state set $ Q_p $ of non-BS agents.
The BS has unlimited resources, in contrast with resource-limited non-BS agents.
 That is, we focus on the number of states $|Q_p|$ for non-BS agents and do not care the number of states $|Q_b|$ for the BS.
For this reason, we say a protocol uses $x$ states if $|Q_p|=x$ holds.
Throughout the paper, we assume that non-BS agents have arbitrary initial states. On the other hand, as for the BS, we consider two cases, an initialized BS and a non-initialized BS.
When we assume an initialized BS, the BS has a designated initial state while all non-BS agents have arbitrary initial states.
When we assume a non-initialized BS, the BS also has an arbitrary initial state.
 For simplicity, we use agents only to refer to non-BS agents in the following sections. To refer to the BS, we always use the BS (not an agent).
In the initial configuration, the BS and non-BS agents do not know the number of agents, but they know the upper bound $P$ of the number of agents.

 \subsection{Uniform $k$-Partition Problem} 
 Let $A_p$ be a set of all non-BS agents.
 Let $ f: Q_p \rightarrow \{color_1, color_2, \ldots, color_k \}$ be a function that maps a state of a non-BS agent to $ color_i (1 \leq i \leq k)$. We define a color of $a\in A_p$ as $f(s(a))$. We say agent $a\in A_p$ belongs to the $i$-th group if $ f(s(a))=color_i $ holds.

 Configuration $ C $ is stable if there is a partition $\{ G_1 $, $ G_2 $, $\ldots, G_k \}$ of $ A_p $ that satisfies the following condition:
 
 \begin{enumerate}
 	\item $\left||G_i|-|G_j|\right| \leq 1$ for any $i$ and $j$, and
 	\item For all $C^*$ such that $C \xrightarrow{*} C^*$, each agent in $G_i$ belongs to the $i$-th group at $C^*$ (i.e., at $C^*$, any agent $a$ in $G_i$ satisfies $ f(s(a))=color_i $).
 \end{enumerate}
 
  An execution $E = C_0$, $C_1$, $C_2$, $\ldots$ solves the uniform $ k $-partition problem if $E$ includes a stable configuration $C_t$.
  If every weakly-fair execution $ E $ of protocol $ {\cal P} $ solves the uniform $ k $-partition problem, we say protocol $ {\cal P} $ solves the uniform $ k $-partition problem under weak fairness.

\section{Impossibility Results for Initialized BS and Weak Fairness}
In this section, we give impossibility results of asymmetric and symmetric protocols for the uniform bipartition problem (i.e., $k=2$). Clearly these impossibility results can be applied to the uniform $k$-partition problem for $k>2$.
Recall that, for an initialized BS, we assume weak fairness and arbitrary initial states.

Since we consider the case of $k=2$, function $f$ is defined as $f : Q_p \rightarrow \{color_1,color_2\}$. In this section, we assign colors $red$ and $blue$ to $color_1$ and $color_2$, respectively, and we define $f$ as function $f : Q_p \rightarrow \{red,blue\}$ that maps a state of a non-BS agent to $red$ or $blue$. We say agent $a\in A_p$ is $red$ (resp., $blue$) if $f(s(a))=red$ (resp., $f(s(a))=blue$) holds.
We say $s$ is a $c$-state if $f(s)=c$ holds.
For $c\in\{red,blue\}$, we define $c$-agent as an agent that has a $c$-state. 
We define $\overline{red}=blue$ and $\overline{blue}=red$.

\subsection{Common Properties of Asymmetric and Symmetric Protocols}

First, we show basic properties that hold for both asymmetric and symmetric protocols.
The proofs of those properties are given in Appendix \ref{app:com}.
 Let $Alg$ be a protocol that solves the uniform bipartition. Recall that $P$ is the known upper bound of the number of agents. Hence, $Alg$ must solve the uniform bipartition when the actual number of agents is at most $P$. In the remainder of this subsection, we consider the case that the actual number of agents is $P-2$. 

Lemma \ref{lem:P-2blue} shows that, in any execution for $P-2$ agents, eventually all agents continue to keep different states. 

\begin{lemma}
\label{lem:P-2blue}
In any weakly-fair execution $E=C_0$, $C_1$, $C_2$, $\ldots$ of $Alg$ with $P-2$ agents and an initialized BS, there exists a configuration $C_h$ such that 1) $C_h$ is a stable configuration, and, 2) all agents have different states at $C_{h'}$ for any $h'\ge h$.
\end{lemma}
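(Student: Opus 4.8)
The plan is to establish the two claims in turn. The first is immediate from correctness: since $Alg$ solves the uniform bipartition, the weakly-fair execution $E$ contains a stable configuration, and I would simply fix the earliest one and call it $C$. By the definition of stability, every configuration reachable from $C$ keeps each agent in its designated group, so from $C$ onward each agent's color under $f$ is frozen; I will call such a configuration \emph{color-locked}, meaning no interaction sequence from it ever changes any agent's color. Thus all the real work lies in the second claim, that after some index all $P-2$ agents hold pairwise distinct states and never again coincide.

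I would prove the second claim by contradiction, exploiting that $Alg$ must also be correct for every population of size at most $P$. Suppose the states never become permanently distinct. Since there are finitely many configurations, the tail of $E$ contains a configuration $D$, reachable from $C$ and hence itself color-locked, in which two agents share a state $s$; write $c=f(s)$ for their common color. The crucial observation is that, although the BS has unlimited memory, it cannot count distinct agents that share the state $s$: under weak fairness a single state-$s$ agent may interact arbitrarily many times, so the BS's record of interactions with a state-$s$ agent cannot reveal how many such agents actually exist. Because $D$ is color-locked, along the tail of $E$ the BS meets state-$s$ agents infinitely often yet never flips any of them.

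I would then augment the population to some size $m\le P$ by adding one or two fresh agents whose (arbitrary) initial state is $s$, chosen so that color $c$ overshoots its quota $\lceil m/2\rceil$ for $m$ agents; two extra agents suffice because the base count is $P-2$, leaving room up to $P$, and a short parity argument settles the boundary case. I would next build a weakly-fair execution of this larger population that the BS cannot distinguish from the tail of $E$: the extra state-$s$ agents simply take over the roles played by the repeated interactions of the coincident agents of $E$. Concretely, I would maintain a state-preserving folding of the augmented population onto the population of $E$, firing only transitions that already occur along $E$; since $E$ is color-locked, no such transition ever flips a color, so no added state-$s$ agent is ever flipped. Hence color $c$ remains over quota forever and the augmented execution never reaches a balanced stable configuration, contradicting the correctness of $Alg$ for $m\le P$ agents.

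The hard part, I expect, is constructing this folded execution and keeping it weakly-fair. Every pair in the larger population — including pairs among the added agents and pairs of an added agent with the BS — must interact infinitely often, yet each such interaction must correspond under the folding to an interaction that also appears in the weakly-fair, color-locked execution $E$, so that it likewise changes no color. The delicate step is to interleave the mandatory interactions of the extra agents with a faithful simulation of $E$ while preserving the state-correspondence at every point; this is precisely where the indistinguishability of equal-state agents, together with the color-lock guaranteed by stability, must be used carefully, and it is the crux on which the whole impossibility rests.
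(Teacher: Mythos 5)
Your strategy coincides with the paper's own proof: argue by contradiction, extract a recurring shared state, add extra agents initialized to that state so the population has size $P$, and simulate the original execution so that the color imbalance of two persists forever. But there is a genuine gap: you explicitly leave unconstructed the weakly-fair execution of the $P$-agent population, and that construction is the actual mathematical content of the paper's proof, not a deferrable detail. Two concrete points. First, you extract only a single configuration $D$ in which two agents share a state $s$; the construction needs more, namely (by pigeonhole over the finitely many configurations) a fixed state $y$ and a fixed pair of agents $a_p,a_{p'}$ with $s(a_p)=s(a_{p'})=y$ at infinitely many configurations after stabilization --- these recurrences are the hand-off points of the simulation. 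Second and mainly, the scheduling problem you flag as ``the crux'' is solved in the paper by a procedure $Proc(q,q')$ (imported from the earlier uniform-bipartition paper): whenever all four interchangeable agents (the coincident pair plus the two added agents) simultaneously hold $y$, exactly two of them are designated to play the roles of $a_p$ and $a_{p'}$ and mirror $E$ until the next coincidence point, while the other two stay completely idle; weak fairness of $E$ guarantees that the active pair meets each other and every other agent during its phase, and cycling through the six possible active pairs --- $Proc(p,p')$, $Proc(p,P-1)$, $Proc(p,P)$, $Proc(P-1,p')$, $Proc(P,p')$, $Proc(P,P-1)$ --- makes every pair of the augmented population interact infinitely often.

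It is worth seeing why some such discipline is unavoidable, i.e., why your ``state-preserving folding \ldots{} firing only transitions that already occur along $E$'' cannot by itself deliver weak fairness. Two agents that are currently folded onto the same agent of $E$ both hold $y$, and an interaction between them fires $\delta(y,y)$, which need not be a transition occurring along $E$ and in general destroys the folding (afterwards neither agent need hold $y$, so neither can continue to mirror, or later hand back, the role of $a_p$); once the folding is lost, nothing constrains their subsequent interactions to be color-preserving, and the imbalance argument collapses. So agents with the same image under the folding must never meet, yet weak fairness forces every pair to meet infinitely often; the rotation resolves this tension by re-folding at coincidence points so that any two of the four agents eventually occupy the distinct roles $a_p$ and $a_{p'}$, at which moment they may interact by mirroring a genuine interaction of $a_p$ and $a_{p'}$ in $E$. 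Finally, a small correction: ``one or two fresh agents'' must be exactly two --- with $P$ even, adding a single agent to the evenly split $P-2$ agents yields an imbalance of one, which is a perfectly legal stable configuration, so no contradiction would follow.
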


\begin{proof}$(Sketch)$
For contradiction, we assume that there exist two agents with the same state $s$ in a stable configuration of some execution $E$ with $P-2$ agents.
Next, consider an execution with $P$ agents such that two additional agents have $s$ as their initial states and other agents behave similarly to $E$.
In the execution, two additional agents do not join the interactions until $P-2$ agents converge to a stable configuration in $E$. At that time, two of the $P-2$ agents have state $s$ and additional two agents also have state $s$. 
We can prove that, from this configuration, $P-2$ agents cannot recognize the two additional agents and hence they make the same behavior as in $E$.
In addition, the two additional agents can keep state $s$.
Since the numbers of $red$ and $blue$ agents are balanced without the two additional agents and the two additional agents have the same state, the uniform bipartition problem cannot be solved.
This is a contradiction.
\end{proof}

In the next lemma, we prove that there exists a configuration $C$ such that, in any configuration reachable from $C$, all agents have different states. In addition, we also show that the system reaches $C$ in some execution.
\begin{definition}
\label{def:strcon}
Configuration $C$ is strongly-stable if 1) $C$ is stable, and, 2) for any configuration $C'$ with $C \xrightarrow{*} C'$, all agents have different states at $C'$.
\end{definition}

\begin{lemma}
\label{lem:strcon}
When the number of agents other than the BS is $P-2$, there exists an execution of $Alg$ that includes a strongly-stable configuration.
\end{lemma}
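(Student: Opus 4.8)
The plan is to work with the (finite) reachability graph on all configurations reachable from an initial configuration when there are $P-2$ agents and an initialized BS, and to locate a strongly-stable configuration inside a \emph{terminal} strongly connected component of this graph. This lets me convert the single-execution, ``eventually'' guarantee of Lemma~\ref{lem:P-2blue} into the universally-quantified ``for all reachable $C'$'' requirement of Definition~\ref{def:strcon}.

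First I would observe that the set of configurations reachable from initial configurations is finite, so its condensation into strongly connected components is a finite DAG and hence has at least one sink. Let $\mathcal{K}$ be such a sink, i.e., a terminal strongly connected component. Two facts about $\mathcal{K}$ drive the argument: (a) it is closed under transitions, so for every $C\in\mathcal{K}$ the set of configurations $C'$ with $C \xrightarrow{*} C'$ is exactly $\mathcal{K}$; and (b) it is strongly connected, so any configuration of $\mathcal{K}$ is reachable from any other while staying inside $\mathcal{K}$. Both follow directly from $\mathcal{K}$ being a terminal SCC, using that the transition relation is total (the protocol is deterministic, so every pair interaction yields a configuration). Moreover $\mathcal{K}$ is reachable from some initial configuration, since it sits in the reachability graph.

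The heart of the proof is to show that every configuration in $\mathcal{K}$ has all agents in distinct states. To do this I would build a single weakly-fair execution $E$ that starts from an initial configuration, enters $\mathcal{K}$, and thereafter visits every configuration of $\mathcal{K}$ infinitely often. This is possible because $\mathcal{K}$ is finite, strongly connected, and closed: I would repeat forever a round that first navigates through all configurations of $\mathcal{K}$ in turn (forcing each to recur) and then lets every pair of agents interact once (securing weak fairness), with closedness keeping the whole tail inside $\mathcal{K}$. Applying Lemma~\ref{lem:P-2blue} to $E$ yields a configuration $C_h$ after which all agents always have different states; since every configuration of $\mathcal{K}$ appears in $E$ after $C_h$, every configuration of $\mathcal{K}$ has all agents in distinct states. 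Finally, Lemma~\ref{lem:P-2blue} also gives that $C_h$ is stable, and I may take $h$ large enough that $C_h\in\mathcal{K}$; since stability is preserved under reachability (any $C^{*}$ reachable from a stable $C$ is reachable from $C$, so the fixed balanced color partition is retained) and $\mathcal{K}$ is strongly connected, every configuration of $\mathcal{K}$ is stable. Hence any $C\in\mathcal{K}$ is stable and, by (a), all $C'$ with $C \xrightarrow{*} C'$ lie in $\mathcal{K}$ and so have all agents in distinct states, which is exactly Definition~\ref{def:strcon}; thus $C$ is strongly-stable and $E$ includes it, proving the lemma.

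I expect the main obstacle to be precisely the mismatch between Lemma~\ref{lem:P-2blue}, which only asserts that \emph{one} execution eventually reaches distinct states, and the strong-stability requirement that \emph{all} reachable configurations have distinct states. The terminal-SCC device is what closes this gap: inside a terminal SCC, plain reachability $\xrightarrow{*}$ and ``recurs in some weakly-fair execution'' coincide, so the per-execution conclusion of Lemma~\ref{lem:P-2blue} can be promoted to a statement quantified over all reachable configurations. The remaining care is purely in verifying that the interleaved execution is genuinely weakly-fair while remaining confined to $\mathcal{K}$.
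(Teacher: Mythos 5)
Your proof is correct, but it takes a genuinely different route from the paper's. The paper argues by contradiction and by iterating executions: assuming no execution contains a strongly-stable configuration, it starts from a weakly-fair execution $E$, uses Lemma~\ref{lem:P-2blue} to find a point $C_t$ after which states are distinct, then (by the assumption) redirects the execution to a reachable collision configuration $C_u$, re-closes to a weakly-fair execution, and repeats; the limit is a weakly-fair execution in which two agents share a state infinitely often, contradicting Lemma~\ref{lem:P-2blue}. Your argument instead is direct and structural: you pass to a terminal strongly connected component $\mathcal{K}$ of the finite reachability graph, build one weakly-fair execution that visits every configuration of $\mathcal{K}$ infinitely often, and then invoke Lemma~\ref{lem:P-2blue} once to conclude that \emph{all} of $\mathcal{K}$ has pairwise-distinct states, so any stable configuration of $\mathcal{K}$ (which exists in your execution, since stability propagates along $\xrightarrow{*}$) is strongly-stable because its reachable set is exactly $\mathcal{K}$. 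What your approach buys is a cleaner handling of the quantifier gap: the paper's iterated construction produces a limit execution whose weak fairness requires some unstated care (one must choose each $C_{t'}$ late enough that all pairs interact between consecutive collision points), whereas your single covering execution is manifestly weakly fair by construction, and closedness of the terminal SCC turns the per-execution conclusion of Lemma~\ref{lem:P-2blue} into the universally quantified condition of Definition~\ref{def:strcon} without any diagonalization. What the paper's approach buys is economy of machinery: it never introduces the configuration graph or SCC decomposition, working purely with executions, which keeps it stylistically uniform with the other impossibility arguments in the paper. Both proofs rest entirely on Lemma~\ref{lem:P-2blue} and the finiteness of the configuration space.
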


\begin{proof}$(Sketch)$
For contradiction, we assume that such execution does not exist.
First, consider a weakly-fair execution $E$ of $Alg$. By Lemma \ref{lem:P-2blue}, after some configuration $C_t$ in $E$, all agents have different states. From the assumption, $C_t$ is not strongly-stable. That is, there exists a configuration $C_u$ reachable from $C_t$ such that two agents have the same state.
Hence, we can construct another weakly-fair execution $E'$ of $Alg$ such that $E'$ is similar to $E$ until $C_t$ and $C_u$ occurs after that. 
By Lemma \ref{lem:P-2blue}, after some configuration $C_{t'}$ in $E'$, all agents have different states. Observe that $C_{t'}$ occurs after $C_u$.
From the assumption, there exists a configuration $C_{u'}$ reachable from $C_{t'}$ such that two agents have the same state.
Hence, similarly to $E'$, we can construct another weakly-fair execution $E''$ of $Alg$ such that $E''$ is similar to $E'$ until $C_{t'}$ and $C_{u'}$ occurs after that. 
By repeating this construction, we can construct a weakly-fair execution such that two agents have the same state infinitely often.
From Lemma \ref{lem:P-2blue}, this is a contradiction.
\end{proof}

\subsection{Impossibility of Asymmetric Protocols}
Here we show the impossibility of asymmetric protocols with $P-1$ states.

\begin{theorem}
In the model with an initialized BS, there is no asymmetric protocol that solves the uniform bipartition problem with $P-1$ states from arbitrary initial states under weak fairness, if $P$ is an even integer.
\label{the:P-1}
\end{theorem}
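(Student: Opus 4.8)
The plan is to argue by contradiction: suppose an asymmetric protocol $Alg$ uses only $P-1$ states and solves the uniform bipartition problem with an initialized BS from arbitrary initial states under weak fairness, with $P$ even. The whole difficulty is that $Alg$ must work simultaneously for $P-2$ agents and for $P$ agents, and I want to turn the rigidity forced by the $P-2$ case (Lemmas \ref{lem:P-2blue} and \ref{lem:strcon}) into a schedule that defeats the $P$-agent case.

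First I would invoke Lemma \ref{lem:strcon} to obtain an execution with $P-2$ agents that reaches a strongly-stable configuration $C$. By Definition \ref{def:strcon} together with Lemma \ref{lem:P-2blue}, all $P-2$ agents have pairwise distinct states at $C$ and in every configuration reachable from $C$; since $|Q_p|=P-1$, this means exactly one state, call it $s^*$, is unused at $C$. Because $P$ is even, $(P-2)/2$ is an integer and stability forces the balanced split: exactly $(P-2)/2$ red and $(P-2)/2$ blue core agents, a ratio that, by condition 2 of stability, is frozen along every continuation that uses only these $P-2$ agents.

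Next I would build a $P$-agent instance by adding two agents $x$ and $y$ initialized to the same colour — concretely to $s^*$ (or to a duplicate of some core state of a fixed colour) so that $x$ and $y$ carry one common colour $c$. Letting the $P-2$ core agents first run to $C$ while $x$ and $y$ are merely delayed — permissible, since weak fairness demands only eventual, not immediate, interaction — yields a configuration $D$ in which the core is balanced and $x,y$ add two agents of colour $c$. Hence $D$ has a colour surplus of exactly two; since $P$ is even the balanced target for $P$ agents is $P/2 : P/2$, so a surplus of two strictly exceeds the permitted discrepancy of one and can be removed only if the protocol recolours some agent.

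The crux, and what I expect to be the main obstacle, is to extend $D$ to a weakly-fair $P$-agent execution in which this surplus is never removed, contradicting solvability. The intended lever is strong stability: as long as $x$ and $y$ do not disturb the core, the core evolves exactly as an isolated $P-2$ system and, by Lemma \ref{lem:strcon}, is locked into all-distinct, balanced configurations, so it can never supply the recolouring needed to absorb the surplus — the correction would have to come from $x$ and $y$ themselves. I would then adapt the indistinguishability argument in the proof of Lemma \ref{lem:P-2blue} to show that the $P-2$ core agents cannot detect $x$ and $y$ and hence keep reproducing a continuation of the $C$-execution, while $x$ and $y$ can be scheduled to retain colour $c$ even though weak fairness forces them to meet every core agent infinitely often. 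The delicate point is precisely this scheduling: I must route each unavoidable extra–core interaction so that it neither lets the core escape the strongly-stable regime nor lets $x,y$ lose colour $c$, using the spare state $s^*$ and the asymmetric-transition budget as the available degrees of freedom. If that can be maintained, then from every configuration of the constructed execution an imbalanced configuration is still reachable, so no configuration is stable, contradicting that $Alg$ solves the problem for $P$ agents. Parity ($P$ even, so the surplus is exactly two) and asymmetry are both essential here, which is consistent with the symmetric case requiring one additional state and being treated separately.
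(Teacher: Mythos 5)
Your high-level plan is the same as the paper's: freeze a $(P-2)$-agent core in strongly-stable configurations via Lemmas \ref{lem:P-2blue} and \ref{lem:strcon}, append two agents of one common colour, and then build a weakly-fair $P$-agent execution in which the core behaves as if isolated, so the colour surplus of two persists forever. However, the step you yourself flag as ``the delicate point'' and leave conditional (``If that can be maintained\ldots'') is not a routine adaptation of Lemma \ref{lem:P-2blue}'s indistinguishability argument --- it is the actual content of the theorem, and the paper needs substantial new machinery to carry it out: Lemma \ref{lem:P-2blue2} (at strongly-stable configurations, a transition cannot change a blue agent's state except by swapping two blue states), Lemma \ref{lem:Q^+} (there is a nonempty $Q^*\subseteq Q_{blue}$ closed under homonym transitions such that $|Q^*|+1$ agents holding states in $Q^*$ can steer any one of them to any state of $Q^*$), the doubled-interaction device of Definition \ref{def:E} to neutralize swaps, and a round-robin substitution argument to restore weak fairness. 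Note also that Lemma \ref{lem:P-2blue}'s own proof hinges on a \emph{duplicated} state appearing infinitely often, which lets extra agents be swapped in for the duplicated pair; after strongly-stable convergence all core states are distinct, so that trick has no direct analogue here --- this is exactly why the pool $Q^*$ must be constructed.

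Your primary concrete choice is also flawed. Under the paper's normalization $|Q_{blue}|<|Q_{red}|$, one gets $|Q_{blue}|=P/2-1$ and $|Q_{red}|=P/2$, so the single state unused at a strongly-stable configuration is a \emph{red} state, and --- more importantly --- transitions involving a state that never occurs after $C_t$ are completely unconstrained by the $(P-2)$-agent analysis. A protocol could legitimately react to an interaction involving that spare state by recolouring an agent (indeed by doing anything at all), so initializing $x,y$ to the unused state yields no contradiction; the spare state is a liability, not ``an available degree of freedom.'' Your parenthetical fallback (duplicate an occupied core state of a fixed colour) points in the right direction, but then you must control the homonym transitions $(p,p)\rightarrow(p',q')$, which are likewise never exercised in the core execution and so could turn your two agents red. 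Ruling this out is the role of the set $Q_{nbc}$ and Lemma \ref{lem:Qnbc} in the paper, whose proof is itself a nontrivial counting argument exploiting that $P$ agents force homonyms among the $P/2-1$ blue states. Without this machinery, the proposal cannot exclude that the interactions forced by weak fairness repair the imbalance, so the proof as written has a genuine gap.
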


To prove the theorem by contradiction, we assume that such protocol $Alg_{asym}$ exists.
Let $Q_p=\{s_1$, $s_2$, $\ldots$, $s_{P-1}\}$ be a state set of agents other than the BS.
Let $Q_{blue}=\{ s \in Q_p \mid f(s)=blue\}$ be a set of blue states and $Q_{red}=\{ s \in Q_p \mid f(s)=red\}$ be a set of red states.
Without loss of generality, we assume that $|Q_{blue}| < |Q_{red}|$ holds.
Recall that Lemmas \ref{lem:P-2blue} and \ref{lem:strcon} can be applied to both symmetric and asymmetric algorithms.
Hence, the properties of the lemmas hold even in $Alg_{asym}$.
In this proof, based on the properties, we construct an execution of $P$ agents such that the BS does not recognize the difference from the execution of $P-2$ agents.
We show contradiction by proving that this execution does not achieve uniform bipartition.

By Lemma \ref{lem:P-2blue}, clearly $Alg_{asym}$ requires $P/2-1$ $blue$ states and $P/2 - 1$ $red$ states.
Consequently, we have the following two corollaries.
\begin{corollary}
\label{cor:P-2-2}
$|Q_{blue}|= P/2-1$ and $|Q_{red}| = P/2$ hold.
\end{corollary}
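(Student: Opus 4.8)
The plan is to derive Corollary~\ref{cor:P-2-2} directly from the structural facts already established, combining the counting constraint forced by Lemma~\ref{lem:P-2blue} with the total count $|Q_p| = P-1$ and the working assumption $|Q_{blue}| < |Q_{red}|$. First I would observe that, by Lemma~\ref{lem:P-2blue}, in any weakly-fair execution with $P-2$ agents the system reaches a stable configuration in which all $P-2$ agents occupy distinct states. Because the configuration is stable and solves the uniform bipartition, the $P-2$ agents split exactly evenly into $red$ and $blue$: since $P$ is even, $P-2$ is even, so there are precisely $(P-2)/2 = P/2-1$ $red$ agents and $P/2-1$ $blue$ agents.

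The key step is to translate this statement about agents into a statement about the \emph{state sets}. Since all $P-2$ agents hold distinct states in the stable configuration, the $P/2-1$ $blue$ agents occupy $P/2-1$ distinct $blue$ states, which forces $|Q_{blue}| \ge P/2-1$; symmetrically $|Q_{red}| \ge P/2-1$. Now I would invoke the cardinality bound $|Q_{blue}| + |Q_{red}| = |Q_p| = P-1$. Subtracting the two lower bounds from this total leaves exactly one ``extra'' state to distribute. Combined with the standing assumption $|Q_{blue}| < |Q_{red}|$ (which is made without loss of generality just before the corollary), the extra state cannot go to $Q_{blue}$, so it must go to $Q_{red}$. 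This pins the sizes down to $|Q_{blue}| = P/2-1$ and $|Q_{red}| = P/2$, which is exactly the claim.

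The main obstacle, such as it is, lies not in the arithmetic but in justifying the lower bound $|Q_{blue}| \ge P/2-1$ rigorously: one must be careful that Lemma~\ref{lem:P-2blue} guarantees distinctness of \emph{agent} states in the eventual configuration, and then argue that this many distinct $blue$-colored states must actually exist in $Q_{blue}$. Since the coloring $f$ is fixed and an agent is $blue$ precisely when its state lies in $Q_{blue}$, the $P/2-1$ distinct states held by the $blue$ agents are genuinely distinct elements of $Q_{blue}$, so the bound is immediate. After that, the corollary follows purely by counting, and I would keep the argument to a few lines rather than belaboring the case analysis on where the single surplus state lands.
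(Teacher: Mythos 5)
Your proof is correct and follows essentially the same route as the paper: the paper's one-line justification ("By Lemma \ref{lem:P-2blue}, clearly $Alg_{asym}$ requires $P/2-1$ $blue$ states and $P/2-1$ $red$ states") is exactly your lower-bound step, after which the counting against $|Q_p|=P-1$ together with the standing assumption $|Q_{blue}|<|Q_{red}|$ pins down the two cardinalities. You have merely spelled out the details the paper leaves implicit, with no divergence in the argument.
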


\begin{corollary}
\label{cor:Sblue}
For any weakly-fair execution of $Alg_{asym}$ with $P-2$ agents and an initialized BS, any strongly-stable configuration includes all states in $Q_{blue}$.
\end{corollary}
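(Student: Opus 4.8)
The plan is to prove this by a direct counting argument that combines the balancedness forced by stability with the distinctness forced by strong-stability. First I would fix an arbitrary strongly-stable configuration $C$ arising in some weakly-fair execution of $Alg_{asym}$ with $P-2$ agents; such a configuration exists by Lemma \ref{lem:strcon}. Since $C$ is stable, it solves the uniform bipartition, so there is a balanced partition $\{G_1, G_2\}$ of the $P-2$ agents. Because $P$ is even, $P-2$ is even as well, and the constraint $\left||G_1|-|G_2|\right| \leq 1$ together with $|G_1|+|G_2|=P-2$ forces $|G_1|=|G_2|=P/2-1$. As every agent in $G_2$ is $blue$ at $C$, there are exactly $P/2-1$ $blue$ agents at $C$.

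Next I would invoke Definition \ref{def:strcon} with $C'=C$, using $C\xrightarrow{*}C$: since $C$ is strongly-stable, all agents have different states at $C$. In particular, the $P/2-1$ $blue$ agents have pairwise distinct states, and each such state lies in $Q_{blue}$ by definition of a $blue$ agent. Hence the set of states actually held by $blue$ agents at $C$ is a subset of $Q_{blue}$ of size exactly $P/2-1$.

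To conclude, I would apply Corollary \ref{cor:P-2-2}, which gives $|Q_{blue}|=P/2-1$. A subset of $Q_{blue}$ of cardinality $P/2-1$ must equal $Q_{blue}$ itself, so every state in $Q_{blue}$ is held by some $blue$ agent at $C$. This is precisely the assertion that $C$ includes all states in $Q_{blue}$, and since $C$ was an arbitrary strongly-stable configuration, the corollary follows.

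The argument is essentially a pigeonhole count, so I do not expect a serious obstacle; the only point requiring care is the bookkeeping that pins the number of $blue$ agents to be \emph{exactly} $P/2-1$ rather than merely at most $P/2-1$. This is where the hypothesis that $P$ is even enters, guaranteeing that the balanced partition of the $P-2$ agents splits evenly, and where Corollary \ref{cor:P-2-2} is needed to fix $|Q_{blue}|$ exactly, so that ``$P/2-1$ distinct states drawn from $Q_{blue}$'' upgrades to ``all of $Q_{blue}$.''
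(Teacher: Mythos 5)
Your proof is correct and matches the paper's own (implicit) justification: the paper derives this corollary by exactly the same counting argument, combining the balanced bipartition of the $P-2$ agents, the distinctness of states forced by strong stability, and $|Q_{blue}|=P/2-1$ from Corollary \ref{cor:P-2-2}. Your write-up simply makes explicit the pigeonhole step that the paper leaves to the reader.
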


To prove the main theorem, we focus on the following weakly-fair execution of $Alg_{asym}$ with $P-2$ agents.
\begin{definition}
\label{def:E}
Consider a population $A=\{a_0, a_1, \ldots, a_{P-2}\}$ of $P-2$ agents and an initialized BS, where $a_0$ is the BS.
We define $E_\alpha = C_0, C_1, C_2, \ldots$ as a weakly-fair execution of $Alg_{asym}$ for population $A$ that satisfies the following conditions.
\begin{itemize}
\item $E_\alpha$ includes a strongly-stable configuration $C_t$, and, 
\item for any $u \ge 0$, agents that interact at $C_{t+2u} \rightarrow C_{t+2u+1}$ also interact at $C_{t+2u+1} \rightarrow C_{t+2(u+1)}$.
\end{itemize}
\end{definition}

Note that, in $E_\alpha$, the system reaches a strongly-stable configuration $C_t$ (this is possible from Lemma \ref{lem:strcon}), and after $C_t$ agents always repeat the same interaction twice.
\begin{definition}
\label{def:Qt}
We define $Q_t$ as a set of states that appear after $C_t$ in $E_\alpha$.
\end{definition}

Note that, since $C_t$ is strongly-stable, $Q_t$ includes at least $P-2$ states. This implies that $Q_t$ includes all states in $Q_p$ or does not include one state in $Q_p$. From Corollary \ref{cor:Sblue}, $Q_{blue} \subset Q_t$ holds.

The following lemmas give key properties of $Alg_{asym}$ to prove Theorem \ref{the:P-1}.
We will present proofs of these lemmas later.

\begin{lemma}
\label{lem:P-2blue2}
For any distinct states $p$ and $q$ ($p\neq q$) such that $p\in Q_{blue}$ and $q \in Q_{t}$ hold, transition rule $(p,q) \rightarrow (p',q')$ satisfies the following conditions.

\begin{itemize}
\item If $q\in Q_{red}$ or $q\in Q_b$ (i.e., $q$ is a state of the BS) holds, $p' = p$ holds.
\item If $q\in Q_{blue}$ holds, either $(p', q')=(p, q)$ or $(p', q')=(q, p)$ holds.
\end{itemize}
\end{lemma}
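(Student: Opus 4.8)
The plan is to argue by contradiction, exploiting the strong stability of $C_t$ together with the "double interaction" structure built into $E_\alpha$. The key intuition is that $C_t$ is strongly-stable, so \emph{every} configuration reachable from $C_t$ must have all $P-2$ agents in distinct states; any transition that created a duplicate state would violate Definition \ref{def:strcon}. Since $Q_{blue} \subset Q_t$ (by Corollary \ref{cor:Sblue}) and $|Q_{blue}| = P/2-1$ (by Corollary \ref{cor:P-2-2}), every blue state is actually realized by some agent in the strongly-stable regime, and the states in $Q_t$ are exactly those that can coexist after $C_t$. So the strategy is: take a configuration $C$ reachable from $C_t$ in which some agent holds $p \in Q_{blue}$ and another agent (or the BS) holds $q \in Q_t$, apply the interaction $(p,q)$, and show that any outcome other than the two claimed ones forces a duplicate state to appear in a configuration still reachable from $C_t$, contradicting strong stability.

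First I would handle the case $q \in Q_{red}$ or $q \in Q_b$. Here I want to show $p'=p$, i.e. the blue agent keeps its state. Suppose instead $p' \neq p$. Since all of $Q_{blue}$ must appear in any strongly-stable configuration (Corollary \ref{cor:Sblue}) but $|Q_{blue}| = P/2-1$ leaves no spare blue states, the state $p$ must be re-supplied by some other agent, and more sharply, I would use a counting argument on the number of blue agents. In the strongly-stable regime there are exactly $P/2-1$ blue agents, each in a distinct blue state, occupying all of $Q_{blue}$. If the transition $(p,q)\to(p',q')$ with $q\in Q_{red}\cup Q_b$ sends the blue agent to a state $p'$, then either $p'$ is red (changing the red/blue balance, which cannot be corrected since the partition is fixed forever after a stable configuration by condition 2 of stability) or $p'$ is a blue state already occupied by another agent (creating a duplicate), or $p'=p$. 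The first two possibilities contradict stability or strong stability respectively, leaving $p'=p$. The main technical point is to verify that such an interaction is genuinely available in a configuration reachable from $C_t$: I would construct it using the double-interaction freedom of $E_\alpha$ and Lemma \ref{lem:strcon}, placing an agent with state $p\in Q_{blue}$ next to one with $q$.

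Next I would treat $q \in Q_{blue}$ with $p \neq q$, where both agents are blue and I must show the pair of states is merely preserved or swapped. The balance argument already forbids either agent from turning red (that would unbalance the colors permanently). So $p',q' \in Q_{blue}$. Since all of $Q_{blue}$ is occupied in the strongly-stable regime and states must stay distinct in every reachable configuration, the multiset $\{p',q'\}$ must equal $\{p,q\}$ — any other pair of blue states would duplicate a state held by a third agent, or coincide ($p'=q'$), both violating strong stability. Hence $(p',q')$ is either $(p,q)$ or $(q,p)$. The hard part across both cases will be making the reachability of the required interacting configuration fully rigorous: I must ensure that a configuration reachable from $C_t$ actually exhibits a blue agent in state $p$ adjacent to an agent in state $q$, so that applying the transition yields a configuration still reachable from $C_t$ to which strong stability applies. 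I expect this setup step — rather than the counting itself — to be the crux, and I would lean on Lemma \ref{lem:strcon} and the construction of $E_\alpha$ in Definition \ref{def:E} to produce it.
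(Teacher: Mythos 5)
Your proposal is correct and follows essentially the same route as the paper: apply the transition $(p,q)\rightarrow(p',q')$ inside a strongly-stable configuration and derive a contradiction from a duplicated state, using Corollary \ref{cor:Sblue} (all blue states occupied) together with the color preservation forced by condition 2 of stability. The one difference is that the setup step you flag as the crux is actually immediate in the paper: since $q \in Q_t$, state $q$ appears in some configuration $C_v$ ($v \ge t$) of $E_\alpha$, and Corollary \ref{cor:Sblue} guarantees that $p$ also appears in that same strongly-stable configuration $C_v$, so the interacting pair exists without any appeal to the double-interaction structure of $E_\alpha$ or to Lemma \ref{lem:strcon}.
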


\begin{lemma}
\label{lem:Q^+}
There is a non-empty state set $Q^* \subseteq Q_{blue}$ that satisfies the following conditions.

\begin{itemize}
\item For any state $p \in Q^*$, transition rule $(p,p) \rightarrow (p',q')$ satisfies $p'\in Q^*$ and $q'\in Q^*$.
\item Assume that, in a configuration $C$, there exists a subset of agents $A^*$ such that all agents in $A^*$ have states in $Q^*$ and $|A^*|=|Q^*|+1$ holds. In this case, for any agent $a_r \in A^*$ and any state $q \in Q^*$, there exists an execution segment such that 1) the execution segment starts from $C$, 2) $a_r$ has state $q$ at the last configuration, 3) only agents in $A^*$ join interactions, and 4) all agents in $A^*$ have states in $Q^*$ at the last configuration.
\end{itemize}
\end{lemma}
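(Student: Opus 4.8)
The plan is to build $Q^*$ as a minimal ``colour trap'' for the blue agents and to read off both bullets from it. The starting point is a counting observation. Since $P$ is the known upper bound, $Alg_{asym}$ must also solve the problem for $P$ agents, so some weakly-fair execution on $P$ agents reaches a stable configuration $D$ with exactly $P/2$ blue agents; by Corollary~\ref{cor:P-2-2} there are only $|Q_{blue}|=P/2-1$ blue states, so in $D$ the blue agents are forced to collide. Moreover, by Lemma~\ref{lem:P-2blue2}, when a blue agent meets a red agent or the BS its state is unchanged, and when two blue agents in distinct states meet they can only keep or swap states; hence the multiset of blue states, and in particular the colours, can be changed only by a self-transition $(p,p)\to(p',q')$. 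Because $D$ is stable its blue agents remain blue in every reachable configuration, so the sub-population of blue agents of $D$ evolves under blue--blue interactions as a \emph{closed} system that never escapes to red.

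First I would use this closed system to produce the candidate set. Following a weakly-fair continuation of $D$ and looking at the blue-agent configurations visited infinitely often, I would let $Q^*$ be (a minimal, under inclusion, refinement of) the set of blue states occupied recurrently. The crucial structural feature is the ``$+1$'': there are $P/2$ blue agents but at most $P/2-1$ occupied states, so by pigeonhole some state is \emph{always} doubly occupied and a self-transition is always enabled among the blue agents. This is the engine that drives both conclusions, and it is exactly what the hypothesis $|A^*|=|Q^*|+1$ of the second bullet reproduces.

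The second bullet is where the work lies, and I expect it to be the main obstacle. Fix a configuration $C$ with a set $A^*$ of $|Q^*|+1$ agents all in states of $Q^*$. Pigeonhole again guarantees a doubly occupied state, so a self-transition is always available inside $A^*$; combined with the swap moves of Lemma~\ref{lem:P-2blue2}, this lets the multiset of $A^*$ evolve while remaining inside $Q^*$ (the agents outside $A^*$ are simply kept idle, which is why restricting to $A^*$-interactions is legitimate). I would prove reachability by minimality: the set of blue states that the agents of $A^*$ can reach along such $A^*$-only segments is itself stable under the dynamics, so by minimality of $Q^*$ it must equal $Q^*$; hence any target state $q\in Q^*$ appears in some reachable multiset. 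The designated agent $a_r$ is then carried to $q$ by a sequence of swap transitions permuting the agent identities over the occupied states, using the doubly occupied state as the single spare slot needed to realise the permutation. Finally I would deduce the first bullet \emph{from} this mixing: since every $p\in Q^*$ can be made doubly occupied, its self-transition $(p,p)\to(p',q')$ is actually exercised in a colour-preserving (stable) context, forcing $p',q'$ to be blue and to lie in $Q^*$.

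The delicate points I would treat most carefully are (i) upgrading ``fireable closure'' to the genuine closure required by the first bullet — this is precisely where the surplus agent is needed, so that every candidate state can be made doubly occupied and its self-transition genuinely tested — and (ii) the routing of $a_r$ to an arbitrary $q\in Q^*$ by swaps alone, a $15$-puzzle-style connectivity argument that is possible exactly because of the one spare agent. Both rest on the same ``one surplus agent'' phenomenon, so I would organise the whole proof around it and keep the closure of $Q^*$ as the invariant that prevents any blue agent from leaking to red during the mixing.
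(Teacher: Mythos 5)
Your pigeonhole engine (one surplus blue agent forces a doubled state) is indeed the paper's engine too, but the scaffolding you build around it has a genuine hole, and it sits exactly where you yourself expect trouble: the first bullet. Any \emph{occupancy-based} definition of $Q^*$ (states recurrently held by the blue agents of a stable $P$-agent configuration $D$) can contain ``passenger'' states: states held by exactly one agent in every configuration reachable from $D$. For such a state $p$ the rule $(p,p)\rightarrow(p',q')$ is never fired anywhere in the $D$-reachable world, so stability of $D$ puts no constraint whatsoever on $p',q'$ --- they may well be red --- and your repair (``make every candidate state doubly occupied by mixing'') is impossible for precisely these states. Concretely, if the blue multiset is $\{a,b,b\}$ and the only self-rule ever enabled is $(b,b)\rightarrow(b,b)$, then $a$ is occupied in every reachable configuration, is never doubled, and nothing you have proved controls $(a,a)$. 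This is why the paper does not define $Q^*$ by occupancy at all: it defines the state-level relation $q\rightsquigarrow q'$ (reachability by homonym interactions only), proves that some blue state cannot $\rightsquigarrow$-reach any red state ($Q_{nbc}\neq\emptyset$, Lemma~\ref{lem:Qnbc}, which needs the induction of Lemma~\ref{lem:redtra} to turn the pigeonhole into an actual execution driving a blue agent to red), then picks $p^*$ with $p^*\overset{*}{\rightsquigarrow}p^*$ (a terminal strongly connected component, Lemma~\ref{lem:p*p2}) and sets $Q^*=\{q\mid p^*\rightsquigarrow q\}$. With that definition the first bullet holds by construction, there are no passenger states, and any two states of $Q^*$ are mutually $\rightsquigarrow$-reachable, which is exactly what routing needs.

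Two further breaks. First, circularity: the second bullet quantifies over an \emph{arbitrary} configuration $C$ --- in the proof of Theorem~\ref{the:P-1} it is applied to the $P$-agent population $A'$ in configurations that never stabilize, hence have nothing to do with your $D$ --- so stability of $D$ cannot be the invariant that keeps $A^*$ inside $Q^*$ during mixing; that invariant is the first bullet, which you propose to deduce \emph{from} the mixing. The paper's order is the opposite: closure by construction, then mixing, with termination supplied by the potential $\Phi(C,q)$ (the multiset of homonym-distances $\DtQ{s(a,C)}{q}$), which strictly decreases at every homonym interaction. Second, your routing mechanism does not work: Lemma~\ref{lem:P-2blue2} says each distinct blue pair either keeps or swaps, but the protocol, being deterministic, fixes which --- it may be ``keep'' for every pair --- so 15-puzzle identity permutation by swaps is simply not available, and swaps cannot create an occurrence of $q$ anyway since they preserve the multiset. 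The paper routes $a_r$ along a $\rightsquigarrow$-path instead: the other $|Q^*|$ agents repeatedly synthesize a homonym partner in $a_r$'s current state (Lemma~\ref{lem:Ttra}), and each homonym interaction advances $a_r$ one $\rightsquigarrow$-step toward $q$.
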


Lemma \ref{lem:Q^+} means that, if $|Q^*|+1$ agents have states in $Q^*$, we can make an arbitrary agent with a state in $Q^*$ transit to an arbitrary state in $Q^*$.
Using these lemmas, we show the theorem by constructing a weakly-fair execution of $Alg_{asym}$ with $P$ agents that cannot be distinguished from execution $E_\alpha$.

\subsection*{Proof of Theorem \ref{the:P-1}}

Consider a population $A'=\{a'_0$, $\ldots$, $a'_{P}\}$ of $P$ agents and an initialized BS, where $a'_0$ is the BS.
Let $C'_0$ be an initial configuration such that initial states of $a'_0$, $\ldots$, $a'_{P}$
are $s(a_0, C_0)$, $\ldots$, $s(a_{P-2}, C_0)$, $s^*$, $s^*$, where $s^*$ is a state in $Q^*$.

For $A'$ we construct an execution $E_\beta=C'_0$, $C'_1$, $\ldots$, $C'_{t}$, $\ldots$ using execution $E_\alpha$ as follows.

\begin{itemize}
\item For $0 \le u \le t-1$, when $a_i$ and $a_j$ interact at $C_u \rightarrow C_{u+1}$ in $E_\alpha$, $a'_i$ and $a'_j$ interact at $C'_u\rightarrow C'_{u+1}$ in $E_\beta$.
\end{itemize}
Clearly, $s(a'_i,C'_t)=s(a_i,C_t)$ holds for any $i$ ($0 \le i \le P-2$).
Since $s(a'_{P-1},C'_t) = s(a'_{P},C'_t)=s^*$ holds, the difference in the numbers of $red$ and $blue$ agents remains two and consequently $C'_{t}$ is not a stable configuration.

To construct the remainder of $E_\beta$, first let us consider the characteristics of $C'_t$. Let $A_q \subseteq A$ be a set of agents that have states in $Q^*$ at $C_t$, and let $\bar{A_q}=A-A_q$. Since all agents have different states and all states in $Q_{blue}$ appear in $C_t$ by Corollary \ref{cor:Sblue}, we have $|A_q|=|Q^*|$ from $Q^*\subseteq Q_{blue}$. Let $A'_q \subseteq A'$ be a set of agents that have states in $Q^*$ at $C'_t$, and let $\bar{A'_q}=A'-A'_q$. Note that, for $i\le P-2$, $a_i\in A_q$ holds if and only if $a'_i \in A'_q$ holds. Since $a'_{P-1}$ and $a'_P$ are also in $A'_q$, we have $|A'_q|=|Q^*|+2$. 
In the following, we construct the remainder of execution $E_\beta$ that includes infinitely many configurations similar to $E_\alpha$.
we define similarity of configurations in $E_\beta$ and $E_\alpha$ as follows.

\begin{definition}
\label{def:sim}
We say configuration $C'_u$ ($u\ge t$) in $E_\beta$ is similar to $C_v$ ($v\ge t$) in $E_\alpha$ if the following conditions hold:
\begin{itemize}
\item For any agent $a_i\in A_q$, $s(a_i,C_v) \in Q^*$ holds.
\item For any agent $a'_i\in A'_q$, $s(a'_i,C'_u) \in Q^*$ holds.
\item For any agent $a'_i\in \bar{A'_q}$ (i.e., $a_i \in \bar{A_q}$), $s(a'_i,C'_u)=s(a_i,C_v)$ holds.
\end{itemize}
\end{definition}

Let us focus on an execution segment $e=C_{t+2u},C_{t+2u+1},C_{t+2(u+1)}$ of $E_\alpha$ for any $u\ge 0$, and consider a configuration $C'_x$ of $A'$ such that $C'_x$ is similar to $C_{t+2u}$. From now, we explain the way to construct an execution segment $e'=C'_x,\ldots,C'_y$ of $E_\beta$ that guarantees that $C'_y$ is similar to $C_{t+2(u+1)}$. Since $C'_t$ is similar to $C_t$, we can repeatedly apply this construction and construct an infinite execution $E_\beta$. As a result, for any $u\ge 0$, $E_\beta$ includes a configuration $C'$ that is similar to $C_{t+2u}$. Since $C'$ includes $P-1$ $red$ agents and $P+1$ $blue$ agents, $E_\beta$ does not include a stable configuration. Note that $E_\beta$ is not necessarily weakly-fair, but later we explain the way to construct a weakly-fair execution from $E_\beta$.

Consider configuration $C'_x$ that is similar to $C_{t+2u}$. Assume that, in $E_\alpha$, agents $a_i$ and $a_j$ interact in $C_{t+2u} \rightarrow C_{t+2u+1}$. Recall that $a_i$ and $a_j$ also interact in $C_{t+2u+1} \rightarrow C_{t+2(u+1)}$. We construct execution segment $e'$ as follows:

\begin{itemize}
\item Case that $a_i \in A_q \wedge a_j \in A_q$ holds. Since $s(a_i,C_{t+2u}) \in Q^* \subseteq Q_{blue}$ and $s(a_j,C_{t+2u}) \in Q^* \subseteq Q_{blue}$ hold, $s(a_i,C_{t+2(u+1)}) \in Q^*$ and $s(a_j,C_{t+2(u+1)}) \in Q^*$ also hold from Lemma \ref{lem:Q^+} (the first condition) and Lemma \ref{lem:P-2blue2}. Since other agents do not change their states, $C'_x$ is similar to $C_{t+2(u+1)}$. Hence, in this case, we consider that the constructed execution segment $e'$ is empty. 

\item Case that either $a_i \in A_q \wedge a_j \in \bar{A_q}$ or $a_i \in \bar{A_q} \wedge a_j \in A_q$ holds. Without loss of generality, we assume that $a_i \in A_q \wedge a_j \in \bar{A_q}$ holds. In this case, $s(a'_i,C'_x) \in Q^*$ is not necessarily equal to $\alpha=s(a_i,C_{t+2u}) \in Q^*$. Hence, in the execution segment $e'$, we first make some agent $a'_r \in A'_q$ enter state $\alpha$ by interactions among agents in $A'_q$. By Lemma \ref{lem:Q^+} (the second condition) and $|A'_q|=|Q^*|+2$, such interactions exist and all agents in $A'_q$ have states in $Q^*$ after the interactions. Let $C'_z$ be the resultant configuration. Clearly $C'_z$ is similar to $C_{t+2u}$ and $s(a'_r,C'_z)=s(a_i,C_{t+2u}) \wedge s(a'_j,C'_z)=s(a_j,C_{t+2u})$ holds. After that, $a'_r$ and $a'_j$ interact twice. We regard the resultant configuration as $C'_y$ (i.e., the last configuration of the constructed execution segment $e'$). Clearly both $s(a'_r,C'_y)=s(a_i,C_{t+2(u+1)})$ and $s(a'_j,C'_y)=s(a_j,C_{t+2(u+1)})$ hold. Since $C'_z$ is similar to $C_{t+2u}$ and $s(a'_j,C'_y)=s(a_j,C_{t+2(u+1)})$, it is sufficient to prove $s(a'_r,C'_y) \in Q^*$ to guarantee that $C'_y$ is similar to $C_{t+2(u+1)}$. Observe that $s(a_j,C_{t+2u}) \notin Q^*$. This is because, since $C_{t+2u}$ is strongly-stable, all agents have different states and agents in $A_q$ occupy all states in $Q^*$ (the first condition of Definition \ref{def:sim}). Hence, $s(a'_r,C'_z)=s(a_i,C_{t+2u})\in Q^*$ is not equal to $s(a'_j,C'_z)=s(a_j,C_{t+2u})\notin Q^*$. Consequently, from $s(a'_r,C'_z)\in Q^*\subseteq Q_{blue}$, $s(a'_r,C'_y)=s(a'_r,C'_z)\in Q^*$ holds by Lemma \ref{lem:P-2blue2}. Therefore, $C'_y$ is similar to $C_{t+2(u+1)}$.

\item Case that $a_i \in \bar{A_q} \wedge a_j \in \bar{A_q}$ holds. In this case, since $s(a'_i,C'_x)=s(a_i,C_{t+2u})$ and $s(a'_j,C'_x)=s(a_j,C_{t+2u})$ hold, $a'_i$ and $a'_j$ simply interact twice consecutively. We regard the resultant configuration as $C'_y$ (i.e., the last configuration of the constructed execution segment $e'$). Clearly, since $a'_i$ and $a'_j$ change their states similarly to $a_i$ and $a_j$, $C'_y$ is similar to $C_{t+2(u+1)}$.
\end{itemize}

Now we have constructed infinite execution $E_\beta$, but $E_\beta$ is not necessarily weakly-fair. In the following, we construct a weakly-fair execution $E_\gamma$ of population $A'$ by slightly modifying $E_\beta$. To guarantee that $E_\gamma$ is weakly-fair, for any pair of agents $(a'_i,a'_j)$, $a'_i$ and $a'_j$ should interact infinite number of times in $E_\gamma$. For pair of agents $(a'_i,a'_j)$ with $a'_i \in \bar{A'_q}$ and $a'_j \in \bar{A'_q}$, $a'_i$ and $a'_j$ interact infinite number of times in $E_\beta$ because $E_\alpha$ is weakly-fair and $a'_i$ interacts with $a'_j$ in $E_\beta$ when $a_i$ interacts with $a_j$ in $E_\alpha$. For pair of agents $(a'_i,a'_j)$ with $a'_i \in A'_q$ and $a'_j \in A'_q$, we can arbitrarily add interactions of them because, by Lemma \ref{lem:Q^+} (the first condition) and Lemma \ref{lem:P-2blue2} (the second condition), they keep their states in $Q^*$ and consequently do not influence similarity of configurations.

Hence, we consider the remaining pair $(a'_i,a'_j)$, that is, either $a'_i \in A'_q \wedge a'_j \in \bar{A'_q}$ or $a'_i \in \bar{A'_q} \wedge a'_j \in A'_q$ holds. Without loss of generality, we assume that $a'_i$ is in $A'_q$ and $a'_j$ is in $\bar{A'_q}$. Since $E_\alpha$ is weakly-fair, $a_j$ interacts with an agent in $A_q$ infinite number of times in $E_\alpha$. Recall that these interactions correspond to interactions of $a'_j$ and $a'_r$ in $E_\beta$, and $a'_r$ can be arbitrarily selected from $A'_q$. For this reason, we can choose $a'_r$ in a round-robin manner so that $a'_j$ interacts with any agent in $A'_q$ infinite number of times. For example, when $a_j$ and an agent in $A_q$ first interact (after $C_t$), we choose an agent in $A'_q$ as $a'_r$, and then in the next interaction of $a_j$ and an agent in $A_q$ we can choose another agent in $A'_q$ as $a'_r$. By this construction, $a'_j$ can interact with any agent $a'_i$ in $A'_q$ infinite number of times.

From this way, we can construct a weakly-fair execution $E_\gamma$ similarly to $E_\beta$. However, for any $u\ge 0$, $E_\gamma$ includes a configuration $C''$ that is similar to $C_{t+2u}$. Since $C''$ includes $P-1$ $red$ agents and $P+1$ $blue$ agents, $E_\gamma$ does not include a stable configuration. This is a contradiction.

\subsection*{The Proof Sketch of Lemma \ref{lem:P-2blue2}}
Consider the case that transition $(p,q) \rightarrow (p',q')$ occurs at a strongly-stable configuration with $P-2$ agents.
By Corollary \ref{cor:Sblue}, since any strongly-stable configuration includes all states in $Q_{blue}$, $(p,q) \rightarrow (p',q')$ can occur at the configuration.

First, consider the case that $q\in Q_{red}$ or $q\in Q_b$ holds.
For contradiction, assume that $p' \neq p$ holds.
By Corollary \ref{cor:Sblue}, since an agent with $p'$ exists in the strongly-stable configuration, two agents with $p'$ exist after transition $(p,q) \rightarrow (p',q')$.
By the definition of strongly-stable configuration, this is a contradiction.

Next, consider the case that $q\in Q_{blue}$ holds.
For contradiction, assume that $(p', q') \neq (p, q)$ and $(p', q') \neq (q, p)$ hold.
By the definition of stable configuration, $p'$ and $q'$ are $blue$.
Hence, by Corollary \ref{cor:Sblue}, since an agent with any state in $Q_{blue}$ exists in the strongly-stable configuration, two agents with the same state in $Q_{blue}$ exist after transition $(p,q) \rightarrow (p',q')$.
By the definition of strongly-stable configuration, this is a contradiction.

\subsection*{The Proof Sketch of Lemma \ref{lem:Q^+}}
First, to show the proof sketch, we give some definitions.

\begin{definition}
\label{def:rea}
For states $q$ and $q'$, we say $q \rightsquigarrow q'$ if there exists a sequence of states $q = q_0, q_1, \cdots , q_k = q'$ such that, for any $i(0 \le i < k)$, 
transition rule $(q_i,q_i) \rightarrow (q_{i+1},x_i)$ or $(q_i,q_i) \rightarrow (x_i, q_{i+1})$ exists for some $x_i$.
\end{definition}

\begin{definition}
\label{def:rea2}
For states $q$ and $q'$, we say $q \overset{*}{\rightsquigarrow} q'$ if $x \rightsquigarrow q'$ holds for any $x$ such that $q \rightsquigarrow x$ holds.
\end{definition}
Note that, in these definitions, we consider only interactions of agents with the same state.
We say two agents are homonyms if they have the same state.
Intuitively, $q \rightsquigarrow q'$ means  that an agent with state $q$ can transit to $q'$ by only interactions with homonyms.
Also, $q \overset{*}{\rightsquigarrow} q'$ means that, even if an agent with state $q$ transits to any state $x$ by interactions with homonyms, it can still transit from $x$ to $q'$ by interactions with homonyms.

In Appendix \ref{app:lem34}, we show that there exists $p^*$ such that $p^* \overset{*}{\rightsquigarrow} p^*$ holds.
Let $Q_{p^*}=\{ q \mid p^* \overset{*}{\rightsquigarrow} q \}$.
In this proof, we show that $Q_{p^*}$ satisfies the conditions of $Q^*$ of Lemma \ref{lem:Q^+}.
Clearly, if homonyms with states in $Q_{p*}$ interact, they transit to states in $Q_{p*}$. 
This implies that $Q_{p^*}$ satisfies the first condition of $Q^*$ of the lemma.
To prove the second condition, we first show that, when $|Q_{p*}|$ agents have states in $Q_{p*}$ initially, for any $s \in Q_{p*}$, there exists an execution such that only homonyms in the $|Q_{p*}|$ agents interact and eventually some agent transits to state $s$.
To show this, we define a potential function $\Phi(C, s)$ for configuration $C$ and state $s \in Q_{p*}$.
Intuitively, $\Phi(C, s)$ represents how far configuration $C$ is from a configuration that includes an agent with state $s$.
To define $\Phi(C, s)$, we define $\DtQ{s_i}{s}$ as follows.

\begin{definition}
\label{def:tradisq}
$\DtQ{s_i}{s}$ is a function that satisfies the following property.
\begin{itemize}
\item If $s_{i} = s $ holds, $\DtQ{s_i}{s}=0$ holds.
\item If $s_i \neq s$ and $s_i \in Q_{p*}$ holds, $\DtQ{s_i}{s}= \min\{\DtQ{s^1_j}{s}, \DtQ{s^2_j}{s}\}+1$ holds when transition rule $(s_i,s_i)\rightarrow (s^1_j,s^2_j)$ exists.
\item If $s_i \notin Q_{p^*}$ holds, $\DtQ{s_i}{s}= \infty$ holds.
\end{itemize}

\end{definition}

Intuitively, $\DtQ{s_i}{s}$ gives the minimum number of interactions to transit from state $s_i$ to state $s$ when allowing only interactions with homonyms. Note that, for any $s_i\in Q_{p*}$, $s_i$ can transit to $s$ because $s_i \rightsquigarrow p \rightsquigarrow s$ holds.

\begin{definition}
\label{def:trapro}
Consider configuration $C$ such that $z=|Q_{p^*}|$ agents $a_1$, $\ldots$, $a_z$ have states in $Q_{p^*}$. In this case, we define potential function $\Phi(C, s)$ as a multi set $\{ \DtQ{s(a_1, C)}{s}$, $\DtQ{s(a_2, C)}{s}$, $\DtQ{s(a_3, C)}{s}$, $\ldots$, $\DtQ{s(a_z, C)}{s} \}$.
\end{definition}

\begin{definition}
\label{def:trapro}
For distinct $\Phi(C_1, s)$ and $\Phi(C_2, s)$, we define a comparative operator of them as follows:
Let $i$ be the minimum integer such that the number of $i$-elements is different in $\Phi(C_1, s)$ and $\Phi(C_2, s)$. If the number of $i$-elements in $\Phi(C_1, s)$ is smaller than $\Phi(C_2, s)$, we say $\Phi(C_1, s) < \Phi(C_2, s)$. 
\end{definition}

From now, we show that there exists an execution such that some agent transits to $s$.
Let $C$ be a configuration with $|Q_{p^*}|$ agents such that all agents have states in $Q_{p*}$ and there does not exist an agent with $s$ in $C$.
Since $|Q_{p^*}|$ agents have states in $Q_{p*}$ in $C$ and there does not exist an agent with $s$ in $C$, there exist homonyms in $C$.
When homonyms with a state in $Q_{p*}$ interact, they transit to states in $Q_{p*}$.
These imply that, when homonyms interact at $C \rightarrow C'$, either an agent with $s$ or homonyms with a state in $Q_{p*}$ exist in $C'$.
Thus, for contradiction, assume that there exists an infinite execution segment $e=C_0$, $C_1$, $C_2$, $\ldots$ with $|Q_{p^*}|$ agents such that only homonyms interact and any agent never has $s$ in $e$, where $C_0$ is a configuration such that all agents have states in $Q_{p*}$.
For $e$, $\Phi(C_0, s) > \Phi(C_1, s) > \Phi(C_2, s) > \cdots$ holds.
This is because, since any $p \in Q_{p^*}$ satisfies $p \rightsquigarrow p^* \rightsquigarrow s$, $\DtQ{s(a,C_i)}{s} > \DtQ{s(a,C_{i+1})}{s} $ holds for at least one agent $a$ that interacts at $C_i \rightarrow C_{i+1}$.
Hence, eventually some agent has $s$ in $e$. By the definition of $e$, this is a contradiction.

From now, we prove the second condition of Lemma \ref{lem:Q^+}. Let $A^*$ be a set of agents such that $|A^*|=|Q_{p^*}|+1$, and assume that all agents in $A^*$ have states in $Q_{p^*}$.
The existence of the above execution implies that, for any agent $a_r \in A^*$, we can make some agent in $A^* - \{ a_r \}$ transits to state $s(a_r)$ by interactions among $ A^*  - \{ a_r\}$.
Then, we can make an interaction with homonyms between $a_r$ and an agent with $s(a_r)$.
After that, since $a_r$ has a state in $Q_{p^*}$, all agents in $A^*$ keep states in $Q_{p^*}$.
Hence, in the same way, by making interaction repeatedly between $a_r$ and an agent with $s(a_r)$, $a_r$ can transit to any $q \in Q_{p^*}$ because any $p \in Q_{p^*}$ satisfies $p \rightsquigarrow p^* \rightsquigarrow q$.
Therefore, $Q_{p^*}$ satisfies the second condition and thus the lemma holds.

\subsection{Impossibility of Symmetric Protocols}

In this subsection, we show the impossibility of symmetric protocols with $P$ states.
To prove this impossibility, we use ideas of the impossibility proof for the naming protocol \cite{burman2018brief}.
This work shows that, in the model with an initialized BS, there is no symmetric naming protocol with $P$ states from arbitrary initial states under weak fairness.
We apply the proof of \cite{burman2018brief} to the uniform bipartition but, since the treated problem is different, we need to make a non-trivial modification (the proof is presented in Appendix \ref{app:imp2}).

\begin{theorem}
\label{the:sym}
In the model with an initialized BS, there is no symmetric protocol that solves the uniform bipartition problem with $P$ states from arbitrary initial states under weak fairness, if $P$ is an even integer.
\end{theorem}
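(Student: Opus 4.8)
The plan is to assume, for contradiction, that a symmetric protocol $Alg_{sym}$ with $P$ states solves the uniform bipartition problem in this setting, and to derive a contradiction by building a weakly-fair execution of $P$ agents that the BS cannot distinguish from an execution of $P-2$ agents, as in the asymmetric case, but now respecting the constraint that every homonym transition is symmetric. Since Lemmas~\ref{lem:P-2blue} and~\ref{lem:strcon} were established for arbitrary (symmetric or asymmetric) protocols, they apply to $Alg_{sym}$: with $P-2$ agents there is an execution $E_\alpha$ reaching a strongly-stable configuration $C_t$ at which all $P-2$ agents have pairwise distinct states, and this remains true in every configuration reachable from $C_t$. As before we need at least $P/2-1$ states of each color, so $|Q_{blue}|+|Q_{red}|=P$ forces the distribution to be either $(P/2-1,\,P/2+1)$ or $(P/2,\,P/2)$; I fix WLOG $|Q_{blue}|\le|Q_{red}|$ and treat $blue$ as the manipulable color. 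Note this bookkeeping already differs from the asymmetric case, where $P-1$ states gave $|Q_{blue}|=P/2-1$; the extra state budget of $P$ versus $P-1$ is what makes the naming-style argument of~\cite{burman2018brief} require a non-trivial modification.

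First I would record the key structural consequence of symmetry, to be used alongside the (still valid) Lemma~\ref{lem:P-2blue2}: because $Alg_{sym}$ is symmetric, every homonym transition has the form $(p,p)\rightarrow(p',p')$, so two agents sharing a state can only move together to a common new state and can never be separated by interacting with each other. In particular, in $E_\alpha$ no non-null homonym transition occurs at or after $C_t$, since such a transition would create a duplicated state and violate strong stability. I would then re-prove the symmetric analog of Lemma~\ref{lem:Q^+}: there is a nonempty set $Q^*\subseteq Q_{blue}$, closed under homonym transitions, together with a routing property for a group of $|Q^*|+1$ agents all holding states in $Q^*$. The relations $\rightsquigarrow$ and $\overset{*}{\rightsquigarrow}$ are redefined so that each step uses a symmetric homonym transition $(q_i,q_i)\rightarrow(q_{i+1},q_{i+1})$, and the potential-function argument of Definitions~\ref{def:tradisq}--\ref{def:trapro} is repeated to show that, with $|Q^*|+1$ agents inside $Q^*$, any single agent can be driven to any target state of $Q^*$ while all of them stay inside $Q^*$.

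With these symmetric analogs in hand, the construction mirrors the proof of Theorem~\ref{the:P-1}: start the $P$-agent population from the $P-2$ initial states of $E_\alpha$ together with two extra agents initialized to a common state $s^*\in Q^*$, replay $E_\alpha$ up to the copy of $C_t$, and thereafter maintain a configuration similar to $E_\alpha$ in the sense of Definition~\ref{def:sim}, keeping the two extra $blue$ agents permanently inside $Q^*$ and hidden from the BS and from the $\bar{A'_q}$ agents. This keeps $P-1$ $red$ and $P+1$ $blue$ agents forever, so no stable configuration ever occurs, and a round-robin choice of which $Q^*$-agent plays the role of the substitute partner restores weak fairness. I expect the main obstacle to be precisely the re-proof of the routing lemma under symmetry: in the asymmetric case the engine was the splitting transition $(p,p)\rightarrow(p',q')$ with $p'\neq q'$, which is forbidden here, so an agent can leave a homonym pair only by interacting with a partner in a \emph{different} state. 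Consequently, making an arbitrary agent reach an arbitrary target of $Q^*$ while the whole group stays inside $Q^*$ must be engineered from lockstep homonym moves plus the extra $(|Q^*|+1)$-th agent acting as a helper, and verifying that this suffices — under the altered color counts forced by $P$ rather than $P-1$ states — is the delicate, problem-specific step that the formal proof (in Appendix~\ref{app:imp2}) must supply.
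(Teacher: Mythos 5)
Your plan is to port the asymmetric hiding argument of Theorem~\ref{the:P-1} (strongly-stable configuration, a set $Q^*\subseteq Q_{blue}$ closed under homonym interactions, two extra agents hidden inside $Q^*$), re-proving Lemma~\ref{lem:Q^+} with lockstep transitions $(p,p)\rightarrow(p',p')$. This is a genuinely different route from the paper, which instead adapts the naming impossibility of Burman et al.\ via sink states, a new notion of \emph{sink pairs}, and reduced executions. The problem is that your route has a concrete gap exactly where you wave it off as ``the delicate, problem-specific step.'' With $P$ states the color counts split into two cases, $(|Q_{blue}|,|Q_{red}|)=(P/2-1,P/2+1)$ or $(P/2,P/2)$, and your machinery only has a chance in the first. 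In the balanced case $(P/2,P/2)$, a strongly-stable configuration of $P-2$ agents occupies only $P/2-1$ of the $P/2$ blue states, so Corollary~\ref{cor:Sblue} fails, and with it Lemma~\ref{lem:P-2blue2} --- which you assert is ``still valid,'' but it was proved for asymmetric protocols with $P-1$ states and its proof uses precisely the fact that every blue state is occupied, so that any transition changing a blue agent's state creates a duplicate. With one blue state unoccupied, a blue agent interacting with a red agent or the BS may legally drift to the unoccupied blue state without violating strong stability. Then the fixed partition $A_q/\bar{A_q}$ and the similarity invariant of Definition~\ref{def:sim} cannot be maintained, and the two extra agents cannot be kept hidden: a substitute agent routed to state $\alpha$ may leave $Q^*$ after its interaction, and agents of $\bar{A_q}$ may stop matching their counterparts in $E_\alpha$.

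The damage in the balanced case goes deeper than Lemma~\ref{lem:P-2blue2}: even the existence of your seed set $Q^*$ is unsupported there. The asymmetric proof obtains $Q_{nbc}\neq\emptyset$ (hence $p^*$ and $Q_{p^*}$) by a pigeonhole argument --- $P/2$ blue agents squeezed into $P/2-1$ blue states force homonym interactions in a stable configuration with $P$ agents. When $|Q_{blue}|=P/2$, a $P$-agent stable configuration can give every agent a distinct state (the protocol can behave like a naming protocol), no homonyms are forced anywhere, and the argument that some blue state cannot reach red by homonym interactions evaporates. So your proposal, as written, proves nothing in the balanced case, and that case cannot be excluded a priori. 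This is exactly why the paper abandons the $Q^*$-hiding strategy for the symmetric setting: it proves (Lemma~\ref{pro:sink}) that either a unique sink state or a unique sink pair exists, handles the sink-state and different-color sink-pair situations (which correspond to the balanced count, by Lemma~\ref{lem:P-24}) with a naming-style argument in which a non-sink state is made to disappear infinitely often in a reduced weakly-fair execution (Lemmas~\ref{lem:name8}, \ref{lem:name9}, \ref{lem:impsink1}) or with a single hidden agent (Lemmas~\ref{lem:impsink}, \ref{lem:impsink2}), and handles the same-color sink pair (the unbalanced count) by a counting argument on reduced configurations. To salvage your approach you would need an entirely new argument for the $(P/2,P/2)$ case; the parts of your sketch that do work (the lockstep routing inside $Q^*$ when a homonym pair is available) only cover the case $|Q_{blue}|=P/2-1$.
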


In the case of naming protocols \cite{burman2018brief}, the impossibility proof proves that a unique state (called sink state) always exists. However, in the case of uniform bipartition protocols, sometimes no sink state exists. To treat this situation, we additionally define a sink pair, which is a pair of two states that has a similar property of a sink state. We show that either a sink state or a sink pair exists, and, we prove that there is no symmetric protocol in both cases.

\section{Possibility Results for Initialized BS and Weak Fairness}
In this section, we propose both asymmetric and symmetric protocols for the uniform $k$-partition problem.
The asymmetric protocol requires $P$ states and the symmetric protocol requires $P+1$ states.
By impossibility results, these protocols are space-optimal.

\subsection{An Asymmetric Protocol}
In this subsection, we show a $P$-state asymmetric protocol for the uniform $k$-partition problem.
The idea of the protocol is to assign states $0$, $1$, $\ldots$, $n-1$ to $n$ agents one by one and then regard an agent with state $s$ as a member of the ($s \mod k$)-th group.
One may think that, to implement this idea, we can directly use a naming protocol \cite{burman2018brief}, where the naming protocol assigns different states to agents by using $P$ states if $n \le P$ holds.
Actually, if $n=P$ holds, the naming protocol assigns states $0$, $1$, $\ldots$, $P-1$ to $P$ agents one by one and hence it achieves uniform $k$-partition.
However, if $n<P$ holds, the naming protocol does not always achieve uniform $k$-partition. 
For example, in the case of $(n-1)k<P$, the naming protocol may assign states $0$, $k$, $2k$, $\ldots$, $(n-1)k$ to $n$ agents one by one, which implies that all agents are in the $0$-th group.
 
Algorithm \ref{alg1} shows a $P$-state asymmetric protocol for the uniform $k$-partition problem.
In the protocol, the BS assigns states $0$, $1$, $\ldots$, $n-1$ to $n$ agents one by one.
To do this, the BS maintains variable $M$, which represents the state the BS will assign next. The BS sets $M=0$ initially, and increments $M$ whenever it assigns $M$ to an agent.
Consider an interaction between the BS and an agent with state $x$. If $x$ is smaller than $M$, the BS judges that it has already assigned a state to the agent, and hence it does not update the state. If $x$ is $M$ or larger, the BS assigns state $M$ to the agent and increments $M$.
When the BS assigns state $x$ to an agent, there may exist another agent with state $x$ because of arbitrary initial states.
To treat this case, when two agents with the same state $x$ interact, one transits to state $x+1$ and the other keeps its state $x$. 
By repeating such interactions, eventually exactly one agent has state $x$.
By this behavior, the BS eventually assigns states $0$, $1$, $\ldots$, $n-1$ to $n$ agents one by one, and hence the algorithm achieves uniform $k$-partition.

\begin{algorithm}[t!]
	\caption{Asymmetric uniform $k$-partition protocol}         
	\label{alg1}
	\algblock{when}{End}
	\begin{algorithmic}[1]
		\renewcommand{\algorithmicrequire}{\textbf{A variable at BS}}
		\Require
		\Statex $M$: The state that the BS assigns next, initialized to 0

		\renewcommand{\algorithmicrequire}{\textbf{A variable at a mobile agent $a$:}}
		\Require
		\Statex $S_a \in \{0, 1, 2, \ldots, P-1\}$: The agent state, initialized arbitrarily. Agent $a$ belongs to the $(S_a\mod k)$-th group.

		\While {a mobile agent $a$ interacts with BS}
		\If{$M \le S_a$}
		\State $S_a = M$
		\State $M = M + 1$
		\EndIf
		\EndWhile

		\While {two mobile agent $a$ and $b$ interact}
		\If{$S_a = S_b$ and $S_a<P-1$}
		\State $S_a = S_a + 1$
		\EndIf
		\EndWhile		

	\end{algorithmic}
\end{algorithm}

As a result, we obtain the following theorem. The proof is in Appendix \ref{app:alg}.
\begin{theorem}
\label{the:kpart}
Algorithm \ref{alg1} solves the uniform $k$-partition problem.
This means that, in the model with an initialized BS, there exists an asymmetric protocol with $P$ states and arbitrary initial states that solves the uniform $k$-partition problem under weak fairness, where $P$ is the known upper bound of the number of agents.
\end{theorem}

\begin{remark}
Interestingly, when $P$ is odd, Algorithm \ref{alg1} solves the uniform bipartition even if the number of agent states is $P-1$.
Concretely, let $S_a \in \{ 1, 2, 3, \ldots, P-1 \}$ be a set of agent states, and initialize variable $M$ to $1$.
Then, Algorithm \ref{alg1} converges to a configuration such that there exist two agents with state $P-1$ (and other states are held by exactly one agent).
This is because, in the algorithm, the BS assigns $P-1$ agents to $P-1$ states one by one, and, since the algorithm works under weak fairness, the remaining one agent shifts its state until state $P-1$.
In the configuration, the difference in the numbers of $red$ and $blue$ agents is one. Moreover, every agent does not change its own state after the configuration.
Hence, the uniform bipartition is solved.
\qed
\end{remark}

\subsection{A Symmetric Protocol}
In this subsection, we propose a $(P+1)$-state symmetric protocol for the uniform $k$-partition problem.
We can easily obtain the protocol by a scheme proposed in \cite{beauquier2015space}.
In \cite{beauquier2015space}, a $P$-state symmetric protocol for the counting problem is proposed.
The counting protocol assigns different states in $\{1, \ldots, n\}$ to $n$ agents and keeps the configuration if $n < P$ holds. 
Hence, by regarding $P+1$ as the upper bound of the number of agents and allowing $P+1$ states, the protocol assigns different states in $\{1,...,n\}$ to $n$ agents for any $n\le P$. 
This implies that, as in the previous subsection, the protocol can achieve the uniform $k$-partition by regarding an agent with $x$ as a member of the $(x \mod k)$-th group.

\begin{theorem}
\label{the:name16}
In the model with an initialized BS, there exists a symmetric protocol with $P+1$ states and arbitrary initial states that solves the uniform $k$-partition problem under weak fairness, where $P$ is the known upper bound of the number of agents.
\end{theorem}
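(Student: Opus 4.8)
The plan is to obtain the protocol as a black-box reduction to the symmetric counting protocol of \cite{beauquier2015space}, exactly along the lines sketched immediately before the statement. That protocol is symmetric, works with an initialized BS and arbitrary initial agent states, and is parameterized by an upper bound on the population size; when the true number $n$ of agents is strictly below that parameter, every weakly-fair execution eventually reaches a configuration in which the $n$ agents carry pairwise distinct states forming exactly the initial segment $\{1, 2, \ldots, n\}$, and this bijective naming is never altered afterwards. My first step would be to instantiate the protocol with the inflated parameter $P+1$, so that agents use $P+1$ states. Since the hypothesis guarantees $n \le P < P+1$, the strict inequality required by the counting protocol holds for every admissible population, and hence the bijective naming onto $\{1,\ldots,n\}$ is reached and frozen in every weakly-fair execution, from any initial states of the non-BS agents.

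Next I would define the coloring on top of this naming: the function $f$ maps an agent with state $x$ to $color_{(x \bmod k)+1}$, reading the $k$ residues modulo $k$ as the $k$ groups. The reason for using the counting protocol rather than an arbitrary naming protocol is that it produces the \emph{consecutive} range $\{1,\ldots,n\}$; this is exactly what is needed, since the pathology discussed for the asymmetric construction (states spread out as $0,k,2k,\ldots$) cannot arise here. I would then verify the two stability conditions of the uniform $k$-partition problem. For the balance condition, it suffices to observe that the residues modulo $k$ of the consecutive integers $1,2,\ldots,n$ split each residue class into either $\lfloor n/k\rfloor$ or $\lceil n/k\rceil$ elements, so the group sizes differ by at most one. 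For the persistence condition, I would use that the counting protocol freezes its bijective naming: once the stabilized configuration $C$ is reached, every configuration $C^*$ with $C \xrightarrow{*} C^*$ carries the same per-agent state, hence the same color, so each agent stays in its group.

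Assembling these observations gives a symmetric protocol with $P+1$ agent states that, from arbitrary initial states and under weak fairness with an initialized BS, reaches a stable configuration in every execution, which is precisely the claim. The main thing to pin down rigorously is the interface with \cite{beauquier2015space}: I must confirm that the cited counting protocol indeed (i) is symmetric in the sense defined in Section~2, (ii) tolerates arbitrary initial states of the non-BS agents with only the BS initialized, and (iii) stabilizes to the \emph{consecutive} naming $\{1,\ldots,n\}$ that is frozen thereafter, rather than merely to some set of $n$ distinct states. Granting these properties, the reduction is routine; the only genuinely protocol-specific work is this verification, together with the one-line observation that inflating the bound from $P$ to $P+1$ turns the strict inequality $n<P+1$ into something guaranteed for all $n \le P$, at the cost of exactly one extra state.
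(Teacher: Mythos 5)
Your proposal is correct and follows essentially the same route as the paper: both instantiate the symmetric counting protocol of \cite{beauquier2015space} with the inflated bound $P+1$ so that the strict inequality $n < P+1$ holds for all $n \le P$, obtain the frozen consecutive naming $\{1,\ldots,n\}$, and color agents by their state modulo $k$, with balance following from the residues of consecutive integers and persistence from the frozen naming. The paper treats this as a one-paragraph observation without formalizing the interface conditions you list, so your explicit verification checklist is a slightly more careful rendering of the same argument rather than a different one.
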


\section{Results for Non-initialized BS}
In this section, we show the impossibility with non-initialized BS.
In the proof, we use ideas of the impossibility proof for the uniform bipartition protocol \cite{bipartition}.
This work shows that, in the model with no BS, there is no protocol for uniform bipartition problem with arbitrary initial states under global fairness.

\begin{theorem}
	In the model with non-initialized BS, no protocol with arbitrary initial states solves the uniform bipartition problem under global fairness.
\end{theorem}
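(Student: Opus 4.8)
The plan is to argue by contradiction, adapting the no-BS impossibility of \cite{bipartition} to the single, non-initialized BS. Assume some protocol $\cal P$ solves uniform bipartition from arbitrary initial states under global fairness in the non-initialized-BS model. First I would establish the basic \emph{color-commitment} property of stable configurations: if $C$ is stable and $C \xrightarrow{*} C'$, then by definition every agent keeps its color from $C$ onward, so for any two states $p,q$ that co-occur on distinct agents in some configuration reachable from $C$, the transition $(p,q)\rightarrow(p',q')$ must satisfy $f(p')=f(p)$ and $f(q')=f(q)$; the same holds for an agent--BS interaction, where one argument is the BS state. Consequently the set $S$ of all (agent and BS) states appearing in configurations reachable from $C$ is \emph{color-closed}: every interaction realizable among states of $S$ preserves colors. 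This closure is the engine that lets an imbalance, once seeded, persist forever.

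Next I would exploit that the actual number of agents is only bounded by $P$ and that all initial states (including the BS's) are arbitrary. Pick a stable configuration $C$ reached on an odd number $n$ of non-BS agents, so that its red and blue counts differ by exactly one, together with the BS in some reachable state $\sigma$. Using arbitrary initial states I would then set up a second population with about $2n$ non-BS agents and a single BS whose initial state is fixed to $\sigma$, placing the non-BS agents in a configuration built by ``duplicating'' the states of $C$. The goal is to show that, under global fairness, this enlarged population can only move through configurations whose states lie in the color-closed set $S$, so that the color counts stay off by two; since $2$ exceeds the allowed imbalance of $1$, no reachable configuration is stable, contradicting that $\cal P$ solves the problem.

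The delicate point, and the step I expect to be the main obstacle, is that duplication creates state co-occurrences that never arise in executions from $C$ on the original population: in particular \emph{same-state} pairs $(p,p)$ for states $p$ that appeared only once in $C$, and repeated interactions of the single BS (state $\sigma$, then its successors) with two copies of the same agent state. The transitions on these brand-new co-occurrences are a priori unconstrained by color-commitment, and a careless choice could let them repair the imbalance. Overcoming this requires choosing $C$ and the enlargement so that every co-occurrence the enlarged execution can realize is already a co-occurrence realizable from $C$: for agent--agent pairs this is where the distinguishability of the BS and the structure of $S$ are used to force ``mirrored'' interactions, and for agent--BS pairs one must track the BS state carefully so that each of its interactions corresponds to a color-preserving interaction reachable from $C$.

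Finally, once the enlarged configuration is shown to be color-committed with imbalance two, I would invoke global fairness to confirm that the constructed sequence is a legitimate fair execution of $\cal P$ yet contains no stable configuration, completing the contradiction. I would present the color-closure lemma and the enlargement construction first, defer the verification of the ``no new co-occurrence'' claim to where the BS bookkeeping is handled, and only then assemble the counting contradiction.
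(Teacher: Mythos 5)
There is a genuine gap, and it is exactly the one you flag yourself but do not resolve. Your construction goes in the wrong direction: you \emph{enlarge} a stable configuration $C$ by duplicating its states, and then need every reachable configuration of the doubled population to stay inside the color-closed set $S$. But duplication immediately creates same-state pairs $(p,p)$ (and new agent--BS co-occurrences) that never co-occur in any configuration reachable from $C$, so your color-commitment lemma says nothing about the transitions $(p,p)\rightarrow(p',q')$. Worse, under \emph{global} fairness you cannot schedule these interactions away: in a globally fair execution, configurations reachable from infinitely recurring configurations must themselves occur, so the duplicated homonyms will eventually interact. A correct protocol is entirely free to use exactly such same-state interactions to split one agent red and one blue --- this is the standard symmetry-breaking mechanism (compare the paper's own Algorithm 1, where homonym interactions are what repair bad multiplicities) --- and that single transition repairs your imbalance of two. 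So the claim ``the enlarged population can only move through configurations whose states lie in $S$'' is not just unproven; there is no choice of $C$ or of enlargement that forces it, because the obstruction lies in the protocol's transition table, not in the schedule.

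The paper's proof avoids this entirely by going the opposite way: it \emph{shrinks} rather than duplicates. Take a stable configuration $C_t$ on an even number $n$ of agents, and form a new population consisting of the BS together with only the $n/2$ agents that are red at $C_t$, each starting in its state from $C_t$; the non-initialized BS is what permits seeding the BS with $s(a_0,C_t)$ as its initial state. Correctness of the protocol on this smaller population forces some agent to turn blue (all of them start red), and every interaction in that execution is among states actually present on red agents and the BS at $C_t$, hence can be replayed verbatim as a continuation of the original execution after $C_t$. That replay changes an agent's color in a configuration reachable from the stable $C_t$ --- an immediate contradiction with stability, needing only finite reachability, no color-closure bookkeeping, and no construction of an infinite globally fair execution. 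If you want to keep your outline, you should replace the duplication step by this projection onto the red sub-population; the color-commitment observation you prove in your first paragraph is then all you need.
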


\begin{proof}
	For contradiction, we assume such a protocol $Alg$ exists.
	Moreover, we assume $n$ is even and at least 4. 
	We consider the following two cases.
	
	First, consider population $A=\{a_0,\ldots,a_n\}$ of $n$ agents and a non-initialized BS, where $a_0$ is the BS.
	For $A$, consider an execution $E=C_0, C_1, \cdots$ of $Alg$.
	From the definition of $Alg$, there exists a stable configuration $C_t$.
	Hence, both the number of $red$ agents and the number of $blue$ agents are $n/2$ at $C_t$.
	By the definition of a stale configuration, the color of agent $a_i$ (i.e.,$ f(s(a_i)) $) never changes for any $a_i$ ($1 \leq i \leq n$) after $C_t$ even if agents interact in any order.
	
	Next, consider population $A'= \{a'_0\} \cup \{a'_i | f(s(a_i,C_t)) = red \}$, where $a'_0$ is the BS. 
	For $A'$, consider an execution $E'=C'_0, C'_1,\cdots$ of $Alg$ from the initial configuration $ C'_0 $ such that $s(a'_i,C'_0)=s(a_i,C_t)$ holds for any $a'_i \in A'$.
	Note that, since we assume a non-initialized BS, even the BS can have $s(a_0,C_t)$ as its initial state.
	Since all agents are $red$ at $C'_0$, some agents must change their colors to reach a stable configuration.
	This implies that, after $C_t$ in execution $E$, agents change their colors if they interact similarly to $E'$. This is a contradiction.
\end{proof}

\section{Conclusion}
In this paper, we clarify solvability of the uniform bipartition with arbitrary initial states under weak fairness in the model with an initialized BS.
Concretely, for asymmetric protocols, we show that $P$ states are necessary and sufficient to solve the uniform $k$-partition problem under the assumption, where $P$ is the known upper bound of the number of agents.
For symmetric protocols, we show that $P+1$ states are necessary and sufficient under the assumption.
Moreover, these upper and lower bounds can be applied to the $k$-partition problem under the assumption.
There are some open problems as follows:
\begin{itemize}
\item Are there some relations between the uniform $k$-partition problem and other problems such as counting, leader election, and majority?

\item What is the time complexity of the uniform $k$-partition problem?

\item What is the space complexity of the approximately $k$-partition problem which divides a population into $k$ groups approximately?
For any $i$ and $j$, when the difference in the numbers of agents with group $i$ and agents with group $j$ is less than or equal to one, the problem is the same as the uniform $k$-partition problem.
On the other hand, when the difference is greater than one, the space complexity is not clarified.

\end{itemize}


%
%
%
%
\bibliographystyle{splncs04}
\bibliography{ref}

\section*{Appendix}
\appendix

\section{Proofs of Lemmas \ref{lem:P-2blue} and \ref{lem:strcon}}
\label{app:com}

\setcounter{lemma}{0}
\begin{lemma}
In any weakly-fair execution $E=C_0$, $C_1$, $C_2$, $\ldots$ of $Alg$ with $P-2$ agents and a single BS, there exists a configuration $C_h$ such that 1) $C_h$ is a stable configuration, and, 2) all agents have different states at $C_{h'}$ for any $h'\ge h$.
\end{lemma}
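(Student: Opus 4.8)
The plan is to reduce the statement to a single clean claim: \emph{no stable configuration of any weakly-fair execution of $Alg$ on $P-2$ agents contains two agents in the same state.} Two preliminary observations make this reduction legitimate. First, since $Alg$ solves the uniform bipartition under weak fairness, the given execution $E$ contains some stable configuration $C_t$. Second, stability is closed under reachability: if $C_t$ is stable with partition $\{G_1,G_2\}$ and $C_t \xrightarrow{*} C^*$, then $C^*$ is stable via the same partition, because the set of configurations reachable from $C^*$ is contained in that reachable from $C_t$, so every agent keeps its designated color. Hence every $C_{h'}$ with $h'\ge t$ is stable, and once I know every stable configuration has pairwise distinct states, taking $C_h=C_t$ proves both parts of the lemma.

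For the main claim I would argue by contradiction: suppose some stable configuration $C_t$ of a weakly-fair execution $E$ on $P-2$ agents has two agents $b_1,b_2$ sharing a state $s$; say $f(s)$ is blue. I would then build a population $A'$ of $P$ agents with the same BS, consisting of the $P-2$ \emph{real} agents plus two \emph{extra} agents $e_1,e_2$ whose (arbitrary) initial state is set to $s$. First run the real agents exactly as in $E$ up to $C_t$, leaving $e_1,e_2$ idle so they remain in state $s$. The engine of the construction is an indistinguishability observation licensed by stability of $C_t$: if two agents interact whose states are both states held by real agents in a configuration reachable from $C_t$, the transition preserves the color of each participant (this is exactly the color-freezing guarantee, noting that $f$ already assigns disjoint colors to the reachable blue and red states). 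In particular, whenever an extra currently sits in a state that a real agent also holds, letting it interact with any real agent leaves both colors unchanged. I would use this repeatedly to keep each extra perpetually in a blue state (synchronized with $b_1$, resp.\ $b_2$), so $e_1,e_2$ stay blue forever while the real agents continue to follow the tail of $E$.

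The remaining work is to interleave interactions into a genuinely weakly-fair execution without disturbing the color pattern. Every pair of real agents interacts infinitely often because $E$ is weakly fair and the real agents copy $E$; and each extra can be scheduled to interact with every real agent, and with the other extra, infinitely often using only the color-preserving interactions above. In the execution so constructed the $P-2$ real agents stay color-balanced, so for even $P$ they split as exactly $(P-2)/2$ of each color, while the two extras are both blue. Every configuration therefore has two more blue agents than red agents, a difference of two, so no configuration is stable—contradicting that $Alg$ solves the uniform bipartition under weak fairness.

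The hard part will be the third step: reconciling weak fairness, which \emph{forces} each extra to interact with every real agent infinitely often, with the need to keep the two extras blue and thereby preserve the imbalance. The delicate bookkeeping is to track the state each extra holds and to verify that every scheduled extra–real interaction is color-preserving, i.e.\ that the extra always occupies a state it shares with some real agent so that it is interchangeable with that agent and inherits the stable configuration's color-freezing guarantee, all while the real agents continue to reproduce the tail of $E$. Getting this substitution-and-scheduling argument exactly right—so that the real agents' trajectory is never corrupted by the inserted interactions—is where the proof requires care.
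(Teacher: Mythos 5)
Your reduction to the claim ``no stable configuration of any weakly-fair execution on $P-2$ agents contains two agents in the same state'' is where the proposal breaks down, because it discards exactly the information the construction needs. The paper negates the lemma as stated: together with your (correct) observation that stability is preserved under reachability, the negation yields that duplicate states occur at \emph{infinitely many} configurations after $C_t$, and then a pigeonhole argument over the finitely many agents and states fixes a particular pair $a_p,a_{p'}$ and a particular state $y$ such that $s(a_p)=s(a_{p'})=y$ recurs infinitely often. This recurrence is the engine of the whole proof. Your strengthened claim starts from a single stable configuration with one duplicate, from which no recurrence can be extracted; note also that your claim is genuinely stronger than the lemma (the lemma only forbids duplicates from recurring forever, not from ever appearing in a stable configuration), so you have made the burden strictly heavier while weakening the hypothesis available for the contradiction.

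The place where this actually bites is the step you yourself flag as ``the hard part.'' Your scheme wants two incompatible things: the extras must interact with every real agent infinitely often (weak fairness), and the real agents must keep reproducing the tail of $E$. When an extra in state $s$ interacts with a real agent $a$, the transition in general changes $a$'s state; after that the real agents no longer follow $E$, and---worse---the color-freezing guarantee you invoke is only licensed while the real agents' configuration remains reachable from $C_t$ in the $(P-2)$-agent system, a property that this single interaction destroys (the corresponding $(P-2)$-agent transition would also have changed the state of the matching agent $b_1$, but in your population $b_1$ is untouched). The only repair is substitution: the extra takes over $b_1$'s role in the simulation and $b_1$ goes idle---this is precisely the paper's procedure $Proc(q,q')$. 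But a displaced, idle agent can be reinserted (and hence can meet its fairness obligations) only at configurations where it again shares a state with the current role-players, and that is exactly what the infinite recurrence of $s(a_p)=s(a_{p'})=y$ supplies: the paper rotates through all six pairs of the four $y$-agents, one $Proc$ per pair, at these recurrence points. Under your hypothesis, after the first swap the four agents sharing $s$ may never share a state again, so the idle pair can never interact without corrupting the simulation, the constructed execution fails to be weakly fair, and no contradiction follows. The fix is to abandon the strengthened claim, negate the lemma as stated, extract the recurring triple $(a_p,a_{p'},y)$ by pigeonhole, and then run your substitution-and-rotation argument anchored at the recurrence points.
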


\begin{proof}
Assume, by way of contradiction, that such configuration $C_h$ does not exist.

First, consider a population $A=\{a_0$, $a_1$, $\ldots$, $a_{P-2}\}$ of $P-2$ agents and a single BS, where $a_0$ is the BS.
Let $E=C_0, C_1, \ldots, C_t, \ldots$ be a weakly-fair execution of $Alg$, where $C_t$ is a stable configuration.
Since $C_h$ does not exist, two agents have the same state at infinitely many configurations after $C_t$. Since the number of agents is finite and the number of states is finite, there exist $y$, $a_p$, and $a_{p'}$ such that configurations satisfying $y=s(a_p)=s(a_{p'})$ appear infinitely often after $C_t$.
	
Next, consider population $A'=\{a'_0$, $\ldots$, $a'_{P}\}$ of $P$ agents and a single BS, where $a'_0$ is a BS. 
We consider an initial configuration $C'_0$ such that initial states of $a'_0$, $\ldots$, $a'_{P}$ are $s(a_0, C_0)$, $\ldots$, $s(a_{P-2}, C_0)$, $y$, $y$, respectively.
For $A'$ we define an execution $E'=C'_0$, $C'_1$, $\cdots$, $C'_t$, $\cdots$ using execution $E$ as follow.
	
\begin{itemize}
\item For $0\le u \le t-1$, when $a_i$ and $a_j$ interact at $C_u\rightarrow C_{u+1}$, $a'_i$ and $a'_j$ interact at $C'_u\rightarrow C'_{u+1}$.
\end{itemize}
Clearly, $s(a'_i,C'_t)=s(a_i,C_t)$ holds for any $i$ ($0 \le i \le P-2$).
Since $s(a'_{P},C'_t)=s(a'_{P-1},C'_t)=y$ holds, the difference in the numbers of $red$ and $blue$ agents remains two and consequently $C'_t$ is not a stable configuration.
	
After $C'_t$, we define an execution as follows.
This definition aims to make $P-2$ agents behave similarly to $E$ and two agents keep state $y$.
\begin{itemize}
\item	Until $y=s(a'_p)=s(a'_{p'})$ holds, if $a_i$ and $a_j$ interact at $C_u\rightarrow C_{u+1}$, $a'_i$ and $a'_j$ interact at $C'_u\rightarrow C'_{u+1}$.
\item To define the remainder of $E'$, use $Proc(q,q')$ that is defined in Theorem 11 of \cite{bipartition}.
Procedure $Proc(q,q')$, which uses two indices $q$ and $q'$, can be applied to a configuration that satisfies $y=s(a'_p)=s(a'_{p'})=s(a'_{P-1})=s(a'_P)$ holds. $Proc(q,q')$ creates a sub-execution similar to $E$ by making $a'_{q}$ and $a'_{q'}$ join interactions instead of $a'_p$ and $a'_{p'}$, and guarantees that all agents in $A(q,q')=(A'- \{a'_p,a'_{p'}, a'_{P-1},a'_{P}\}) \cup \{ a'_q, a'_{q'} \}$ interact each other and the last configuration also satisfies the above condition. 
The formal definition of $Proc(q,q')$ is as follows. 
\begin{itemize}
\item When $a_i$ and $a_j$ interact at $C_u\rightarrow C_{u+1}$, $a'_i$ and $a'_j$ interact at $C'_u\rightarrow C'_{u+1}$ if $i$, $j \notin \{p,p'\}$. 
\item If $i=p$ or $j=p$ holds, $a'_q$ joins the interaction instead of $a'_p$.
\item If $i=p'$ or $j=p'$ holds, $a'_{q'}$ joins the interaction instead of $a'_{p'}$.
\item Procedure $Proc(q,q')$ continues these behaviors until all agents in $A(q,q')$ interact each other and satisfy $s(a'_q)=s(a'_{q'})=y$.
\end{itemize} 
	By using $Proc(q,q')$, we define the remainder of $E'$ to satisfy weak fairness as follows: Repeat $Proc(p,p')$, $Proc(p,P-1)$, $Proc(p,P)$, $Proc(P-1,p')$, $Proc(P,p')$, and $Proc(P,P-1)$.
	\end{itemize}
	Clearly, $E'$ makes $P-2$ agents behave similarly to $E$ and two agents keep state $y$. Hence, $E'$ never converges to a stable configuration.
	Since $E'$ is weakly-fair, this is a contradiction.
\end{proof}
\setcounter{lemma}{4}


\setcounter{lemma}{1}
\begin{lemma}
When the number of agents other than the BS is $P-2$, there exists an execution of $Alg$ that includes a strongly-stable configuration.
\end{lemma}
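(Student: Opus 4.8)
The plan is to prove Lemma~\ref{lem:strcon} by contradiction, following the diagonalization-style argument that the sketch already hints at. Assume no execution of $Alg$ with $P-2$ agents contains a strongly-stable configuration. The key observation is that Lemma~\ref{lem:P-2blue} guarantees that \emph{every} weakly-fair execution eventually reaches a configuration after which all agents permanently have distinct states (and that configuration is stable). So in any weakly-fair execution the system reaches a stable ``all-distinct'' configuration $C_t$, but by our assumption $C_t$ is not strongly-stable. By Definition~\ref{def:strcon}, the only way a stable all-distinct configuration can fail to be strongly-stable is that there exists some $C_u$ with $C_t \xrightarrow{*} C_u$ in which two agents share a state.

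The heart of the argument is a repeated splicing construction. Starting from a weakly-fair execution $E$, I would locate its stable all-distinct configuration $C_t$ and the reachable configuration $C_u$ (with a homonymous pair) promised by the failure of strong-stability. I then build a new execution $E'$ that agrees with $E$ up to $C_t$, then follows the finite transition sequence $C_t \xrightarrow{*} C_u$, and thereafter is completed to be weakly-fair in any legal way. Applying Lemma~\ref{lem:P-2blue} to $E'$ yields a new stable all-distinct configuration $C_{t'}$ occurring strictly after $C_u$; by the standing assumption $C_{t'}$ is again not strongly-stable, so there is a $C_{u'}$ reachable from it with a repeated state. Iterating this splice indefinitely produces a single weakly-fair execution in which configurations containing two equal-state agents occur infinitely often.

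The contradiction then comes directly from Lemma~\ref{lem:P-2blue}: that lemma asserts the existence of a configuration $C_h$ such that for \emph{all} $h' \ge h$ every agent has a distinct state, i.e. homonyms occur only finitely often in any weakly-fair execution. The execution we constructed violates this, which is the desired contradiction. Thus the standing assumption is false, and some execution must include a strongly-stable configuration.

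The step I expect to require the most care is verifying that the iterated splicing actually yields a single well-defined \emph{weakly-fair} infinite execution, rather than merely an infinite family of finite prefixes. Each splice inserts a finite detour $C_t \xrightarrow{*} C_u$ and then resumes; I must ensure that when I ``complete to weak fairness'' between successive splice points I do not inadvertently prevent the next stable all-distinct configuration from arising, and that in the limit every pair of agents still interacts infinitely often. The clean way to handle this is to interleave the fairness-completing interactions with the forced detours so that each splice is finite and every pair is scheduled infinitely often across the whole construction; since Lemma~\ref{lem:P-2blue} applies to every weakly-fair execution and we only need the homonym events infinitely often (not the detours to be ``maximal''), the construction goes through without delicate quantitative bookkeeping.
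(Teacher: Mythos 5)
Your proposal is correct and follows essentially the same argument as the paper: assume no strongly-stable configuration exists, use Lemma~\ref{lem:P-2blue} to find a stable all-distinct configuration, splice in the reachable homonym configuration, re-complete to weak fairness, and iterate to build a weakly-fair execution with homonyms occurring infinitely often, contradicting Lemma~\ref{lem:P-2blue}. Your added care about making the limit execution a single well-defined weakly-fair execution (by scheduling all pairs infinitely often across the splices) is exactly the point the paper handles implicitly, so the two proofs coincide in substance.
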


\begin{proof}
For contradiction, we assume that such an execution does not exist.

First, consider a weakly-fair execution $E=C_0, C_1, C_2, \ldots$ of $Alg$. By Lemma \ref{lem:P-2blue}, after some configuration $C_t$, all agents have different states. From the assumption, $C_t$ is not a strongly-stable configuration. That is, there exists a configuration reachable from $C_t$ such that two agents have the same state.

Hence, we can construct another weakly-fair execution $E'=C'_0, C'_1, C'_2, \ldots, C'_t, \ldots C'_{u}, \ldots$ of $Alg$ as follows:
\begin{itemize}
\item For $0 \le i \le t$, $C'_i$ is equal to $C_i$.
\item After $C'_t$, the population transits to $C'_{u}$ such that two agents have the same state.
\item After $C'_u$, agents continue to interact so that $E'$ satisfies weak fairness.
\end{itemize}
 By Lemma \ref{lem:P-2blue}, since $E'$ is weakly-fair, all agents have different states after some configuration $C'_{t'}$.

Hence, similarly to $E'$, we can construct another weakly-fair execution $E''=C''_0$, $C''_1$, $\ldots$, $C''_{t'}$, $\ldots$, $C''_{u'}$, $\ldots$ that satisfies the following conditions:
\begin{itemize}
\item For $0 \le i \le t'$, $C''_i$ is equal to $C'_i$.
\item After $C''_{t'}$, the population transits to $C''_{u'}$ such that two agents have the same state. 
\item After $C''_u$, agents continue to interact so that $E''$ satisfies weak fairness.
\end{itemize}

By repeating this construction, we can construct a weakly-fair execution such that two agents have the same state infinitely often. 
From Lemma \ref{lem:P-2blue}, this is a contradiction.
\end{proof}

\setcounter{lemma}{0}

\section{Proofs of Lemmas \ref{lem:P-2blue2} and \ref{lem:Q^+}}
\label{app:lem34}
From now, we show the proofs of Lemmas \ref{lem:P-2blue2} and \ref{lem:Q^+}.
First, we show a proof of Lemma \ref{lem:P-2blue2}.

\setcounter{lemma}{2}
\begin{lemma}
For any distinct states $p$ and $q$ such that $p\in Q_{blue}$ and $q \in Q_{t}$ hold, transition rule $(p,q) \rightarrow (p',q')$ satisfies the following conditions.

\begin{itemize}
\item If $q\in Q_{red}$ or $q\in Q_b$ (i.e., $q$ is a state of the BS) holds, $p' = p$ holds.
\item If $q\in Q_{blue}$ holds, either $(p', q')=(p, q)$ or $(p', q')=(q, p)$ holds.
\end{itemize}
\end{lemma}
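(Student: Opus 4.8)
The plan is to realize the transition $(p,q)\to(p',q')$ at an actual strongly-stable configuration of $E_\alpha$ and then read off the constraints on $(p',q')$ from the defining property of strong stability—every reachable configuration has pairwise distinct states—together with the color preservation guaranteed by stability. Concretely, since $q\in Q_t$, by Definition \ref{def:Qt} the state $q$ appears in some configuration $C^*$ after $C_t$ in $E_\alpha$; as $C^*$ is reachable from the strongly-stable $C_t$, and both stability (condition 2 of the stable configuration) and the distinct-states property are inherited along reachability, $C^*$ is itself strongly-stable. By Corollary \ref{cor:Sblue}, $C^*$ contains every state of $Q_{blue}$, in particular $p$ (since $p\in Q_{blue}$), held by an agent distinct from the holder of $q$ because $p\neq q$ (when $q\in Q_b$ the holder of $q$ is the BS, which is likewise distinct). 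Hence the pair $(p,q)$ can interact at $C^*$, and the resulting configuration, being reachable from $C^*$, must again have all agents in distinct states. I also record that $C^*$ has $P/2-1$ blue agents (the balanced half of the $P-2$ agents) carrying distinct blue states, and since $|Q_{blue}|=P/2-1$ by Corollary \ref{cor:P-2-2}, these blue agents are in bijection with $Q_{blue}$.

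For the first bullet ($q\in Q_{red}$ or $q\in Q_b$), I would assume $p'\neq p$ for contradiction. Stability forces the agent that held the blue state $p$ to remain blue, so $p'\in Q_{blue}$. By Corollary \ref{cor:Sblue} some agent already holds $p'$ in $C^*$, and this agent is untouched by the interaction—it is distinct from the $p$-agent since $p\neq p'$, and distinct from the holder of $q$ since $p'$ is blue while $q$ is red or a BS state. Thus after the transition two agents carry $p'$, contradicting strong stability; hence $p'=p$.

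For the second bullet ($q\in Q_{blue}$), I would assume $(p',q')\notin\{(p,q),(q,p)\}$ for contradiction. Stability forces $p',q'\in Q_{blue}$. Using the bijection between the blue agents of $C^*$ and $Q_{blue}$, replacing the states $\{p,q\}$ on the two interacting agents by $\{p',q'\}$ leaves a multiset of blue states of size $P/2-1$ contained in $Q_{blue}$; this multiset is repetition-free only if $\{p',q'\}=\{p,q\}$ as multisets, i.e. only in the two excluded cases. Therefore some blue state is duplicated in the configuration after the transition, again contradicting strong stability.

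The main obstacle is the first step: justifying that the abstract transition rule is genuinely exercised at a strongly-stable configuration, so that the distinct-states property can be applied to the post-transition configuration. This rests on combining Definition \ref{def:Qt} with the inheritance of strong stability along reachability from $C_t$, and on treating the BS sub-case uniformly with the agent sub-case. Once applicability is secured, both bullets reduce to a clean pigeonhole argument on the bijection between blue agents and $Q_{blue}$, so the remaining steps are routine.
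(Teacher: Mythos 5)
Your proposal is correct and takes essentially the same route as the paper's proof: realize the transition at a strongly-stable configuration containing $q$ (strong stability being inherited along reachability from $C_t$), use color preservation from stability to force $p'$ (and $q'$) blue, and then obtain a duplicated state via Corollary \ref{cor:Sblue}, contradicting the distinct-states property of strongly-stable configurations. Your multiset pigeonhole for the second bullet is just a compact restatement of the paper's explicit three-case analysis, and the only detail the paper treats that you omit is the degenerate $P=2$ parenthetical, where the lemma is vacuous anyway.
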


\begin{proof}
Consider execution $E_\alpha$ defined in Definition \ref{def:E}. From the definition, every configuration $C_u$ ($u\ge t$) is strongly-stable. 
From Corollary \ref{cor:Sblue}, each $blue$ state exists in any strongly-stable configuration.
For this reason, $p$ exists in any configuration $C_u$ ($u\ge t$).
In addition, since $q \in Q_{t}$ holds, there exists a configuration $C_v$ ($v \ge t$) such that state $q$ exists at $C_v$.
Let $a_x$ and $a_y$ be agents that have states $p$ and $q$ at $C_v$, respectively. We consider another execution $E'_\alpha$ such that $a_x$ and $a_y$ interact at configuration $C_v$, that is, they change their states by transition rule $(p,q) \rightarrow (p',q')$.
Let $C'_v$ be the resultant configuration.
Note that the transition does not change colors of $a_x$ and $a_y$ because $C_v$ is stable.
Now, we consider the following two cases.

First, consider the case of $q \in Q_{red}$ or $q \in Q_b$.
For contradiction, assume that $p' \neq p$ holds. Since $p' \in Q_{blue}$ holds, $p'$ exists in $C_v$ by Corollary \ref{cor:Sblue} (if $P=2$ holds, such $p'$ may not exist in $C_v$. However, if $P=2$ holds, clearly $Q_{blue} = \{p\}$ holds. Hence, since $p' \in Q_{blue}$ holds, $p' = p$ holds and then the lemma holds immediately in this case).
Thus, in $C'_v$, two agents have state $p'$.
This is a contradiction because $C_v$ is strongly-stable.

Next, consider the case of $q \in Q_{blue}$.
For contradiction, assume that 1) $(p', q')=(p, p)$, 2)$(p', q')=(q, q)$, or 3)$p'=r \vee q' = r$ for some $r \notin \{p,q\}$ holds.
In the former two cases, two agents have the same state in $C'_v$. In the last case, since $r \in Q_{blue}$ holds, state $r$ exists in $C_v$ from Corollary \ref{cor:Sblue} and consequently two agents have $r$ in $C'_v$ (if $P=2$ holds, such $r$ may not exist in $C_v$. However, if $P=2$ holds, there does not exist $r$ such that $r \notin \{p,q\}$ holds).
This is a contradiction because $C_v$ is strongly-stable.
Therefore, the lemma holds.
\end{proof}
\setcounter{lemma}{4}

In the following, we prove Lemma \ref{lem:Q^+}. 

By Corollary \ref{cor:P-2-2}, without loss of generality, we assume that $Q_{blue}=\{s_1,s_2,\ldots,s_{P/2-1}\}$ and $Q_{red}=\{s_{P/2},s_{P/2+1},\ldots,s_{P-1}\}$ hold.

\begin{definition}
\label{def:Qbluec}
$Q_{bc}=\{p \in Q_{blue} \mid \exists q \in Q_{red}:p \rightsquigarrow q\}$.
\end{definition}

\begin{definition}
\label{def:Qnbc}
$Q_{nbc}=Q_{blue} \backslash Q_{bc}$.
\end{definition}

Intuitively, $p \in Q_{bc}$ means that a $blue$ agent with state $p$ can become $red$ by interactions with homonyms.
Also, $p \in Q_{nbc}$ means that a $blue$ agent with state $p$ cannot become $red$ by interactions with homonyms.

The outline of proof of Lemma \ref{lem:Q^+} is as follows. 
First, we show that $Q_{nbc}$ is not empty. 
This implies that, by the definition of $Q_{nbc}$, there exists a state $p$ in $Q_{nbc}$ such that $p \rightsquigarrow p$ holds.
After that, we show that there exists a state $p^*$ in $Q_{nbc}$ such that $p^* \overset{*}{\rightsquigarrow} p^*$ holds.
If such $p^*$ does not exist, some state in $Q_{nbc}$ can transit to a state in $Q_p \backslash Q_{nbc}$ by interactions with homonyms and that contradicts the definition of $Q_{nbc}$.
The existence of $p^*$ implies that there exists a state set $Q_{p*} \subseteq Q_{nbc}$ such that $Q_{p^*}=\{ q \mid p^* \overset{*}{\rightsquigarrow} q \}$ holds.
Finally, we show that this $Q_{p*}$ satisfies the condition of Lemma \ref{lem:Q^+}.

\begin{definition}
\label{def:tradis}
$\DtR{s_i}$ is a function that satisfies the following property.
\begin{itemize}
\item If $s_{i} \in Q_{red}$ holds, $\DtR{s_i}=0$ holds.
\item If $s_i \in Q_{bc}$ holds, $\DtR{s_i}= \min\{\DtR{s^1_j}, \DtR{s^2_j}\}+1$ holds when transition rule $(s_i,s_i)\rightarrow (s^1_j,s^2_j)$ exists.
\item If $s_i \in Q_{nbc}$ holds, $\DtR{s_i}= \infty$ holds.
\end{itemize}

\end{definition}

Intuitively, $DtR(s_i)$ gives the minimum number of interactions to transit from $s_i$ to a $red$ state when allowing only interactions with homonyms. 
The following lemma shows that $Q_{nbc}$ is not empty.

\begin{lemma}
\label{lem:Qnbc}
$Q_{nbc} \neq \emptyset$. 
\end{lemma}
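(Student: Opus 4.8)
The plan is to argue by contradiction. Suppose $Q_{nbc}=\emptyset$, that is, $Q_{bc}=Q_{blue}$, so that every blue state can reach a red state through interactions with homonyms; by Definition \ref{def:tradis} this means $\DtR{s}$ is finite for every $s\in Q_{blue}$. I will show that this lets me change the color of an agent inside a \emph{stable} configuration, which contradicts condition 2 of stability.

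First I would fix a population $A'$ of $P$ agents with an initialized BS. Recall that $P$ is even, and I treat the non-degenerate case $P\ge 4$, so that $|Q_{blue}|=P/2-1\ge 1$ by Corollary \ref{cor:P-2-2}. Since $Alg_{asym}$ solves the uniform bipartition for every $n\le P$ under weak fairness, some weakly-fair execution of $A'$ reaches a stable configuration $C$. In $C$ the two groups are balanced, so exactly $P/2$ agents are $blue$. These $P/2$ $blue$ agents take states in $Q_{blue}$, and $|Q_{blue}|=P/2-1<P/2$, so by the pigeonhole principle at least two $blue$ agents share a state; that is, $blue$ homonyms exist in $C$.

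The key step is a potential-descent argument among the $blue$ agents, analogous to the one used for $\Phi$ in the proof of Lemma \ref{lem:Q^+}, but measuring distance to $Q_{red}$ via $\DtR$ instead of distance to a fixed state. I would consider interactions involving only $blue$ homonyms and assign to each reachable configuration the multiset $\{\DtR{s(a)}\mid a\text{ is a }blue\text{ agent}\}$, compared by the well-founded operator analogous to Definition \ref{def:trapro}. As long as all $P/2$ $blue$ agents remain $blue$, they still occupy at most $P/2-1$ states, so $blue$ homonyms always exist; pick two of them, say in state $p$, and let them interact, $(p,p)\rightarrow (s^1,s^2)$. Since $p\in Q_{bc}$ we have $\DtR{p}=\min\{\DtR{s^1},\DtR{s^2}\}+1\ge 1$. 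If $\DtR{p}=1$, then one of $s^1,s^2$ is $red$, so a $blue$ agent has just turned $red$; otherwise both products are $blue$ and the closer one has $\DtR$ value $\DtR{p}-1$, so the multiset strictly decreases. Because it cannot decrease forever, the all-$blue$ situation cannot persist, and some $blue$ agent must eventually become $red$.

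Finally I would observe that this whole sequence of interactions starts from the stable configuration $C$ and uses only legal (homonym) interactions, so the configuration in which a $blue$ agent becomes $red$ is reachable from $C$. This contradicts condition 2 of stability, which requires every agent of $C$ to keep its group in every configuration reachable from $C$. Hence $Q_{nbc}=\emptyset$ is impossible, and $Q_{nbc}\neq\emptyset$. The main obstacle to make rigorous is exactly the potential-descent step: I must confirm that finiteness of $\DtR$ on all of $Q_{blue}$ (from the contradiction hypothesis) together with the perpetual existence of $blue$ homonyms (from pigeonhole) forces a strict decrease at every step in the well-founded order, so that reaching $\DtR{s}=0$ for some agent, i.e.\ producing a $red$ agent, is unavoidable.
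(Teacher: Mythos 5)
Your proof is correct, and it shares the paper's outer skeleton: assume $Q_{nbc}=\emptyset$, run $Alg_{asym}$ with $P$ agents and an initialized BS to a stable configuration, use Corollary~\ref{cor:P-2-2} to see that the $P/2$ blue agents must share only $|Q_{blue}|=P/2-1$ states (so blue homonyms always exist), and contradict condition 2 of stability by exhibiting a configuration, reachable from the stable one by homonym interactions alone, in which a blue agent has turned red. Where you genuinely differ is the engine forcing that color change. The paper sorts the states so that $\DtR{s_{P-1}}=\cdots=\DtR{s_{P/2}}=0<\DtR{s_{P/2-1}}\le\cdots\le\DtR{s_1}$, observes that a homonym interaction at $s_i$ with finite positive $\DtR{s_i}$ yields some $s_j$ with $j>i$, and then proves the auxiliary induction Lemma~\ref{lem:redtra} (if $i+1$ agents occupy $\{s_1,\ldots,s_i\}$, interactions among them can push one of them to an index $\ge i+1$), instantiated at $i=P/2-1$ to eject a blue agent into $Q_{red}$. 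You instead run a well-founded descent on the multiset of $\DtR$-values of the blue agents, which is precisely the $\Phi$-technique the paper reserves for Lemma~\ref{lem:Ttra} inside the proof of Lemma~\ref{lem:Q^+}, with $\DtR$ in place of $\DtQ$. Both engines rest on the same key fact (the homonym transition at a state of finite positive $\DtR$ has a product of strictly smaller $\DtR$); your version buys uniformity, since one descent argument serves both Lemma~\ref{lem:Qnbc} and Lemma~\ref{lem:Q^+}, and avoids the three-case induction, while the paper's induction lemma is more elementary in that it needs no multiset order. Two minor remarks: your restriction to $P\ge 4$ is harmless, since for $P=2$ Corollary~\ref{cor:P-2-2} gives $Q_{blue}=\emptyset$ and then no stable configuration of two agents can exist, so the contradiction is immediate; and when you invoke the comparison of Definition~\ref{def:trapro}, you need the reading under which replacing $\{d,d\}$ by $\{d-1,x\}$ with $x\ge d-1$ is a strict decrease --- the smallest value whose multiplicity changes is $d-1$ and its multiplicity increases --- which is also the reading the paper's own proof of Lemma~\ref{lem:Ttra} tacitly requires.
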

\begin{proof}
The idea of proof of the lemma is as follows.
For contradiction, we assume that $Q_{nbc} \neq \emptyset$ does not hold.
When $P$ is even, some $blue$ agents have the same state in a stable configuration because the number of $blue$ agents should be $P/2$ and $|Q_{blue}|=P/2-1$ holds.
This implies that interactions with homonyms are always possible. We show that this makes some $blue$ agent transit to $red$.

From now on, we show the details of proof.
For contradiction, assume $Q_{nbc} = \emptyset$.

Consider a population $A'=\{a'_0, \ldots, a'_{P}\}$ of $P$ agents and an initialized BS, where $a'_0$ is the BS.
Let $E'=C'_0, C'_1, \ldots, C'_t, \ldots$ is a weakly-fair execution of $Alg_{asym}$, where $C'_t$ is a stable configuration.

W.l.o.g., assume $\DtR{s_{P-1}} = \cdots = \DtR{s_{P/2}} = 0 < \DtR{s_{P/2-1}} \le \cdots \le \DtR{s_{1}}$.
Since $Q_{nbc} = \emptyset$ holds, $\DtR{s_1} \neq \infty$ holds. From the definition of $DtR(s_i)$, if $DtR(s_i)>0$ and $DtR(s_i) \neq \infty$ hold, an agent with state $s_i$ transits to $s_j$ with $j>i$ by an interaction of homonyms.
By using this property, we prove the following lemma.
\begin{lemma}
\label{lem:redtra}
For $1 \le i \le P/2-1$, if $i+1$ agents have states in $\{s_1,\ldots,s_i\}$, one of these agents can transit to $s_{h}(h \ge i+1)$ by interactions among these agents.
\end{lemma}
\begin{proof}
We show the lemma by induction of $i$. 

The base case is when two agents have state $s_1$. If they interact, one of them transits to $s_{h}(h \ge 2)$.
Thus, the lemma holds in this case.

For the induction step, we assume that the lemma holds for $i = k-1$ ($k \le P/2-1$), and prove that the lemma holds for $i=k$. To do this, consider the situation that $k+1$ agents have states in $\{s_1$, $\ldots$, $s_k\}$.

Consider three cases (1) at least two agents have state $s_k$, (2) exactly one agent has state $s_k$, and (3)no agent has state $s_k$.

First, consider the case that at least two agents have state $s_k$.
In this case, if these two agents interact, one of them transits to $s_{h}(h \ge k+1)$.
Hence, the lemma holds in this case.

Next, consider the case that exactly one agent has state $s_k$.
In this case $k$ agents have states in $\{s_1$, $\ldots$, $s_{k-1}\}$.
Hence, from the inductive assumption, one of them can transit to $s_{h}(h \ge k)$ by interactions among the $k$ agents.
If $h>k$ holds, the lemma holds.
Otherwise, two agents have state $s_k$ and hence the lemma holds similarly to the first case.

Finally, consider the case that no agent has state $s_k$.
This implies that $k+1$ agents have states in $\{s_1$, $\ldots$, $s_{k-1}\}$.
Hence, from the inductive assumption, two of them can transit to $s_{h}(h \ge k)$ and $s_{h'}(h' \ge k)$.
If $h>k$ or $h'>k$ holds, the lemma holds.
Otherwise, two agents have state $s_k$ and hence the lemma holds similarly to the first case.
\end{proof}

Since $C'_t$ is a stable configuration in $E'$, $P/2$ agents have states in $Q_{blue}= \{s_1$, $\ldots$, $s_{P/2-1}\}$.
Hence, by Lemma \ref{lem:redtra}, we can construct an execution segment that makes a $blue$ agent transit to $s_h$ for some $h \ge P/2$. This implies that the agent changes its color from $blue$ to $red$. Since $C'_t$ is a stable configuration, this is a contradiction.
\end{proof}

From now, we show that there exists a state $p \in Q_{nbc}$ such that $p \rightsquigarrow p$ holds.
Furthermore, we also show that there exists a state $p \in Q_{nbc}$ such that $p \overset{*}{\rightsquigarrow} p$ holds.

\begin{lemma}
\label{lem:p*p}
There exists a state $p \in Q_{nbc}$ such that $p \rightsquigarrow p$ holds.
\end{lemma}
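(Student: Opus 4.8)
The plan is to exploit the fact that $Q_{nbc}$ is closed under homonym interactions and then extract a cycle by finiteness. The first and central step is to establish this closure property: for every $p \in Q_{nbc}$, if $(p,p) \rightarrow (a,b)$ is the unique (deterministic) transition of two homonyms in state $p$, then both $a$ and $b$ belong to $Q_{nbc}$. I would argue this by contradiction, using the transitivity of $\rightsquigarrow$, which follows immediately by concatenating the defining state sequences of Definition \ref{def:rea}. Since the transition $(p,p)\rightarrow(a,b)$ witnesses $p \rightsquigarrow a$ (and $p \rightsquigarrow b$) in a single step, if $a \in Q_{red}$ then $p \in Q_{bc}$ directly by Definition \ref{def:Qbluec}, while if $a \in Q_{bc}$ then some $q \in Q_{red}$ satisfies $a \rightsquigarrow q$, so $p \rightsquigarrow a \rightsquigarrow q$ again gives $p \in Q_{bc}$; either case contradicts $p \in Q_{nbc}$. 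As $a$ is a non-BS state we have $a \in Q_{blue} \cup Q_{red}$, and since $Q_{blue} = Q_{bc} \cup Q_{nbc}$ is a disjoint decomposition, ruling out $Q_{red}$ and $Q_{bc}$ forces $a \in Q_{nbc}$; the same reasoning applies to $b$.

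With closure in hand, I would view the homonym transitions as a finite directed graph on the vertex set $Q_{nbc}$: each $p$ has at least one out-edge (to $a$, and possibly a second to $b$), and by the closure property every out-edge stays inside $Q_{nbc}$. By Lemma \ref{lem:Qnbc} the set $Q_{nbc}$ is non-empty, so I can start a walk at an arbitrary $p_0 \in Q_{nbc}$ and repeatedly follow one out-edge, producing an infinite walk $p_0 \rightsquigarrow p_1 \rightsquigarrow p_2 \rightsquigarrow \cdots$ confined to the finite set $Q_{nbc}$. Some state must repeat, say $p_i = p_j$ with $i < j$, and concatenating the segment $p_i \rightsquigarrow p_{i+1} \rightsquigarrow \cdots \rightsquigarrow p_j = p_i$ yields $p_i \rightsquigarrow p_i$ with $p_i \in Q_{nbc}$, which is exactly the claim.

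The main obstacle is getting the closure property precisely right: one must handle the deterministic two-output transition $(p,p)\rightarrow(a,b)$ carefully, confirm that $\rightsquigarrow$ is genuinely transitive, and note that the outputs $a,b$ can only be non-BS blue or red states so that the disjoint partition argument applies. Once closure is secured, the cycle extraction is a routine finiteness argument. One subtlety worth recording is that if the length-zero sequence is admitted in Definition \ref{def:rea}, then $p \rightsquigarrow p$ holds trivially for any $p \in Q_{nbc}$; the construction above is what produces a genuine cycle of length at least one, which is the form that the subsequent search for a state $p^*$ with $p^* \overset{*}{\rightsquigarrow} p^*$ will build upon.
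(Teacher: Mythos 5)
Your proposal is correct and follows essentially the same route as the paper: start from some $p_0 \in Q_{nbc}$ (Lemma \ref{lem:Qnbc}), follow the deterministic homonym transitions to build a walk, and use finiteness of the state set to extract a repeated state $p_i = p_j$, giving $p_i \rightsquigarrow p_i$ with $p_i \in Q_{nbc}$. The only difference is presentational: the paper asserts ``$p_i \in Q_{nbc}$ because $p_0 \in Q_{nbc}$'' in one line (implicitly the transitivity argument), whereas you prove the closure of $Q_{nbc}$ under homonym transitions up front --- a sound and slightly more careful rendering of the same idea.
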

\begin{proof}
From Lemma \ref{lem:Qnbc}, there exists a state $p_0$ in $Q_{nbc}$.
From the definition, there exists a sequence of transition rules $(p_0,p_0) \rightarrow (p_1,x_0)$, $(p_1,p_1) \rightarrow (p_2,x_1)$, $(p_2,p_2) \rightarrow (p_3,x_2)$, $\ldots$ starting from $p_0$.
Since the number of states is finite, there exist some $p_i$ and $p_j$ such that $j > i \geq 0$ and $p_i=p_j$ holds. This implies $p_i \rightsquigarrow p_i$. 
Clearly, $p_i \in Q_{nbc}$ holds because $p_0 \in Q_{nbc}$ holds.
Therefore, the lemma holds.
\end{proof}

\begin{lemma}
\label{lem:p*p2}
There exists a state $p \in Q_{nbc}$ such that $p \overset{*}{\rightsquigarrow} p$ holds.
\end{lemma}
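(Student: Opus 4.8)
The plan is to read $\rightsquigarrow$ as reachability in a finite directed graph on states and to pick out a state whose reachable set is ``closed'', so that everything reachable from it loops back. Concretely, I would first establish a closure property: if $p\in Q_{nbc}$ and $p\rightsquigarrow x$, then $x\in Q_{nbc}$ as well. Indeed $x$ cannot be a $red$ state, for otherwise $p\rightsquigarrow x$ with $x\in Q_{red}$ would place $p$ in $Q_{bc}$, contradicting $p\in Q_{nbc}$; and $x$ cannot lie in $Q_{bc}$, since if $x\rightsquigarrow q$ for some $q\in Q_{red}$ then transitivity of $\rightsquigarrow$ gives $p\rightsquigarrow q$, again contradicting $p\in Q_{nbc}$. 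Hence $x\in Q_{blue}\setminus Q_{bc}=Q_{nbc}$. This is precisely the observation flagged in the sketch that, were an escape from $Q_{nbc}$ via homonym interactions possible, it would contradict the definition of $Q_{nbc}$.

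Next I would introduce, for each state $p$, its reachable set $R(p)=\{x\mid p\rightsquigarrow x\}$. Since $\rightsquigarrow$ is reflexive (take the length-$0$ sequence in Definition \ref{def:rea}) and transitive (concatenate witnessing sequences), we have $p\in R(p)$, and $R(x)\subseteq R(p)$ whenever $x\in R(p)$. By the closure property, $R(p)\subseteq Q_{nbc}$ for every $p\in Q_{nbc}$, and by Lemma \ref{lem:Qnbc} the set $Q_{nbc}$ is nonempty and finite. I would therefore choose $p^*\in Q_{nbc}$ minimizing $|R(p)|$ over all $p\in Q_{nbc}$.

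I then claim $p^*\overset{*}{\rightsquigarrow}p^*$. Take any $x$ with $p^*\rightsquigarrow x$, i.e.\ $x\in R(p^*)$; by closure $x\in Q_{nbc}$, and by transitivity $R(x)\subseteq R(p^*)$, so $|R(x)|\le|R(p^*)|$. Minimality of $|R(p^*)|$ forces $R(x)=R(p^*)$, and since $p^*\in R(p^*)$ by reflexivity we obtain $p^*\in R(x)$, that is $x\rightsquigarrow p^*$. As this holds for every $x$ with $p^*\rightsquigarrow x$, the definition of $\overset{*}{\rightsquigarrow}$ yields $p^*\overset{*}{\rightsquigarrow}p^*$ with $p^*\in Q_{nbc}$, as required.

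I expect the closure property to be the main obstacle, because it is the step that ties the purely combinatorial reachability argument back to the coloring semantics: one must rule out both $x\in Q_{red}$ and $x\in Q_{bc}$, and each exclusion rests on transitivity of $\rightsquigarrow$ together with the defining conditions of $Q_{bc}$ and $Q_{nbc}$. Once closure is in hand, the extremal choice of $p^*$ is routine; note that this phrasing makes Lemma \ref{lem:p*p} unnecessary for the present statement, though the same conclusion could equivalently be reached by taking $p^*$ in a bottom strongly connected component of the reachability graph restricted to $Q_{nbc}$.
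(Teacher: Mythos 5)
Your overall strategy is sound and genuinely different from the paper's. The paper argues by contradiction: starting from a loop state $p_0$ (Lemma \ref{lem:p*p}), it repeatedly extracts a reachable-but-not-returning state $q$, finds a new loop state $p'_0 \neq p_0$ beyond it, and iterates (``by repeating similar arguments'') until finiteness of the state set yields a contradiction. You replace this iteration with a closure property ($Q_{nbc}$ is closed under $\rightsquigarrow$, which the paper asserts in one line and you actually prove) plus a clean extremal principle: minimize $|R(p)|$ over $p \in Q_{nbc}$ and show the minimizer's reachable set is strongly connected. This is tidier and avoids the informal induction in the paper's proof.

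However, there is one step that does not survive the paper's own conventions: your appeal to reflexivity, ``$p^* \in R(p^*)$ by the length-$0$ sequence in Definition \ref{def:rea}.'' In the paper's usage, $q \rightsquigarrow q$ is emphatically \emph{not} automatic: if it were, Lemma \ref{lem:p*p} would be vacuous, yet the paper proves it by a pigeonhole argument that produces a cycle of length at least one, and the downstream constructions (e.g., making an agent actually move from $s_i$ to $s$ via $s_i \rightsquigarrow p^* \rightsquigarrow s$ in the potential-function argument) need genuine interactions, not empty sequences. Under that reading, ``$p^* \in R(p^*)$'' is precisely the claim $p^* \rightsquigarrow p^*$, which your argument has not established, so the step begs the question. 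The repair is short and stays within your framework: since the protocol is deterministic, every state has an outgoing homonym transition, so $R(p^*) \neq \emptyset$; pick any $x_0 \in R(p^*)$. Your minimality argument (using closure, so $x_0 \in Q_{nbc}$, and transitivity, so $R(x_0) \subseteq R(p^*)$) gives $R(x_0) = R(p^*)$, whence $x_0 \in R(p^*) = R(x_0)$, i.e., $x_0 \rightsquigarrow x_0$ by a nonempty sequence; and every $y \in R(x_0) = R(p^*)$ satisfies $x_0 \in R(p^*) = R(y)$, i.e., $y \rightsquigarrow x_0$. Thus $x_0$ (not necessarily $p^*$ itself) witnesses the lemma, matching Definition \ref{def:rea2}. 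Equivalently, the bottom-strongly-connected-component variant you mention in passing also works; either of these should be the stated argument, rather than the length-$0$ convention, and note that Lemma \ref{lem:p*p} then indeed becomes unnecessary for legitimate reasons rather than by definitional fiat.
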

\begin{proof}
For contradiction, assume that such state $p$ does not exist.
By Lemma \ref{lem:p*p}, there exists a state $p_0 \in Q_{nbc}$ such that $p_0 \rightsquigarrow p_0$ holds.

Since $p_0 \overset{*}{\rightsquigarrow} p_0$ does not hold, there exists some state $q$ such that $p_0 \rightsquigarrow q$ holds but $q \rightsquigarrow p_0$ does not hold.
Since $p_0 \rightsquigarrow q$ holds, $q$ belongs to $Q_{nbc}$.
For this reason, there exists a state $p'_0$ such that $q \rightsquigarrow p'_0 \rightsquigarrow p'_0$ holds.
Note that, since $q \rightsquigarrow p_0$ does not hold, $p'_0$ is not equal to $p_0$.

Also, since $p'_0 \overset{*}{\rightsquigarrow} p'_0$ does not hold, there exists some state $q'$ such that $p'_0 \rightsquigarrow q'$ holds but $q' \rightsquigarrow p'_0$ does not hold.
This implies that there exists a state $p''_0 \in Q_{nbc}$ such that $q' \rightsquigarrow p''_0 \rightsquigarrow p''_0$, $p''_0 \neq p'_0$, and $p''_0 \neq p_0$ hold.

By repeating similar arguments, since the number of states is finite, we can prove that there exists some state $p^*_0 \in Q_{nbc}$ such that $p^* \overset{*}{\rightsquigarrow} p^*$ holds.
This is a contradiction.
\end{proof}

In the following, we focus on a state $p^* \in Q_{nbc}$ such that $p^* \overset{*}{\rightsquigarrow} p^*$ holds.
Let $Q_{p^*}=\{q \mid p^* \rightsquigarrow q\}$.
Note that, if homonyms with a state in $Q_{p*}$ interact, they transit to a state in $Q_{p*}$. 
This implies that $Q_{p^*}$ satisfies the first condition of $Q^*$ in Lemma \ref{lem:Q^+}.
In the following lemma, we prove that $Q_{p*}$ satisfies the second condition of $Q^*$ in Lemma \ref{lem:Q^+}.

\begin{lemma}
\label{lem:Ttra}
Consider a population $A_{p*}=\{a_1$, $a_2$, $\ldots$, $a_z\}$, where $z=|Q_{p^*}|$ holds.
Consider an initial configuration $C^q_0$ such that all agents in $A_{p*}$ have states in $Q_{p*}$. For any $q \in Q_{p*}$, there exists an execution segment $e^q=C^q_0$, $C^q_1$, $C^q_2$, $\ldots$, $C^q_m$ such that 1) some agent has state $q$ at the last configuration $C^q_m$ and 2) only homonyms interact in $e^q$.
\end{lemma}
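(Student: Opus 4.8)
The plan is to prove Lemma~\ref{lem:Ttra} by contradiction, driving everything through the potential function $\Phi(\cdot, q)$ built on the distance $\DtQ{\cdot}{q}$ of Definition~\ref{def:tradisq}. First I would check that the potential is well defined along any homonym-only execution. By the choice of $p^*$ with $p^* \overset{*}{\rightsquigarrow} p^*$ and the definition $Q_{p^*}=\{x \mid p^* \rightsquigarrow x\}$, every state $s_i \in Q_{p^*}$ satisfies $s_i \rightsquigarrow p^* \rightsquigarrow q$, so $\DtQ{s_i}{q}$ is a finite nonnegative integer for all states occurring in $C^q_0$. Moreover, since interactions of homonyms with states in $Q_{p^*}$ produce states in $Q_{p^*}$ (the first condition on $Q^*$), every configuration reachable from $C^q_0$ by homonym interactions still assigns a finite distance to each of the $z=|Q_{p^*}|$ agents; hence $\Phi(C,q)$ is a well-defined multiset of $z$ finite integers at each such $C$.

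Next I would establish the two facts that make the argument run. The first is a pigeonhole observation: whenever a configuration $C$ contains no agent in state $q$, its $z$ agents occupy states in $Q_{p^*}\setminus\{q\}$, a set of only $z-1=|Q_{p^*}|-1$ states, so two agents must be homonyms and a homonym interaction is always available. The second is the effect of one homonym interaction $(s_i,s_i)\rightarrow(s^1_j,s^2_j)$ on $\Phi$. By the defining recurrence $\DtQ{s_i}{q}=\min\{\DtQ{s^1_j}{q},\DtQ{s^2_j}{q}\}+1$, exactly one of the two produced agents attains distance $\DtQ{s_i}{q}-1$, strictly smaller than before, while no distance drops below $\DtQ{s_i}{q}-1$. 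Therefore the interaction only changes counts at indices $\ge \DtQ{s_i}{q}-1$, and at the least such index $\DtQ{s_i}{q}-1$ the count strictly increases; so under the bottom-up lexicographic comparison on multisets, $\Phi$ strictly moves in one fixed direction at every homonym step.

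The core of the proof is then a finiteness (well-foundedness) step. Each agent's distance lies in the fixed finite range $\{0,1,\ldots,D\}$, where $D$ is the maximum of $\DtQ{\cdot}{q}$ over $Q_{p^*}$, and there are exactly $z$ agents, so only finitely many distinct multisets $\Phi(C,q)$ occur; since the comparison operator totally orders them, no execution can realize infinitely many strictly monotone steps. Now suppose, for contradiction, that no homonym-only segment starting at $C^q_0$ ever produces an agent in state $q$. Along any such segment no agent is ever at $q$, so by the pigeonhole fact a homonym interaction is always available; performing one at each step yields an infinite homonym-only execution in which $\Phi$ strictly changes monotonically at every step, contradicting the finite range. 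Hence some finite homonym-only segment $e^q=C^q_0,\ldots,C^q_m$ must reach a configuration in which some agent has state $q$, which is exactly the claim of Lemma~\ref{lem:Ttra} (and which then feeds the construction behind the second condition of Lemma~\ref{lem:Q^+}).

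I expect the main obstacle to be making the monotonicity of $\Phi$ fully rigorous, namely confirming that a single homonym interaction changes the multiset strictly and consistently even though the \emph{larger} of the two produced distances $\DtQ{s^1_j}{q},\DtQ{s^2_j}{q}$ can grow arbitrarily. The bottom-up comparison is precisely designed to absorb this, because only the least affected index $\DtQ{s_i}{q}-1$ governs the outcome; the care needed is in the bookkeeping of index collisions (for instance when $\DtQ{s^1_j}{q}=\DtQ{s^2_j}{q}=\DtQ{s_i}{q}-1$, or when one produced distance equals the old value $\DtQ{s_i}{q}$) and in verifying that no index strictly below $\DtQ{s_i}{q}-1$ is ever touched, so that the least differing index is exactly $\DtQ{s_i}{q}-1$ and its count genuinely changes.
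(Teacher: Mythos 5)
Your proof is correct and follows essentially the same route as the paper's: the pigeonhole observation that absence of $q$ among $z=|Q_{p^*}|$ agents forces homonyms, the strict monotone change of $\Phi(\cdot,q)$ under each homonym interaction via the recurrence $\DtQ{s_i}{q}=\min\{\DtQ{s^1_j}{q},\DtQ{s^2_j}{q}\}+1$, and the finiteness of possible multisets to refute an infinite $q$-avoiding homonym-only execution. If anything, your treatment is more careful than the paper's, since you pin down the least affected index $\DtQ{s_i}{q}-1$ and handle the direction of the comparison operator explicitly (the paper states the inequality $\Phi(C^q_i,q)>\Phi(C^q_{i+1},q)$ with a sign convention that, read literally against Definition~\ref{def:trapro}, points the wrong way, though the monotonicity argument is unaffected).
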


\begin{proof}
Let $C$ be a configuration with $A_{p*}$ such that all agents have states in $Q_{p*}$ and there does not exist an agent with $s$ in $C$.
Since $|Q_{p^*}|$ agents have states in $Q_{p*}$ in $C$ and there does not exist an agent with $s$ in $C$, there exist homonyms in $C$.
When homonyms with a state in $Q_{p*}$ interact, they transit to a state in $Q_{p*}$.
These imply that, when homonyms interact at $C \rightarrow C'$, either an agent with $s$ or homonyms with a state in $Q_{p*}$ exist in $C'$.
Thus, for contradiction, assume that there exists an infinite execution segment $e^q=C^q_0$, $C^q_1$, $C^q_2$, $\ldots$ with $A_{p*}$ such that only homonyms interact and any agent never has $q$ in $e^q$.
Consider the case that $a_x$ and $a_y$ interact at $C^q_i \rightarrow C^q_{i+1}$ for $i \ge 0$.
By the assumption, $a_x$ and $a_y$ have the same state $p \in Q_{p*}$.
From the property of $Q_{p*}$, $p$ satisfies $p \rightsquigarrow p^* \rightsquigarrow q$.
Thus, $\DtQ{s(a_x,C^q_i)}{q} > \DtQ{s(a_x,C^q_{i+1})}{q} $ or $\DtQ{s(a_y,C^q_i)}{q} > \DtQ{s(a_y,C^q_{i+1})}{q} $ holds.
Hence, from the property of $\Phi$, $\Phi(C^q_i,q) > \Phi(C^q_{i+1},q)$ holds.
Since the possible value of $\Phi(C,q)$ is finite, $\Phi(C^q_j,q)$ includes $0$ for some $C^q_j$ and thus some agent has $q$ in $C^q_j$.
By the definition of $e^q$, this is a contradiction.
\end{proof}

\setcounter{lemma}{3}
\begin{lemma}
There is a non-empty state set $Q^* \subseteq Q_{blue}$ that satisfies the following conditions.

\begin{itemize}
\item For any state $p \in Q^*$, transition rule $(p,p) \rightarrow (p',q')$ satisfies $p'\in Q^*$ and $q'\in Q^*$.
\item Assume that, in a configuration $C$, there exists a subset of agents $A^*$ such that all agents in $A^*$ have states in $Q^*$ and $|A^*|=|Q^*|+1$ holds. In this case, for any agent $a_r \in A^*$ and any state $q \in Q^*$, there exists an execution segment such that 1) the execution segment starts from $C$, 2) $a_r$ has state $q$ at the last configuration, 3) only agents in $A^*$ join interactions, and 4) all agents in $A^*$ have states in $Q^*$ at the last configuration.
\end{itemize}
\end{lemma}

\begin{proof}
We show that $Q_{p^*}$ satisfies the condition of $Q^*$. Clearly $Q_{p^*}$ satisfies the first condition. Hence, we focus on the second condition.

Consider a set of agents $A^*$, and consider an initial configuration $C^{p^*}_0$ such that all agents in $A_{p*}$ have states in $Q_{p*}$.
Let $a_r$ be an agent in $A^*$ and let $s = s(a_r, C^{p^*}_0)$.
Consider a sequence of states $T=t_0, t_1, t_2, \ldots, t_l$ such that $t_0=s$, $t_l=q$, and, for any $i$ ($0 \le i < l $), transition rule $(t_i,t_i) \rightarrow (t_{i+1},x_i)$ or $(t_i,t_i)\rightarrow(x_i,t_{i+1})$ exists for some $x_i$.

From configuration $C^{p^*}_0$, we make $a_r$ change its state according to $T$.
That is, if $a_r$ has state $t_i(0 \le i < l)$, we make one of the remaining $|Q_{p^*}|$ agents (i.e., $A^*-\{a_r\}$) transit to $t_i$ by interactions of homonyms and then make the agent interact with $a_r$.
Such procedure is possible because, by Lemma \ref{lem:Ttra}, one of the remaining agents can transit to any state in $Q_{p^*}$ by interactions of homonyms. Note that all agents in $A^*$ keep states in $Q_{p^*}$ when the procedure is applied. Hence, we can construct an execution segment in the second condition.
Therefore, $Q_{p^*}$ satisfies the conditions of $Q^*$ and thus the lemma holds.
\end{proof}
\setcounter{lemma}{9}

\setcounter{theorem}{1}

\section{Proof of Theorem \ref{the:sym}}
\label{app:imp2}

\begin{theorem}
In the model with an initialized BS, there is no symmetric protocol that solves the uniform bipartition problem with $P$ states from arbitrary initial states under weak fairness, if $P$ is an even integer.
\end{theorem}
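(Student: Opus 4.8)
The plan is to assume, for contradiction, that a symmetric protocol $Alg_{sym}$ with $P$ states solves the uniform bipartition problem under the stated assumptions, and to transfer the naming-protocol impossibility of Burman et al. \cite{burman2018brief} to this setting. The basic observation is that symmetry forces every interaction between two homonyms (two agents sharing a state $p$) to have the form $(p,p)\rightarrow(g(p),g(p))$ for a well-defined function $g:Q_p\rightarrow Q_p$; hence, under homonym interactions alone, a state evolves deterministically along the functional graph of $g$, which, being finite, funnels every state into a cycle. Lemmas \ref{lem:P-2blue} and \ref{lem:strcon} are stated for an arbitrary protocol solving uniform bipartition, so they apply to $Alg_{sym}$: with $P-2$ agents there is a weakly-fair execution reaching a strongly-stable configuration in which the $P-2$ agents occupy distinct states and the colors are balanced at $P/2-1$ each. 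This balanced background is what two additional, inseparable agents will unbalance.

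First I would isolate the structural object that replaces Burman's unique sink. Call a state $s$ a \emph{sink state} if $g(s)=s$, and a pair $\{s_1,s_2\}$ a \emph{sink pair} if $g(s_1)=s_2$ and $g(s_2)=s_1$. In either case two homonyms placed on the structure remain homonyms forever under homonym interactions: they sit together at $s$, or oscillate in lock-step between $s_1$ and $s_2$, so they always share a color and cannot be separated without a non-homonym interaction. The key lemma is that $Alg_{sym}$ admits a sink state or a sink pair. I would prove this by a minimal-absorbing-set argument along the $\rightsquigarrow$ dynamics, in the spirit of the existence of $p^*$ with $p^*\overset{*}{\rightsquigarrow}p^*$ in the asymmetric case: follow $g$ into its terminal cycle and then use the color constraints forced by correctness to collapse that cycle to length one or two. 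Where naming forbids homonyms outright and therefore yields a single sink state, uniform bipartition only constrains colors, which is precisely what allows a color-alternating two-cycle to survive as a sink pair; reworking the \cite{burman2018brief} case analysis to show that no longer absorbing cycle is compatible with a correct bipartition protocol is the first substantial step.

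Given a sink state or a sink pair, the contradiction would be derived exactly along the lines of the proof of Theorem \ref{the:P-1}. Run $P-2$ agents to a strongly-stable configuration $C_t$ using Lemma \ref{lem:strcon}, and consider a population of $P$ agents whose initial configuration agrees with that of the $(P-2)$-agent run on the first $P-2$ agents and places the two extra agents on the sink state (respectively, both on $s_1$ of the sink pair). I would then mirror the $(P-2)$-agent execution on the first $P-2$ agents so that the BS cannot distinguish the two populations, while keeping the two extra agents coupled through homonym interactions on the sink structure. Since the two coupled agents always share a color, the numbers of red and blue agents differ by exactly two in every configuration reached; by the definition of a stable configuration no such configuration can be stable, so this weakly-fair execution never stabilizes, contradicting correctness.

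The hard part will be turning this sketch into a genuinely weakly-fair execution, and it is the main obstacle. Weak fairness demands that every pair of agents interact infinitely often, including each coupled agent with the BS and with each background agent, and every such interaction threatens either to separate the coupled pair or to desynchronize the simulation, exactly as homonym states may also be occupied by background agents. I would control this with a rerouting-and-round-robin device analogous to $Proc(q,q')$ from the proof of Lemma \ref{lem:P-2blue} and to the round-robin fairness construction inside the proof of Theorem \ref{the:P-1}: whenever fairness forces an interaction touching the coupled pair, substitute a harmless homonym interaction that advances the sink dynamics, and distribute the genuinely required interactions across the two coupled agents so that no scheduled pair is starved. The sink-pair case is the more delicate, since the coupled agents are themselves moving between $s_1$ and $s_2$ and each detour must re-synchronize them; verifying that one can always do this while preserving both indistinguishability to the BS and the persistent imbalance of two is where I expect the real work, and where the modification of \cite{burman2018brief} is non-trivial.
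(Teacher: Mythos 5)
Your high-level plan (reduce to the existence of a sink state or a sink pair, then derive a contradiction by adding agents that stay coupled on that structure) matches the paper's outline, and your observation that symmetry forces $(p,p)\rightarrow(g(p),g(p))$ is correct. But the core construction has a genuine gap, and it is exactly the point you flag as ``the hard part'': weak fairness cannot be satisfied by your rerouting device. Weak fairness requires the two extra agents to \emph{actually} interact with the BS and with every background agent infinitely often; you cannot ``substitute a harmless homonym interaction'' for a required pair, since substitution does not make that pair interact. The only known way to realize such interactions without perturbing the simulation is the swap/round-robin device (as in $Proc(q,q')$ and the proof of Theorem \ref{the:P-1}), and that device needs a background agent that holds the \emph{same} state as the extra agent at the swap point, so that the extra agent can take over its role. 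Your extras sit on a loop state of $g$ (a sink state or a sink-pair state), and precisely because it is a loop state it never recurs among the $P-2$ background agents after stabilization --- this non-recurrence is the content of Lemma \ref{lem:sink} and is condition 3 of the paper's definitions of sink state and sink pair. Hence no swap point ever exists. Nor can you let the genuine interactions happen ``raw'': the transition $(m,q)\rightarrow(m',q')$ between the sink state $m$ and a background state $q$ is completely uncontrolled, because no analogue of Lemma \ref{lem:P-2blue2} is available for $m$ (that lemma only constrains transitions among states that occur in the strongly-stable regime, and $m$ does not). Such an interaction can split the coupled pair, change colors, or corrupt the background simulation.

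This is not a repairable technicality; it is the reason the paper's proof forks into two quite different arguments rather than reusing the Theorem \ref{the:P-1} simulation. For a sink pair, the paper runs $P-1$ (not $P-2$) background agents: since only $P-2$ states lie outside the sink pair and each is held by at most one agent in a reduced configuration, pigeonhole forces some background agent to occupy a sink-pair state in the recurring stable reduced configurations, which is what enables the swap with a \emph{single} extra agent (Lemma \ref{lem:impsink2}). For a sink state, even $P-1$ background agents need not ever occupy $m$ (by Lemmas \ref{lem:P-24} and \ref{lem:impsink} the $P-1$ agents can exactly fill the non-sink states), so the simulation approach is abandoned entirely in favor of the reduced-execution argument in which some non-sink state $s$ must disappear from reduced configurations infinitely often (Lemmas \ref{lem:name8}, \ref{lem:name9}, \ref{lem:impsink1}). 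Your proposal treats both cases by the same two-extra-agent simulation and therefore cannot cover the sink-state case at all. A secondary but real gap: your existence argument for the sink structure (``collapse the terminal cycle of $g$ by color constraints'') does not work as stated --- a long cycle alternating colors violates no local color constraint; the paper instead bounds the total number of loop states by two via the counting argument that combines Lemma \ref{lem:sink} with Lemma \ref{lem:strcon}, and your definitions also miss that a sink pair may consist of two separate fixed points of $g$, not only a $2$-cycle.
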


Assume, by way of contradiction, that such algorithm $Alg_{sym}$ exists.
Let $Q_p=\{s_1, \ldots, s_{P}\}$ be a set of states of non-BS agents.
Let $Q_{blue}=\{ s \in Q_p \mid f(s)=blue\}$ be a set of blue states and $Q_{red}=\{ s \in Q_p \mid f(s)=red\}$ be a set of red states.
Without loss of generality, we assume that $|Q_{blue}| \le |Q_{red}|$ holds.

First, we introduce a sink state that is defined in \cite{burman2018brief}.

\begin{definition}
\label{def:reatra}
For states $q$ and $q'$, we say $q \overset{sym}{\rightsquigarrow} q'$ if there exists a sequence of states $q = q_0, q_1, \cdots , q_k = q'$ such that, for any $i(0 \le i < k)$, 
\end{definition}

\begin{definition}
\label{def:reatra}
For state $q$, if $q \overset{sym}{\rightsquigarrow} q$ holds, $q$ is called a loop state.
\end{definition}

\begin{definition}
\label{def:sinks}
State $m$ is a sink state if $m\in Q_p$ satisfies the following conditions:
\begin{enumerate}
\item There exists a transition rule $(m, m) \rightarrow (m, m)$.
\item For any $s \in Q_p$, $s \overset{sym}{\rightsquigarrow} m$ holds.
\item If the number of agents is at most $P-1$, $m$ does not occur infinitely often for any execution.
\end{enumerate}
\end{definition}

In the case of naming protocols \cite{burman2018brief}, the impossibility proof proves that a sink state always exists. However, in the case of uniform bipartition protocols, sometimes no sink state exists. To treat this situation, we additionally define a sink pair, which is a pair of two states that has a similar property of a sink state. We will prove that either a sink state or a sink pair exists.
\begin{definition}
\label{def:sinkp}
A pair of two states $m_1, m_2\in Q_p$ is a sink pair if the following conditions hold:
\begin{enumerate}
\item There exist transition rules (1) $(m_1,m_1) \rightarrow (m_1,m_1)$ and $(m_2,m_2) \rightarrow (m_2,m_2)$ or (2) $(m_1,m_1) \rightarrow (m_2,m_2)$ and $(m_2,m_2) \rightarrow (m_1,m_1)$.
\item For any $s \in Q$, $s \overset{sym}{\rightsquigarrow} m_1$ or $s \overset{sym}{\rightsquigarrow} m_2$ holds.
\item If the number of agents is at most $P-2$, $m_1$ and $m_2$ do not occur infinitely often for any execution.
\end{enumerate}
\end{definition}

Note that, if a sink state exists, a sink pair does not exist, and vice versa.
The following lemma gives an important property to prove the existence of a sink state or a sink pair.

\begin{lemma}
\label{lem:sink}
Let $E = C_0, C_1, C_2, \ldots$ be a weakly-fair execution of $Alg_{sym}$ with $n \le P-2$ agents and an initialized BS. For any loop state $s_r \in Q_p$ (that is, $s_r \overset{sym}{\rightsquigarrow} s_r$ holds), $s_r$ does not occur infinitely often in $E$.
\end{lemma}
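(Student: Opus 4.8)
The plan is to argue by contradiction: assume that $s_r$ occurs infinitely often in $E$, and build from $E$ a weakly-fair execution on $P$ agents whose coloring is imbalanced at infinitely many configurations, contradicting the fact that every weakly-fair execution of $Alg_{sym}$ must contain a stable configuration. Since $Alg_{sym}$ solves the problem, $E$ reaches a stable configuration $C_t$, after which every agent keeps its color. Writing $c=f(s_r)$ and using that only finitely many agents exist, by the pigeonhole principle some fixed agent $a^*$ visits $s_r$ at infinitely many times $t_1<t_2<\cdots$ (all at least $t$); being on $s_r$ at these times and colored consistently after $C_t$, $a^*$ has color $c$ from $C_t$ on.

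First I would enlarge the population. Let $e=P-n\ge 2$ and add $e$ fresh agents, all initialized to $s_r$; the total is exactly $P$, so any execution I build is a legal execution of $Alg_{sym}$. Call $\{a^*\}$ together with the $e$ fresh agents the \emph{pool}. The $n-1$ non-pool original agents will follow $E$ verbatim, while the pool collectively realizes the single trajectory of $a^*$: at any moment exactly one pool agent is \emph{active} and plays the role of $a^*$, and the remaining $e$ agents \emph{rest}. At each time $t_i$ all pool agents can be made to sit on $s_r$ simultaneously (the active one because $a^*$ is on $s_r$ in $E$, the resting ones by construction), so these times serve as checkpoints at which I may relabel which pool agent is active; the relabeling is invisible to the protocol because all pool agents are homonyms there.

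The technical heart, and the step I expect to be the main obstacle, is to realize the resting agents and the role rotation while keeping the whole execution weakly fair and returning the resting agents to $s_r$ before each checkpoint. Interactions between two non-pool originals are inherited from $E$. Interactions inside the pool are produced at the checkpoints, where all pool agents share $s_r$: whenever two resting agents interact I let them traverse the loop $s_r\overset{sym}{\rightsquigarrow}s_r$ together as homonyms, which both furnishes their mutual interaction and brings them back to $s_r$ (this is exactly where the hypothesis that $s_r$ is a loop state is used, and $e\ge 2$ guarantees a partner to cycle with). Interactions between a pool agent and a non-pool original $a_j$ are the delicate ones: since $E$ is weakly fair, $a^*$ meets each $a_j$ infinitely often, and because each segment between consecutive $t_i$ is a finite piece of $E$, these meetings are spread over infinitely many segments; by assigning the active role across the pool under a fair scheduling policy I can force every pool agent to take over infinitely many of the $a^*$--$a_j$ meetings. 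This yields weak fairness of the constructed execution $E''$, with all resting agents back on $s_r$ at every checkpoint.

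Finally I would extract the contradiction from the coloring at the checkpoints. At each $t_i$ the $n-1$ non-pool originals together with the active agent reproduce the stable coloring of $E$, say $E_c$ agents of color $c$ and $E_{\overline{c}}$ of color $\overline{c}$ with $|E_c-E_{\overline{c}}|\le 1$, while the $e$ resting agents all sit on $s_r$ and hence carry color $c$. The signed imbalance is therefore $(E_c-E_{\overline{c}})+e$. This is where parity matters and where the assumption that $P$ is even is essential: if $n$ is even then $E_c=E_{\overline{c}}$ and the imbalance equals $e=P-n\ge 2$; if $n$ is odd then $|E_c-E_{\overline{c}}|=1$, but an odd $n\le P-2$ with $P$ even satisfies $n\le P-3$, so $P-n\ge 3$ and the imbalance is at least $(P-n)-1\ge 2$. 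In every case the imbalance exceeds $1$, so the checkpoint configurations are not stable; since they occur infinitely often, $E''$ never stabilizes. As $E''$ is a weakly-fair execution on $P$ agents, this contradicts the assumption that $Alg_{sym}$ solves the uniform bipartition problem, and the lemma follows.
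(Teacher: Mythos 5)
Your proposal is correct and follows essentially the same route as the paper's proof: both add $P-n$ extra agents initialized to the loop state $s_r$, exploit that homonyms are interchangeable to rotate which pool agent plays the role of the distinguished agent, use the loop $s_r \overset{sym}{\rightsquigarrow} s_r$ to return resting pool agents to $s_r$ while furnishing intra-pool interactions for weak fairness, and conclude with the same parity-based color count (imbalance $P-n\ge 2$ for $n$ even, and $P-n\ge 3$ hence imbalance $\ge 2$ for $n$ odd since $P$ is even). The only cosmetic difference is that the paper chooses its checkpoints so that every pair of agents interacts within each segment, which makes the round-robin role rotation slightly simpler than your fair-scheduling argument, but both are sound.
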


\begin{proof}
The idea of the proof is as follows.
First, for contradiction, we consider an execution $E$ with $n \le P-2$ such that $s_r$ occurs infinitely often in $E$.
Next, we consider an execution with $P$ agents such that all additional agents have $s_r$ as their initial states and other agents behave similarly to $E$.
In the execution, all additional agents do not join the interactions until some other agent has $s_r$. 
At that time, one of non-additional agents has state $s_r$ and additional agents also have state $s_r$. 
We can prove that, from this configuration, non-additional agents cannot recognize the additional agents and hence they make the same behavior as in $E$.
In addition, the additional agents can keep state $s_r$.
Since the numbers of $red$ and $blue$ agents are balanced without the additional agents and the additional agents have the same state, the uniform bipartition problem cannot be solved.

From now on, we show the details of proof. 
For contradiction, we assume that there exists a weakly-fair execution $E$ such that $s_r$ occurs infinitely often in $E$.
First, consider a population $A = \{ a_0, a_1, a_2, \ldots, a_n \}$ of $n\le P-2$ agents and an initialized BS, where $a_0$ is the BS.
Since there exist a finite number of agents, there exists a particular agent $a_x$ that has $s_r$ infinitely often in $E$.
We can define infinite configurations $C_{u_0}, C_{u_1}, \ldots$ and infinite execution segments $e_0, e_1, e_2, \ldots$ of $E$ so that $E = C_{u_0}$, $e_0$, $C_{u_1}$, $e_1$, $C_{u_2}$, $e_2$, $C_{u_3}$, $\ldots$ satisfies the following:
\begin{itemize}
\item For $w \ge 1$, agent $a_x$ has state $s_r$ in $C_{u_w}$.
\item For $j \ge 0$, during execution segment $C_{u_{j}},e_j,C_{u_{j+1}}$, any pair of agents in $A$ interact at least once (this is possible because $E$ is weakly-fair).
\end{itemize}

Next, consider a population $A' = \{a'_0, a'_1, \ldots$, $a'_P\}$ of $P$ agents and an initialized BS, where $a'_0$ is the BS.
We define execution $E'= C'_{u_0}$, $e'_0$, $C'_{v_0}$, $e^m_0$, $C'_{u_1}$, $e'_1$, $C'_{v_1}$, $e^m_1$, $C'_{u_2}$, $e'_2$, $C'_{v_2}$, $\ldots$ as follows:
\begin{itemize}
\item In initial configuration $C'_{u_0}$, $a'_i$ ($n \ge i \ge 0$) has the same state as $a_i$ in $C_0$ and $a'_{n+1},a'_{n+2},\ldots,a'_P$ have state $s_r$. Formally, $s(a'_i, C'_{u_0}) = s(a_i, C_{u_0})$ holds for any $i$ ($n\ge i \ge 0$), and $s(a'_{n+1}, C'_{u_0}) = s(a'_{n+2}, C'_{u_0}) = \cdots = s(a'_{P}, C'_{u_0}) = s_r$ holds.
\item For $j \ge 0$, we construct execution segment $e'_j = \Ch{j}{1}$, $\Ch{j}{2}$, $\Ch{j}{3}$, $\ldots$, $\Ch{j}{z}$ and configuration $C'_{v_j}$ by using $e_j = \C{j}{1}$, $\C{j}{2}$, $\C{j}{3}$, $\ldots$, $\C{j}{z}$ and $C_{u_{j+1}}$, where $z=|e_j|$ holds.
Concretely, we construct $e'_j$ as follows:
\begin{itemize}
\item Case that $j=(P-n+1) \cdot y$ holds for some $y$ ($y\ge 0$). In this case, agents $a'_0,\ldots,a'_n$ interact in execution segment $C'_{u_j},e'_j,C'_{v_{j}}$ similarly to $a_0,\ldots,a_n$ in execution segment $C_{u_j},e_j,C_{u_{j+1}}$. Formally, when $a_g$ and $a_h$ interact at $\C{j}{f} \rightarrow \C{j}{f+1}$ for $z>f>0$ (resp., $C_{u_{j}} \rightarrow \C{j}{1}$ and $\C{j}{z} \rightarrow C_{u_{j+1}}$), $a'_g$ and $a'_h$ interact at $\Ch{j}{f} \rightarrow \Ch{j}{f+1}$ (resp., $C'_{u_{j}} \rightarrow \Ch{j}{1}$ and $\Ch{j}{z} \rightarrow C'_{v_{j}}$).
\item Case that $j=(P-n+1) \cdot y + l$ holds for some $y$ ($y\ge 0$) and $l$ ($P-n \ge l \ge 1$). In this case, $a'_{n+l}$ joins interactions instead of $a'_x$. Note that, in $C'_{u_j}$, both $a'_{n+l}$ and $a'_x$ have state $s_r$. Formally we construct $e'_j$ as follows: 
(1) when $a_g (g \neq x)$ and $a_h (h\neq x)$ interact at $\C{j}{f} \rightarrow \C{j}{f+1}$ for $z>f>0$ (resp., $C_{u_{j}} \rightarrow \C{j}{1}$ and $\C{j}{z} \rightarrow C_{u_{j+1}}$), $a'_g$ and $a'_h$ interact at $\Ch{j}{f} \rightarrow \Ch{j}{f+1}$ (resp., $C'_{u_{j}} \rightarrow \Ch{j}{1}$ and $\Ch{j}{z} \rightarrow C'_{v_{j}}$), 
(2) when $a_x$ interacts with an agent $a_i(i \neq x)$ at $\C{j}{f} \rightarrow \C{j}{f+1}$ for $z>f>0$ (resp., $C_{u_{j}} \rightarrow \C{j}{1}$ and $\C{j}{z} \rightarrow C_{u_{j+1}}$), $a'_{n+l}$ interacts with $a'_i$ at $\Ch{j}{f} \rightarrow \Ch{j}{f+1}$ (resp., $C'_{u_{j}} \rightarrow \Ch{j}{1}$ and $\Ch{j}{z} \rightarrow C'_{v_{j}}$).
\end{itemize}
\item For $j \ge 0$, during execution segment $C'_{v_j},e^m_j,C'_{u_{j+1}}$, agents $a'_x$, $a'_{n+1}$, $a'_{n+2}$, $\ldots$, $a'_P$ interact so that any pair of them interact at least once and eventually they have state $s_r$.
At the first configuration, agents $a'_x$, $a'_{n+1}$, $a'_{n+2}$, $\ldots$, $a'_P$ have state $s_r$.
Since $s_r \overset{sym}{\rightsquigarrow} s_r$ holds, each pair of them can go back to state $s_r$ after some interactions.
Thus, in $C'_{u_j}(j>0)$, agents $a'_x$, $a'_{n+1}$, $a'_{n+2}$, $\ldots$, $a'_P$ have state $s_r$.
\end{itemize}

For $j=(P-n+1) \cdot y$, in execution segment $C'_{u_{j}},e'_j,C'_{v_j}$, agents in $A'- \{a'_{n+1}, a'_{n+2}, \ldots, a'_{P}\}$ interact each other.
For $j=(P-n+1) \cdot y + l$, in execution segment $C'_{u_{j}},e'_j,C'_{v_j}$, agents in $(A'- \{a'_{x}, a'_{n+1}, a'_{n+2}, \ldots, a'_{P}\}) \cup \{ a'_{n+l} \}$ interact each other.
Moreover, $a'_x$, $a'_{n+1}$, $a'_{n+2}$, $\ldots$, $a'_P$ interact each other in $e^m_j$ for $j>0$.
From these facts, execution $E'$ is weakly-fair.

In $E$, let $C_{u_t}$ be a stable configuration such that agent $a_x$ has state $s_r$. 
Let $R_{u_t}$ be a set of $red$ agents in $C_{u_t}$ and let $B_{u_t}$ be a set of $blue$ agents in $C_{u_t}$. Clearly, $||R_{u_t}|-|B_{u_t}||\le 1$ holds.
Now, we consider two cases. One is the case that $n$ is even (i.e., $||R_{u_t}|-|B_{u_t}||=0$ holds). Another is the case that $n$ is odd (i.e., $||R_{u_t}|-|B_{u_t}||=1$ holds).
Note that, in both cases, $s(a_i, C_{u_w}) = s(a'_i, C'_{u_w})$ holds for $n \ge i \ge 0$ and $w \ge 0$, and other $P-n$ agents have state $s_r$ in $C'_{u_w}$ for $w \ge 0$.
Hence, for the number of $f(s_r)$-agents, the difference between $C_{u_w}$ and $C'_{u_w}$ is $P-n$ for any $w \ge t$.

First, we consider the case that $n$ is even.
After $C'_{u_t}$, the number of $f(s_r)$-agents is $P-n \ge 2$ more than the number of $\overline{f(s_r)}$-agents.
Thus, $E'$ never reaches a stable configuration.

Next, we consider the case that $n$ is odd.
Since $n<P-1$ holds and both $P-1$ and $n$ are odd, $n \le P-3$ holds and thus $P-n \ge 3$ holds.
Hence after $C'_{u_t}$, the number of $f(s_r)$-agents is at least two more than the number of $\overline{f(s_r)}$-agents.
Thus, $E'$ never reaches a stable configuration.

Since $E'$ is weakly-fair, this is a contradiction.
\end{proof}

Using Lemma \ref{lem:sink}, Lemma \ref{pro:sink} shows the existence of a sink state or a sink pair.

\begin{lemma}
\label{pro:sink}
In any protocol $Alg_{sym}$, there exists either exactly one sink pair or exactly one sink state.
\end{lemma}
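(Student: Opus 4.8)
The plan is to read the relation $\overset{sym}{\rightsquigarrow}$ as reachability in a functional graph. Because $Alg_{sym}$ is symmetric and deterministic, every transition of the form $(q,q)\rightarrow(p',q')$ must satisfy $p'=q'$, so each state $q$ has a unique same-state successor $g(q)$ with $(q,q)\rightarrow(g(q),g(q))$. Hence $q\overset{sym}{\rightsquigarrow}q'$ means exactly that $q'$ is reachable from $q$ by iterating $g$, and the loop states are precisely the states lying on a cycle of $g$. Since $Q_p$ is finite and nonempty, $g$ has at least one cycle, so at least one loop state exists. I would first record these structural facts, together with the standard observations that in a functional graph every state flows into some cycle and that the number of cycles equals the number of weakly connected components.

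The first real step is to bound the number of loop states by two. Consider a weakly-fair execution with $P-2$ agents. By Lemmas \ref{lem:P-2blue} and \ref{lem:strcon} there is a configuration after which all $P-2$ agents keep pairwise distinct states, and by Lemma \ref{lem:sink} every loop state occurs only finitely often, so after some point no agent is in a loop state. Taking a configuration past both thresholds yields $P-2$ agents occupying $P-2$ distinct non-loop states; thus at least $P-2$ of the $P$ states are non-loop states and at most two are loop states. I would then split on this number. If there is exactly one loop state $m$, it is the unique cycle and must be a self-loop $g(m)=m$ (a single cycle state forces $(m,m)\rightarrow(m,m)$), and the graph is connected so every state reaches $m$; this gives conditions 1 and 2 of a sink state. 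If there are exactly two loop states, they form either two self-loops or one $2$-cycle, which is exactly the transition structure of condition 1 of a sink pair, and every state flows to one of them, giving condition 2. In each case the loop set is determined by the protocol, so the candidate is unique, and the two cases are mutually exclusive (two loop states preclude a single self-loop reachable from everywhere, and one loop state cannot supply a pair), which is what ``exactly one sink pair or exactly one sink state'' requires.

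It remains to verify condition 3. For the sink pair the requirement is stated for $n\le P-2$ and both $m_1,m_2$ are loop states, so it follows immediately from Lemma \ref{lem:sink}. For the sink state the requirement is the stronger $n\le P-1$; the range $n\le P-2$ is again Lemma \ref{lem:sink}, so only the case $n=P-1$ remains, which (as $P$ is even) is odd. I expect this to be the main obstacle. The natural attempt mimics Lemma \ref{lem:sink}: if $m$ occurred infinitely often with $P-1$ agents, add one agent pinned to $m$ to obtain a $P$-agent execution indistinguishable to the original agents. However, adding a single agent shifts the $f(m)$-count by only one, so when $f(m)$ is the \emph{majority} color of the stable $(P-1)$-agent configuration we obtain an imbalance of two and a contradiction, whereas when $f(m)$ is the \emph{minority} color the extra agent merely rebalances the population and the construction yields a legitimately solving execution.

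Resolving this minority subcase is the crux, and I do not expect the plain agent-padding of Lemma \ref{lem:sink} to suffice, since that construction already relies on $P-n\ge 2$ (forcing $n\le P-3$ in the odd case). The key leverage must be that $m$ is the \emph{unique} loop state, so that the majority color owns no loop state whatsoever; I would try to exploit this to show that $m$ simply cannot occur infinitely often with $P-1$ agents in the minority-color situation, presumably via a dedicated counting argument on the majority-color agents (which are confined to non-loop states) or a sub-population argument, rather than by padding a single agent onto $m$. Once this subcase is closed, conditions 1--3 hold in the one-loop-state case and the dichotomy, with uniqueness, is complete.
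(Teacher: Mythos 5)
Your proposal follows the paper's proof almost step for step. The paper likewise works with the set of loop states: for each $q$ it defines $L_{q}=\{ q' \mid q \overset{sym}{\rightsquigarrow} q'$ and $q' \overset{sym}{\rightsquigarrow} q'\}$, notes $L_q\neq\emptyset$ by finiteness, sets $L=\bigcup_q L_q$, bounds $|L|\le 2$ exactly as you do (Lemma \ref{lem:sink} combined with the strongly-stable configurations of Lemma \ref{lem:strcon} for $P-2$ agents), and then splits on $|L|\in\{1,2\}$: condition 1 is derived from the observation that a same-state transition leaving the loop set would create an additional loop state, and condition 2 from the fact that every state flows into some loop state. Your uniqueness and mutual-exclusivity remarks also mirror the paper's.

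The step you call the crux deserves comment, because the paper does not resolve it either. Condition 3 of Definition \ref{def:sinks} requires that $m$ not occur infinitely often for populations of up to $P-1$ agents, but the paper's proof of Lemma \ref{pro:sink} justifies this solely by citing Lemma \ref{lem:sink}, which is stated and proved only for $n\le P-2$ (its construction needs $P-n\ge 2$, and $n\le P-3$ in the odd case, exactly as you observe). Your diagnosis of why single-agent padding cannot close the $n=P-1$ case is also sound: adding one $m$-agent re-balances the population when $f(m)$ is the minority color, and Lemma \ref{lem:impsink} later shows that with $P-1$ agents the $f(m)$-color \emph{is} the minority one in stable configurations, so the problematic subcase is the generic one, not an edge case. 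Fortunately the gap is harmless downstream: Corollaries \ref{cor:P-22} and \ref{cor:P-23}, Lemma \ref{lem:P-24}, and Lemmas \ref{lem:impsink}, \ref{lem:name8}, \ref{lem:name9}, \ref{lem:impsink1}, and \ref{lem:impsink2} only ever invoke Lemma \ref{lem:sink} (hence populations of at most $P-2$ agents) or structural properties of reduced configurations; condition 3 at $n=P-1$ is never used. So the clean repair is not the dedicated counting argument you anticipate, but simply weakening condition 3 of Definition \ref{def:sinks} to ``at most $P-2$ agents,'' matching Definition \ref{def:sinkp}; with that adjustment your argument and the paper's coincide and are both complete, and Theorem \ref{the:sym} is unaffected.
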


\begin{proof}
For $q \in Q_p$, let $L_{q}=\{ q' \mid q \overset{sym}{\rightsquigarrow} q'$ and $q' \overset{sym}{\rightsquigarrow} q'\}$. That is, $L_q$ is a set of loop states such that an agent with state $q$ can transit to the state by interactions of homonyms. For any $q_0\in Q_p$, we can consider a sequence of transition rules $(q_0,q_0) \rightarrow (q_1,q_1), (q_1,q_1) \rightarrow (q_2,q_2) \rightarrow \cdots$. Because the number of states is finite, $q_i=q_j$ holds for some $j>i \ge 1$. Hence, $L_q\neq \emptyset$ holds. 
We define $L$ as $L =  L_{s_1} \cup L_{s_2} \cup L_{s_3} \cup \cdots \cup L_{s_P}$.

From now, we show that $|L| \le 2$ holds by Lemmas \ref{lem:sink} and \ref{lem:strcon}.
Recall that the properties of Lemma \ref{lem:strcon} hold even in a symmetric algorithm $Alg_{sym}$.
By Lemma \ref{lem:sink}, a loop state does not occur infinitely often in any execution with $n < P-1$ agents.
In addition, by Lemma \ref{lem:strcon}, when the number of agents is $P-2$, there exists an execution such that at least $P-2$ states occur infinitely often.
This implies that such $P-2$ states are not loop states.
Thus, the number of loop states is at most two, that is, $|L| \le 2$ holds.

If $|L|=1$ holds, a loop state in $L$ satisfies the conditions 2 and 3 of a sink state in Definition \ref{def:sinks}.
This is because an agent with any state can transit to the loop state in $L$ by interactions with homonyms and the loop state does not occur infinitely often by Lemma \ref{lem:sink}.
For a similar reason, if $|L|=2$ holds, two loop states in $L$ satisfy the conditions 2 and 3 of a sink pair in Definition \ref{def:sinkp}.
Note that, in this case, the two loop states in $L$ are not sink states because they cannot satisfy conditions 1 and 2 of a sink state in Definition \ref{def:sinks} at the same time.

In the following, we show that a loop state (resp., two loop states) in $L$ satisfies the condition 1 of a sink state (resp., a sink pair).

First, we consider the case of $|L|=2$.
Let $m_1$ and $m_2$ be states in $L$.
For contradiction, we assume that there exists $(m_1, m_1) \rightarrow (s, s)$ such that $s \notin \{m_1, m_2\}$ holds.
Since $m_1  \overset{sym}{\rightsquigarrow}  m_1$ holds, $s  \overset{sym}{\rightsquigarrow} s$ holds.
However, by the assumption, there does not exist such $s$ because only $m_1$ and $m_2$ are loop states.
Hence, $(m_1, m_1) \rightarrow (s, s)$ does not exist, and hence either $(m_1, m_1) \rightarrow (m_1, m_1)$ or $(m_1, m_1) \rightarrow (m_2, m_2)$ exists. 
Similarly, $(m_2,m_2) \rightarrow (s,s)$ for $s \notin \{m_1,m_2\}$ does not exist, and hence either $(m_2,m_2) \rightarrow (m_1,m_1)$ or $(m_2,m_2) \rightarrow (m_2,m_2)$ exists.
If both $(m_1, m_1) \rightarrow (m_1, m_1)$ and $(m_2, m_2) \rightarrow (m_1, m_1)$ exist, $m_2  \overset{sym}{\rightsquigarrow}  m_2$ does not hold. 
Similarly, if both $(m_1, m_1) \rightarrow (m_2, m_2)$ and $(m_2, m_2) \rightarrow (m_2, m_2)$ exist, $m_1  \overset{sym}{\rightsquigarrow}  m_1$ does not hold. 
Thus, $m_1$ and $m_2$ satisfy the condition 1 of a sink pair.

Next, we consider the case of $|L|=1$.
Let $m$ be a state in $L$.
For contradiction, we assume that there exists $(m, m) \rightarrow (s, s)$ such that $s \neq m$ holds.
Since $m  \overset{sym}{\rightsquigarrow}  m$ holds, $s  \overset{sym}{\rightsquigarrow} s$ holds.
However, the loop state is only $m$.
This is a contradiction, and $m$ satisfies the condition 1 of a sink state.

Therefore, the lemma holds.
\end{proof}

We introduce a reduced execution that is also defined in \cite{burman2018brief}.
\begin{definition}
In a reduced execution, if homonyms with a non-sink state (resp., neither of the sink pair) occur, they are immediately transited to a sink state (resp., one of the sink pair) by interactions with the homonyms.
This procedure is called reducing. 
\end{definition}
By the condition 2 of the sink state and the sink pair, such reducing is possible.
We say configuration $C$ is reduced if there are no homonyms except agents with a sink state or one of the sink pair.
Note that there exists a reduced weakly-fair execution of $Alg_{sym}$ because any pair of agents can interact in a reduced configuration.

Consider a reduced weakly-fair execution $E$ of $Alg_{sym}$ with $P-2$ agents.
By Lemma \ref{lem:sink}, there exists a stable reduced configuration such that no agent has a sink state or states of the sink pair. Since no two agents have the same non-sink state, we have the following corollaries.

\begin{corollary}
\label{cor:P-22}
When a sink state exists in $Q_p$, either, 
\begin{itemize}
\item the number of non-sink $red$ states is $P/2$ and the number of non-sink $blue$ states is $P/2-1$, or, 
\item the number of non-sink $blue$ states is $P/2$ and the number of non-sink $red$ states is $P/2-1$.
\end{itemize}
\end{corollary}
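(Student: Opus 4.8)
The plan is to read the counts directly off the stable reduced configuration guaranteed in the sentence immediately preceding the corollary. First I would fix a stable reduced configuration $C$ with $P-2$ agents in which no agent holds a sink state (its existence is asserted via Lemma \ref{lem:sink}). By the definition of a reduced configuration, no two agents share a non-sink state; since no agent in $C$ holds the sink state, all $P-2$ agents occupy pairwise distinct non-sink states.

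Next I would invoke stability together with the parity of $P$. Because $C$ is stable and the number of agents is $P-2$, the numbers of $red$ and $blue$ agents differ by at most one. As $P$ is even, $P-2$ is even, so these two numbers are exactly equal, namely $P/2-1$ $red$ agents and $P/2-1$ $blue$ agents. Since the $P/2-1$ $red$ agents sit on distinct $red$ non-sink states, there are at least $P/2-1$ non-sink $red$ states, and symmetrically at least $P/2-1$ non-sink $blue$ states.

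Then I would count the total number of non-sink states. We are in the case that a sink state exists, so by Lemma \ref{pro:sink} there is exactly one sink state and no sink pair; hence exactly one of the $P$ states of $Q_p$ is a sink state and the remaining $P-1$ states are non-sink. Combining, the number of non-sink $red$ states plus the number of non-sink $blue$ states equals $P-1$, while each is at least $P/2-1$. Since $(P/2-1)+(P/2-1)=P-2$, exactly one of the two colors contributes one additional non-sink state. Therefore one color has $P/2$ non-sink states and the other has $P/2-1$, which is precisely the two stated alternatives.

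The argument is essentially a counting exercise, so there is no deep obstacle; the only points that require care are verifying that the cited facts genuinely apply to $C$ — specifically that $C$ contains no sink-state agent, so that all $P-2$ agents really have distinct non-sink states, and that the sink-state case forbids a sink pair, so that the non-sink-state total is exactly $P-1$. These follow from Lemma \ref{lem:sink} and Lemma \ref{pro:sink} respectively, after which the identity $P-1=(P/2-1)+P/2$ closes the proof.
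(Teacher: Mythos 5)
Your proposal is correct and follows essentially the same route as the paper: the paper also derives this corollary from a stable reduced configuration of a weakly-fair reduced execution with $P-2$ agents, in which (by Lemma \ref{lem:sink}) no agent holds the sink state, so all $P-2$ agents occupy distinct non-sink states, and the counting argument with $P/2-1$ agents of each color against $P-1$ non-sink states forces the stated dichotomy. Your explicit appeal to Lemma \ref{pro:sink} to rule out a coexisting sink pair is the same observation the paper makes just before the corollary, so there is no substantive difference.
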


\begin{corollary}
\label{cor:P-23}
When a sink pair exists in $Q_p$, the number of $red$ (resp., $blue$) states not in the sink pair is $P/2-1$ (resp., $P/2-1$).
\end{corollary}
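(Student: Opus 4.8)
The plan is a direct counting argument that exploits the stable reduced configuration already guaranteed by the discussion preceding this corollary. Let $\{m_1, m_2\}$ denote the sink pair (Definition \ref{def:sinkp}), and let $C$ be the stable reduced configuration with $P-2$ agents supplied by Lemma \ref{lem:sink}, in which no agent holds a state of the sink pair. Since $C$ is reduced and no agent is in $m_1$ or $m_2$, the configuration contains no homonyms at all; hence all $P-2$ agents hold pairwise distinct states, none of which lies in $\{m_1, m_2\}$.

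First I would count the non-sink-pair states. Because $|Q_p| = P$ and the sink pair removes exactly two states, the set $Q_p \setminus \{m_1, m_2\}$ has exactly $P-2$ elements. The $P-2$ agents in $C$ hold $P-2$ distinct states all drawn from this set of size $P-2$, so the assignment of agents to states is a bijection: every non-sink-pair state is occupied by exactly one agent in $C$. Next I would invoke stability together with the parity of $P$. Since $P$ is even, $P-2$ is even, and a stable configuration for $P-2$ agents must split them into $(P-2)/2 = P/2-1$ red agents and $P/2-1$ blue agents. Because the color of an agent is by definition the $f$-value of its state, each red agent occupies a red non-sink-pair state and each blue agent a blue one. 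Combining this with the bijection, the occupied red states are precisely all red states outside the sink pair and the occupied blue states are precisely all blue states outside the sink pair; hence there are exactly $P/2-1$ red states and $P/2-1$ blue states not in the sink pair, as claimed.

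I expect no real obstacle in the corollary itself, since it reduces to this counting once the right configuration is in hand. The only point that genuinely needs care is establishing that a stable reduced configuration containing no sink-pair states exists, and this is exactly what Lemma \ref{lem:sink} provides: the members of a sink pair are loop states, so they cannot occur infinitely often in any execution with at most $P-2$ agents, and a reduced stable configuration therefore avoids them entirely. Given that configuration, translating the "no homonyms" property of a reduced configuration into the distinctness of all $P-2$ agent states is the one step to state explicitly, after which the count follows immediately.
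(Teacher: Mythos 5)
Your proof is correct and follows essentially the same route as the paper: the paper derives this corollary in one line from the existence (via Lemma \ref{lem:sink} and a reduced weakly-fair execution with $P-2$ agents) of a stable reduced configuration avoiding the sink pair, together with the observation that no two agents share a non-sink-pair state there. Your write-up simply makes explicit the bijection between the $P-2$ agents and the $P-2$ states outside the sink pair and the parity count $(P-2)/2 = P/2-1$, which is exactly the counting the paper leaves implicit.
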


Moreover, Corollary \ref{cor:P-22} can be extended to the following lemma.
\begin{lemma}
\label{lem:P-24}
When a sink state $m$ exists in $Q_p$, the number of non-sink $f(m)$-states is $P/2-1$ and the number of non-sink $\overline{f(m)}$-states is $P/2$.
This implies that the number of $f(m)$-states is $P/2$ and the number of $\overline{f(m)}$-states is also $P/2$.
\end{lemma}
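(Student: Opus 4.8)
The plan is to settle the lemma by a single counting argument carried out in a reduced weakly-fair execution on the full $P$ agents. Corollary \ref{cor:P-22} already tells us that, among the $P-1$ non-sink states, one color contributes $P/2$ states and the other contributes $P/2-1$; what is left open is \emph{which} color gets which count. I would show that the color receiving only $P/2-1$ non-sink states is exactly $f(m)$, the color of the sink state. Since there are $P$ states in total and the single sink state $m$ carries color $f(m)$, this assignment immediately yields $P/2$ states of each color, which is the second assertion.

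To pin down the assignment, I would consider a reduced weakly-fair execution $E$ of $Alg_{sym}$ with $P$ agents and an initialized BS. Such an execution exists, and since $Alg_{sym}$ solves the uniform bipartition it contains a stable configuration $C_s$; because every configuration of a reduced execution is reduced, $C_s$ is itself reduced. As $P$ is even, a stable configuration on $P$ agents is perfectly balanced, so $C_s$ has exactly $P/2$ agents of color $\overline{f(m)}$. Now every agent in the sink state $m$ has color $f(m)$, so all $P/2$ agents of color $\overline{f(m)}$ must occupy non-sink states. In a reduced configuration no two agents share a non-sink state, hence these $P/2$ agents occupy $P/2$ distinct non-sink $\overline{f(m)}$-states. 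Therefore the number of non-sink $\overline{f(m)}$-states is at least $P/2$, and combined with Corollary \ref{cor:P-22} (which forces this count to be $P/2$ or $P/2-1$) it must equal $P/2$.

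Consequently the number of non-sink $f(m)$-states is $(P-1)-P/2 = P/2-1$, which is the first assertion of the lemma. Adding the sink state $m$ back in (it has color $f(m)$), the total number of $f(m)$-states is $(P/2-1)+1 = P/2$ while the total number of $\overline{f(m)}$-states remains $P/2$, giving the second assertion.

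The one delicate point I expect to be the crux is guaranteeing that the configuration being counted is simultaneously stable, reduced, and on exactly $P$ agents, so that the exact balance $|G_1|=|G_2|=P/2$ and the ``at most one agent per non-sink state'' property hold at the same moment. I would therefore take care to invoke the existence of a reduced weakly-fair execution, to note that correctness forces such an execution to pass through a stable configuration, and to use that reducedness is preserved along the whole execution. No appeal to Lemma \ref{lem:sink} is needed here: the entire force of the argument comes from the fact that the minority color $\overline{f(m)}$ cannot hide in the sink state and is thus confined to distinct non-sink states.
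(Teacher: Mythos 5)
Your proposal is correct and takes essentially the same route as the paper: the paper's proof also combines Corollary~\ref{cor:P-22} with a stable reduced configuration of a reduced weakly-fair execution on $P$ agents and the fact that each non-sink state is held by at most one agent in a reduced configuration, merely organizing the count as a contradiction (assuming $P/2$ non-sink $f(m)$-states forces at most $P/2-1$ agents of color $\overline{f(m)}$, contradicting balance) rather than directly. One small repair to your crux: under the paper's definition a reduced execution may momentarily contain non-reduced configurations (homonyms are created and then immediately reduced), so the stable configuration $C_s$ need not itself be reduced; instead, as the paper notes, reduced configurations occur infinitely often after the stable configuration, and since stability fixes every agent's color in all subsequently reachable configurations, one of these later configurations is simultaneously stable, reduced, and on $P$ agents, which is exactly where your counting argument should be carried out.
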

\begin{proof}
By corollary \ref{cor:P-22}, the number of non-sink $f(m)$-states is at least $P/2-1$ and at most $P/2$.
Hence, for contradiction, we assume that the number of non-sink $f(m)$-states is $P/2$.
This implies that the number of $f(m)$-states (including the sink state $m$) is $P/2+1$ and the number of $\overline{f(m)}$-states is $P/2-1$.

Now, we consider a reduced weakly-fair execution $E$ of $Alg_{sym}$ with $P$ agents and an initialized BS.
In $E$, after a stable configuration, a reduced configuration occurs infinitely often.
In a reduced configuration, each non-sink state is held by at most one agent.
Thus, since all of $\overline{f(m)}$-states are non-sink states and the number of them is $P/2-1$, in a stable reduced configuration the number of $\overline{f(m)}$-agents is at most $P/2-1$ and the number of $f(m)$-agents is at least $P/2+1$.
This is a contradiction.
\end{proof}

Subsequently, we show that, when a sink state exists and the number of agents is $P-1$, the number of $f(m)$-agents is less than the number of $\overline{f(m)}$-agents in a stable configuration.

\begin{lemma}
\label{lem:impsink}
When a sink state $m$ exists in $Q_p$, for any reduced weakly-fair execution $E = C_0$, $C_1$, $C_2$, $\ldots$, $C_t$, $\ldots$ of $Alg_{sym}$ with $P-1$ agents and an initialized BS, the following is satisfied in a stable configuration $C_t$ of $E$.
\begin{itemize}
\item The number of $f(m)$-agents is $P/2-1$ and the number of $\overline{f(m)}$-agents is $P/2$.
\end{itemize}

\end{lemma}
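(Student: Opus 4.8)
The plan is to prove the claim by contradiction, exploiting that $P-1$ is odd. First I would record the combinatorial constraints forced on a stable configuration with $P-1$ agents. Because $Alg_{sym}$ solves uniform bipartition and $P-1$ is odd, in the stable configuration $C_t$ the two colors differ by exactly one, so one color has $P/2$ agents and the other has $P/2-1$. By Lemma \ref{lem:P-24} there are exactly $P/2-1$ non-sink $f(m)$-states and $P/2$ non-sink $\overline{f(m)}$-states, and the sink state $m$ is the only state that several agents may share in a reduced configuration. It therefore suffices to show that $f(m)$ is the minority color; the stated counts then follow immediately, since the $\overline{f(m)}$-agents must exactly fill the $P/2$ non-sink $\overline{f(m)}$-states.

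So assume, for contradiction, that $f(m)$ is the majority color, i.e. the number of $f(m)$-agents equals $P/2$. As in the proof of Lemma \ref{lem:P-24}, reduced configurations recur infinitely often after $C_t$, and at each of them every non-sink state is occupied by at most one agent. Since there are only $P/2-1$ non-sink $f(m)$-states but $P/2$ $f(m)$-agents, a pigeonhole argument forces at least one $f(m)$-agent onto the sink state $m$ in every such configuration. As the colors are frozen after $C_t$, this shows that $m$ occurs infinitely often in $E$, so some fixed non-BS agent $a_x$ carries state $m$ in infinitely many configurations.

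Next I would build a weakly-fair execution $E'$ of $Alg_{sym}$ on $P$ agents, reusing the swapping construction from the proof of Lemma \ref{lem:sink} but with a single additional agent $a'_P$ (the boundary case $P-n=1$ with $n=P-1$). The agents $a'_0,\dots,a'_{P-1}$ mirror $E$, while $a'_P$ is initialized to $m$ and kept there through the transition $(m,m)\to(m,m)$ guaranteed by condition 1 of a sink state. To secure weak fairness I would alternate the role of the ``real'' $m$-carrier between $a'_x$ and $a'_P$ at the checkpoints where $a_x$ has state $m$: in even rounds $a'_0,\dots,a'_{P-1}$ copy $E$ and $a'_P$ stays idle, in odd rounds $a'_P$ takes over all of $a_x$'s interactions while $a'_x$ idles, and short segments in which only $a'_x$ and $a'_P$ interact return both to $m$. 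Since $a_x$ meets every agent infinitely often in $E$, this makes $a'_P$ meet every agent infinitely often, so $E'$ is weakly fair.

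Finally I would compare the colors. The $P-1$ mirrored agents keep exactly the colors they have in $E$, so after the copy of $C_t$ there are $P/2$ $f(m)$-agents among them; together with $a'_P$, which is an $f(m)$-agent at every checkpoint, $E'$ exhibits $P/2+1$ $f(m)$-agents and only $P/2-1$ $\overline{f(m)}$-agents at infinitely many configurations. A color imbalance of two can never be matched by a stable configuration of $P$ (even) agents, so $E'$ contains no stable configuration, contradicting that $Alg_{sym}$ solves uniform bipartition under weak fairness. Hence $f(m)$ is the minority color, and the desired counts hold. I expect the main obstacle to be the construction of $E'$: re-running the Lemma \ref{lem:sink} swap for the boundary case $n=P-1$ while verifying both weak fairness and, crucially, that adding the extra $f(m)$-agent to an already $f(m)$-majority population pushes the imbalance to two rather than merely one.
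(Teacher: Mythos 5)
Your proposal is correct and takes essentially the same route as the paper's own proof: the paper likewise assumes for contradiction that the $f(m)$-agents number $P/2$, uses Lemma \ref{lem:P-24} together with reducedness to pin a fixed agent onto the sink state at infinitely many (in fact identical, recurring) stable reduced configurations, and then runs the Lemma \ref{lem:sink}-style alternating swap with one extra agent held at $m$ via $(m,m)\rightarrow(m,m)$, producing a color imbalance of two among $P$ agents in a weakly-fair execution. The only point to tighten is the one you already flagged: weak fairness of $E'$ needs the checkpoints chosen so that every pair of agents of $A$ interacts between consecutive checkpoints (as the paper does with its recurring configuration $C_{t'_u}$), since this is what guarantees that $a'_x$, and not only $a'_P$, meets every other agent infinitely often.
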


\begin{proof}
The idea of the proof is as follows.
For contradiction, we consider a reduced execution $E$ with $P-1$ agents such that the number of $f(m)$-agents is $P/2$ and the number of $\overline{f(m)}$-agents is $P/2-1$ in a stable configuration of $E$.
Note that, by Lemma \ref{lem:P-24}, the number of $f(m)$-states is $P/2$. Thus, since the stable configuration is reduced, some agent has $m$ in the stable configuration.
Next, consider an execution with $P$ agents such that one additional agent has $m$ as an initial state and other agents behave similarly to $E$.
In the execution, the additional agent does not join the interactions until $P-1$ agents converge to a stable configuration in $E$. 
At that time, one of the $P-1$ agents has state $m$ and the additional agent also has state $m$. 
We can prove that, from this configuration, $P-1$ agents cannot recognize the additional agent and hence they make the same behavior as in $E$.
In addition, the additional agent can keep state $m$.
Since the additional agent has $f(m)$-state, the number of $f(m)$-agents is $P/2+1$ and the number of $\overline{f(m)}$-agents is $P/2-1$. 
This implies that the uniform bipartition problem cannot be solved.

From now, we show the details of proof.
For contradiction, there exists a reduced weakly-fair execution $E$ such that the number of $f(m)$-agents is $P/2$ and the number of $\overline{f(m)}$-agents is $P/2-1$ in a stable configuration of $E$.

First, consider a population $A = \{ a_0, a_1, a_2, \ldots, a_{P-1} \}$ of $P-1$ agents and an initialized BS, where $a_0$ is the BS.
We define a reduced weakly-fair execution $E = C_0$, $C_1$, $C_2$, $\ldots$, $C_t$, $\ldots$, $C_{t'_0}, e_{1}$, $C_{t'_1}, e_{2}$, $C_{t'_2}, e_{3}$, $\ldots$ of $Alg_{sym}$ with $A$ as follows.
\begin{itemize}
\item $C_t$ is a stable configuration.
\item For any $u \ge 0$, $C_{t'_u}$ is a particular stable reduced configuration such that $C_{t'_0} = C_{t'_1} = C_{t'_2} = \cdots$ holds. Note that such a configuration (i.e., a stable reduced configuration that appears infinite number of times) exists because the number of possible configurations is finite.
\item For $j > 0$, $e_{j}$ is an execution segment such that, during execution segment $C_{t'_{j-1}},e_{j},C_{t'_j}$, any pair of agents in $A$ interact at least once. This is possible because $E$ is weakly-fair.
\end{itemize}
By Lemma \ref{lem:P-24}, the number of $f(m)$-states is $P/2$.
In addition, by the assumption, the number of $f(m)$-agents is also $P/2$ in $C_{t'_u}$ for any $u$.
From these facts, for any $u$, since $C_{t'_u}$ is a reduced configuration (i.e., there exist no homonyms except $m$), there exists a particular agent $a_q$ that has state $m$ in $C_{t'_u}$ for any $u$.

Next, consider a population $A' = \{a'_0, a'_1, \ldots$, $a'_P\}$ of $P$ agents and an initialized BS, where $a'_0$ is the BS.
We define $E' = C'_0$, $C'_1$, $C'_2$, $\ldots$, $C'_t$, $\ldots$, $C'_{t'_0}, e'_{1}$, $C'_{t'_1}, e'_{2}$, $C'_{t'_2}, e'_{3}$, $\ldots$ by using $E$.
First, we define the first part of $E'$, that is, $C'_0$, $C'_1$, $C'_2$, $\ldots$, $C'_t$, $\ldots$, $C'_{t'_0}$ as follows:
\begin{itemize}
\item In initial configuration $C'_0$, $a'_0,\ldots,a'_{P-1}$ have the same states as $a_0,\ldots,a_{P-1}$ in $C_0$ and $a'_P$ has state $m$. Formally, $s(a'_i, C'_0) = s(a_i, C_0)$ holds for $i$ ($P-1\ge i\ge 0$), and $s(a'_P, C'_0) = m$ holds.
\item From $C'_0$ to $C'_{t'_0}$, $a'_0,\ldots,a'_{P-1}$ interact similarly to $a_0,\ldots,a_{P-1}$ in $E$. Formally, for any $u$($t'_0>u\ge 0$), when $a_g$ and $a_h$ interact at $C_u \rightarrow C_{u+1}$, $a'_g$ and $a'_h$ interact at $C'_u \rightarrow C'_{u+1}$.
\end{itemize}

Clearly, $s(a'_i, C'_{t'_0}) = s(a_i, C_{t'_0})$ holds for $i$ ($P-1 \ge i \ge 0$) and $s(a'_P, C'_{t'_0}) = m$ hold.
This implies that the number of $f(m)$-agents is $P/2+1$ and the number of $\overline{f(m)}$-agents is $P/2-1$ in $C'_{t'_0}$.

We define the remaining part of $E'$, that is, $C'_{t'_0}, e'_{1}$, $C'_{t'_1}, e'_{2}$, $C'_{t'_2}, e'_{3}$, $\ldots$ as follows:
\begin{itemize}
\item For $j > 0$, we construct an execution segment $e'_j = \Ch{j}{1}$, $\Ch{j}{2}$, $\Ch{j}{3}$, $\ldots$, $\Ch{j}{z}$, $\Ch{j}{z+1}$ by using $e_{j} = \C{j}{1}$, $\C{j}{2}$, $\C{j}{3}$, $\ldots$, $\C{j}{z}$ , where $z=|e_{j}|$ holds.
Concretely, we construct $C'_{t'_{j-1}}$, $e'_{j}$, $C'_{t'_j}$ as follows:
\begin{itemize}
\item Case that $j$ is even. In this case, agents $a'_0,\ldots,a'_{P-1}$ interact in execution segment $C'_{t'_{j-1}},e'_j$ similarly to $a_0,\ldots,a_{P-1}$ in execution segment $C_{t'_{j-1}},e_j,C_{t'_j}$. Formally, when $a_g$ and $a_h$ interact at $\C{j}{f} \rightarrow \C{j}{f+1}$ for $z>f>0$ (resp., $C_{t'_{j-1}} \rightarrow \C{j}{1}$ and $\C{j}{z} \rightarrow C_{t'_j}$), $a'_g$ and $a'_h$ interact at $\Ch{j}{f} \rightarrow \Ch{j}{f+1}$ (resp., $C'_{t'_{j-1}} \rightarrow \Ch{j}{1}$ and $\Ch{j}{z} \rightarrow \Ch{j}{z+1}$).
\item Case that $j$ is odd. In this case, $a'_P$ joins interactions instead of $a'_q$. Note that, in $C'_{t'_{j-1}}$, both $a'_P$ and $a'_q$ have state $m$. Formally we construct $e'_j$ as follows:
(1) when $a_g(g\neq q)$ and $a_h(h\neq q)$ interact at $\C{j}{f} \rightarrow \C{j}{f+1}$ for $z>f>0$ (resp., $C_{t'_{j-1}} \rightarrow \C{j}{1}$ and $\C{j}{z} \rightarrow C_{t'_j}$), $a'_g$ and $a'_h$ interact at $\Ch{j}{f} \rightarrow \Ch{j}{f+1}$ (resp., $C'_{t'_{j-1}} \rightarrow \Ch{j}{1}$ and $\Ch{j}{z} \rightarrow \Ch{j}{z+1}$), 
(2) when $a_q$ interacts with an agent $a_i(i\neq q)$ at $\C{j}{f} \rightarrow \C{j}{f+1}$ (resp., $C_{t'_{j-1}} \rightarrow \C{j}{1}$ and $\C{j}{z} \rightarrow C_{t'_j}$), $a'_{P}$ interacts with $a'_i$ at $\Ch{j}{f} \rightarrow \Ch{j}{f+1}$ (resp., $C'_{t'_{j-1}} \rightarrow \Ch{j}{1}$ and $\Ch{j}{z} \rightarrow \Ch{j}{z+1}$).
\item $a'_P$ and $a'_q$ interact at $\Ch{j}{z+1} \rightarrow C'_{t'_{j}}$.
\end{itemize}
\end{itemize}

We can inductively show that, for any $x\ge 0$, $s(a'_i, C'_{t'_x}) = s(a_i, C_{t'_x})$ holds for any $i$ ($P-1 \ge i \ge 0$) and $s(a'_q, C'_{t'_x}) = s(a'_P, C'_{t'_x}) = s(a_q, C_{t'_x}) = m $ holds. 
Clearly this holds for $x=0$. Assume that this holds for $x=j$, and consider the case of $x=j+1$. 
When $j+1$ is even, during execution segment $C'_{t'_{j}}$, $e'_{j+1}$, agents in $A'-\{a'_P \}$ interact similarly to $C_{t'_{j}}$, $e_{j+1}$, $C_{t'_{j+1}}$.
Hence, for any $j > 0$, $s(a'_i, \Ch{j+1}{z+1}) = s(a_i, C_{t'_{j+1}})$ holds for any $i$ ($P-1\ge i\ge 0$) and $s(a'_q, \Ch{j+1}{z+1}) = s(a'_P, \Ch{j+1}{z+1}) = s(a_q, C_{t'_{j+1}}) = m $ holds.
At $\Ch{j+1}{z+1} \rightarrow C'_{t'_{j+1}}$, $a'_P$ and $a'_q$ interact and hence, if they have state $m$, they do not change their states.
Thus, $s(a'_i, C'_{t'_{j+1}}) = s(a_i, C_{t'_{j+1}})$ holds for any $i$ ($P-1 \ge i \ge 0$) and $s(a'_q, C'_{t'_{j+1}}) = s(a'_P, C'_{t'_{j+1}}) = s(a_q, C_{t'_{j+1}}) = m $ holds.
When $j+1$ is odd, during execution segment $C'_{t'_{j}}$, $e'_{j+1}$, $a'_P$ joins interactions instead of $a'_q$ and agents in $A'-\{a'_q \}$ behave similarly to $C_{t'_{j-1}}$, $e_j$, $C_{t'_j}$.
Hence, for any $j > 0$, $s(a'_i, \Ch{j+1}{z+1}) = s(a_i, C_{t'_{j+1}})$ holds for any $i$ ($P-1\ge i\ge 0$) and $s(a'_q, \Ch{j+1}{z+1}) = s(a'_P, \Ch{j+1}{z+1}) = s(a_q, C_{t'_{j+1}}) = m $ holds.
At $\Ch{j+1}{z+1} \rightarrow C'_{t'_{j+1}}$, $a'_P$ and $a'_q$ interact and hence, if they have state $m$, they do not change their states.
Thus, $s(a'_i, C'_{t'_{j+1}}) = s(a_i, C_{t'_{j+1}})$ holds for any $i$ ($P-1 \ge i \ge 0$) and $s(a'_q, C'_{t'_{j+1}}) = s(a'_P, C'_{t'_{j+1}}) = s(a_q, C_{t'_{j+1}}) = m $ holds.

For any $j$, the number of $f(m)$-agents is $P/2$ and the number of $\overline{f(m)}$-agents is $P/2-1$ in $C_{t'_j}$, and thus, the number of $f(m)$-agents is $P/2+1$ and the number of $\overline{f(m)}$-agents is $P/2-1$ in $C'_{t'_j}$.
Therefore, $E'$ cannot solve the uniform bipartition.
During $C'_{t'_{j-1}}$, $e'_j$, when $j$ is even, agents in $A'-\{a'_{P}\}$ interact each other.
Similarly,  when $j$ is odd, agents in $A'-\{a'_q\}$ interact each other.
Moreover, for $j > 0$, at $\Ch{j}{z+1} \rightarrow C'_{t'_{j}}$, $a'_q$ and $a'_P$ interact.
Thus, $E'$ is weakly-fair.
Since $E'$ cannot solve the uniform bipartition, this is a contradiction.
\end{proof}


By using these lemmas, we show that, when a sink state exists in $Q_p$, $Alg_{sym}$ does not work.
We prove this in a similar way to the case of naming protocols in \cite{burman2018brief}, but we need a non-trivial modifications to apply the proof to uniform bipartition protocols.

\begin{definition}
Assume that a sink state $m$ exists in $Q_p$.
Consider configurations $C_0$ and $C_1$ for a population $A$.
We say that $C_0$ is far away from $C_1$ by a non-sink state $s \neq m$ if there exists an agent $a_x$ such that $s(a_y, C_0) = s(a_y, C_1)$ for any $a_y\in A-\{a_x\}$, $s(a_x, C_0) = m$, and $s(a_x, C_1) = s \neq m$ hold.
Then, we denote $C_0$ as $C^{-s}_1$ and denote $C_1$ as $C^{+s}_0$.
\end{definition}

Here we introduce the notion of equivalent configurations. We say that configurations $C$ and $C'$ are equivalent if a multi-set of states in $C$ is identical to that in $C'$.

\begin{lemma}
\label{lem:name8}
Assume that a sink state $m$ exists in $Q_p$. Consider an execution segment $e = C_0, C_1, C_2, \ldots, C_k$ of $Alg_{sym}$ with $P$ agents and an initialized BS, that satisfies the following conditions:
\begin{itemize}
\item $e$ is a reduced execution segment.
\item $C_0$ is reduced.
\item There exists a non-sink state $s$ such that, in any reduced configuration of $e$ except the last configuration $C_k$, no agent has state $s$.
\end{itemize}
Then, there exists the execution segment $e'= C'_0, C'_1, C'_2, \ldots, C'_k$ of $Alg_{sym}$ with $P$ agents and an initialized BS, that satisfies the following conditions:
\begin{itemize}
\item A particular agent $a_x$ with $m$ does not join interactions.
\item $C_0=C'_0$ holds.
\item For any $i$ ($0<i\le k$), $C'_i$ and $C_i$ are equivalent.
\end{itemize}
\end{lemma}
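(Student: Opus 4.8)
The plan is to build $e'$ by replaying the interactions of $e$ while keeping one fixed sink agent $a_x$ permanently idle, and to prove by induction on $i$ that the replayed prefix $C'_0,\ldots,C'_i$ is equivalent to $C_0,\ldots,C_i$ (identical multisets of states, written $\equiv$) with $a_x$ holding state $m$ throughout. The only event that could force $a_x$ to move is an interaction of $e$ that consumes an agent in state $m$ at a moment when $a_x$ happens to be the unique agent in that state; hence the whole argument reduces to guaranteeing that a \emph{spare} $m$-agent, distinct from $a_x$, is available precisely when it is needed.

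First I would isolate the counting fact that drives everything. Since $|Q_p|=P$ and, in this case, $m$ is the unique sink state (Definition \ref{def:sinks}), a reduced configuration holds at most one agent in each of the $P-1$ non-sink states. In every reduced configuration of $e$ other than $C_k$ the state $s$ is absent, so at most $P-2$ non-sink states occur there, and therefore at least two of the $P$ agents are in state $m$. This both justifies choosing $a_x$ to be an $m$-agent of the reduced configuration $C_0$ and guarantees that at every reduced configuration strictly preceding $C_k$ there is at least one further $m$-agent besides $a_x$.

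Next I would run the induction, classifying each step $C_i\to C_{i+1}$. A reducing step interacts two homonyms sharing a common non-sink state $r\neq m$; since $a_x$ holds $m\neq r$, I reproduce it verbatim on the two (non-$a_x$) representatives of $r$, which exist in $C'_i$ because $C'_i\equiv C_i$. A genuine interaction can occur only out of a reduced configuration (the reducing discipline forbids acting while non-sink homonyms remain), so for $i<k$ this configuration is $s$-free and carries at least two $m$-agents. If the rule $(p,q)\to(p',q')$ involves no $m$, I apply it to representatives of $p$ and $q$ in $C'_i$, all automatically distinct from $a_x$; if it involves $m$ — necessarily a single $m$, because the sink property forces $(m,m)\to(m,m)$ — I use the spare non-$a_x$ $m$-agent supplied by the counting fact. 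In every case applying the same rule turns the multiset of $C'_i$ into that of $C_{i+1}$, so $C'_{i+1}\equiv C_{i+1}$ while $a_x$ stays an untouched $m$-agent. Iterating to $i=k$ yields $e'$, in which $a_x$ never interacts, $C'_0=C_0$, and $C'_i\equiv C_i$ for all $i$.

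The main obstacle is the scheduling bookkeeping: confirming that every $m$-consuming interaction of $e$ genuinely sits at a reduced configuration so that the ``at least two $m$-agents'' bound applies and $a_x$ is never the sole $m$-agent on hand. The one genuinely delicate point is the degenerate null transition $(m,m)\to(m,m)$ forced by the sink property: it creates no state, so when $e$ performs it I reproduce the step by a state-preserving interaction among the remaining active agents — using two non-$a_x$ $m$-agents when available — keeping $|e'|=|e|$ without disturbing the multiset or $a_x$; since such a null step conveys no information, it can never be the step that first produces $s$, and so it never interferes with maintaining the invariant up to $C_k$.
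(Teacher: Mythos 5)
Your construction follows the paper's proof in every essential respect: the same counting fact (an $s$-free reduced configuration of $P$ agents leaves at most $P-2$ non-sink states occupied, hence contains at least two agents in the sink state $m$), the same induction maintaining multiset equivalence between $C'_i$ and $C_i$ while a fixed agent $a_x$ stays idle in state $m$, the same replaying of each interaction on representatives distinct from $a_x$, and the same use of a spare $m$-agent $a_y \neq a_x$ when an interaction of $e$ involves a single agent in state $m$. Your observation that non-reducing interactions can occur only at reduced configurations, so that the counting fact is available exactly when a sink-state agent is involved, is also the paper's.

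The one place you deviate is the degenerate step $(m,m)\rightarrow(m,m)$, and there your argument has a flaw. You insist on keeping $|e'|=|e|$ by reproducing this step with ``a state-preserving interaction among the remaining active agents,'' using two non-$a_x$ $m$-agents \emph{when available}. But the counting fact guarantees only two $m$-agents in total, so it can happen that the only $m$-agents of $C'_l$ are $a_x$ and a single spare $a_y$; then no pair of non-$a_x$ homonyms in state $m$ exists, and since $C'_l$ is reduced, every other pair of agents (including pairs containing $a_y$) holds distinct states, whose transition rules may all be non-null. Hence the state-preserving interaction you invoke need not exist, and this padding step can fail. The paper resolves exactly this corner by simply \emph{skipping} the null step (regarding $C'_l$ as $C'_{l+1}$), conceding in a footnote that one must then delete the duplicated configuration to obtain a legal execution --- i.e., it settles for an $e'$ that may be shorter than $e$ and argues that this does not affect the later proofs. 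Your proof becomes correct if you adopt the same convention (equivalently, weaken the indexing claim to equivalence up to deletion of null steps); as written, the assertion that a length-preserving, multiset-preserving, $a_x$-avoiding interaction always exists is unjustified.
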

\begin{proof}
In a reduced configuration with $P$ agents, if there exists a non-sink state that is held by no agent, there are at least two agents with a sink state.
Hence, in any reduced configuration of $e$ except the last configuration $C_k$, there exist at least two agents with a sink state.
By using this property, we construct $e'$ by induction on the index of configuration.
First, since we can set the initial configuration $C'_0$ such that $C'_0 = C_0$ holds, the base case holds.

For the induction step, assume that there exists a configuration $C'_l (l \ge 0)$ such that $C'_u$ and $C_u$ are equivalent for any $u\le l$ and $a_x$ does not join interactions until $C'_l$ (i.e., $a_x$ has a sink state in $C'_l$). 
We consider two cases of interaction at $C_l \rightarrow C_{l+1}$.

First, we consider the case that an agent with a sink state does not join an interaction at $C_l \rightarrow C_{l+1}$.
In this case, since $C_l$ and $C'_l$ are equivalent and $a_x$ has a sink state in $C'_l$, a state transition that happens at $C_l \rightarrow C_{l+1}$ can happen at $C'_l \rightarrow C'_{l+1}$. Thus, the lemma holds at $C'_{l+1}$.

Next, we consider the case that an agent with a sink state interacts at $C_l \rightarrow C_{l+1}$.
In the case, $C_l$ and $C'_l$ are reduced.
By the assumption, in $C_l$ and $C'_l$, there are at least two agents with a sink state.
Let $a_y \neq a_x$ be an agent that has a sink state in $C'_l$.
Then, when agents $a_i$ and $a_j$ interact at $C_l \rightarrow C_{l+1}$, we consider the following two cases.
\begin{itemize}
\item Case that $a_i$ and $a_j$ have a sink state:
In this case, $C_l = C_{l+1}$ holds.
Hence, we skip this interaction and regard $C'_l$ as $C'_{l+1}$\footnote{Strictly speaking, this violates the definition of an execution because no interaction happens at $C'_l\rightarrow C'_{l+1}$. However, by removing $C'_{l+1}$ from $e'$, we can construct execution $e'$ that satisfies the definition of an execution. This modification does not effect the following proofs.}. Clearly, $C'_{l+1}$ and $C_{l+1}$ are equivalent.

\item Case that either $a_i$ or $a_j$ has a sink state:
Without loss of generality, we assume that $a_i$ has a sink state.
In this case, by making interaction between $a_y$ and $a_j$ at $C'_l \rightarrow C'_{l+1}$, we can obtain $C'_{l+1}$ such that $C'_{l+1}$ and $C_{l+1}$ are equivalent.
\end{itemize}

Then, we can obtain $C'_{l+1}$ without making $a_x$ join an interaction.
Thus, the lemma holds.
\end{proof}

The following lemma is identical to the lemma in \cite{burman2018brief}. Although the lemma is proved for naming protocols in \cite{burman2018brief}, we can use the lemma because the proof does not use the property of naming protocols. For completeness, we also give the proof.

\begin{lemma}
\label{lem:name9}
Assume that a sink state $m$ exists in $Q_p$. Consider two reduced configurations $C_0$ and $C^{-s}_0$ with $P$ agents.
The difference between $C_0$ and $C^{-s}_0$ is only whether an agent $a_x$ has a non-sink state $s$ or a sink state $m$.
Consider a reduced execution segment $C^{-s}_0, e^{-s}_0, C_1$ of $Alg_{sym}$ with $P$ agents and an initialized BS, that satisfies the following conditions:
\begin{itemize}
\item During $C^{-s}_0, e^{-s}_0, C_1$, $a_x$ does not join interactions.
\item In any reduced configuration during $C^{-s}_0, e^{-s}_0$, there exists no agent with $s$.
\item $C_1$ is a reduced configuration such that there exists an agent with $s$.
\end{itemize}
If there exists such an execution segment, there also exists the reduced execution segment $C_0, e_0, C^{-s}_1$ of $Alg_{sym}$ with $P$ agents and an initialized BS, that satisfies the following conditions:
\begin{itemize}
\item During $C_0, e_0, C^{-s}_1$, $a_x$ does not join interactions except last reducing.
\item In any reduced configuration during $C_0, e_0$, there exists an agent with $s$.
\item $C^{-s}_1$ is a reduced configuration such that there does not exist an agent with $s$.
\end{itemize}
\end{lemma}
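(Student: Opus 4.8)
The plan is to build the required segment $C_0, e_0, C^{-s}_1$ by \emph{replaying} the given segment $C^{-s}_0, e^{-s}_0, C_1$ interaction-for-interaction while keeping the distinguished agent $a_x$ idle. Recall that $C_0$ and $C^{-s}_0$ agree on every agent except $a_x$, which holds the non-sink state $s$ in $C_0$ and the sink state $m$ in $C^{-s}_0$. First I would let every interaction of $e^{-s}_0$ (all of which avoid $a_x$ by hypothesis) be performed verbatim starting from $C_0$, and then argue by induction on the configuration index that each configuration so produced coincides with the corresponding configuration of $C^{-s}_0, e^{-s}_0$ except that $a_x$ carries $s$ rather than $m$. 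This is immediate, since the transition applied at each step involves only agents other than $a_x$, and those agents hold identical states in the two runs.

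The next step is to check that the replayed run is a legitimate reduced execution and meets the first two conclusion conditions. At every configuration of $e^{-s}_0$ that is \emph{reduced}, no agent holds $s$ by hypothesis; adding $a_x$ in state $s$ therefore creates no non-sink duplicate, so the replayed configuration is again reduced and contains an agent—namely $a_x$—in state $s$. This gives exactly the clause ``in any reduced configuration during $C_0, e_0$ there exists an agent with $s$''. Moreover $a_x$ is never selected for any interaction: I would mirror the reducing choices of $e^{-s}_0$ (which only ever pair agents $\neq a_x$) and leave $a_x$ fixed, so that $a_x$ joins no interaction throughout $e_0$.

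The only genuine obstruction is the final configuration. By hypothesis $C_1$ is reduced and contains a (necessarily unique, since $s$ is non-sink) agent $a_w\neq a_x$ with state $s$; hence the replayed configuration $C_1'$ has \emph{both} $a_w$ and $a_x$ in state $s$ and is not reduced. I would finish with one last reducing of this duplicate. Because $Alg_{sym}$ is symmetric, every homonym transition $(q,q)\rightarrow(r,r')$ satisfies $r=r'$, so the path witnessing $s \overset{sym}{\rightsquigarrow} m$ (condition~2 of a sink state, Definition~\ref{def:sinks}) carries the pair $(a_w,a_x)$ jointly from $(s,s)$ through $(m,m)$, leaving both in the sink state $m$. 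The resulting configuration agrees with $C_1$ on every agent except $a_w$, whose state has changed from $s$ to $m$; that is exactly $C^{-s}_1$, it is reduced, and it contains no agent with $s$. Since $a_x$ participated only in this closing reducing, all three conclusion conditions hold.

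I expect the main subtlety to be the bookkeeping around the reduced-execution discipline: one must ensure that the transient appearances of $s$ on agents other than $a_x$ (arising inside the reducing sub-sequences internal to $e^{-s}_0$) never force $a_x$ into an interaction. This is handled precisely by insisting that $e_0$ copy the reducing choices of $e^{-s}_0$ verbatim, so that every reducing pairs two agents different from $a_x$, and $a_x$'s lone copy of $s$ is resolved only at the very end against $a_w$.
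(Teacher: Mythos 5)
Your proposal is correct and follows essentially the same route as the paper's own proof: replay $e^{-s}_0$ from $C_0$ (possible because $a_x$ never interacts) to reach the configuration $C^{+s}_1$ containing two agents with $s$, then perform one final reducing of that homonym pair down to the sink state $m$ to obtain $C^{-s}_1$. Your write-up is in fact somewhat more explicit than the paper's, which leaves the induction on configurations, the verification that the replayed configurations remain reduced, and the symmetric-transition argument for the joint descent to $m$ implicit.
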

\begin{proof}
By making interaction similar to $C^{-s}_0, e^{-s}_0$, $C_1$, we can construct a reduced execution segment $C_0, e'_0$, $C^{+s}_1$ starting from $C_0$.
Since exactly two agents have state $s$ in $C^{+s}_1$, $C^{+s}_1$ can be reduce to $C^{-s}_1$. 
We denote this reducing procedure as $C^{+s}_1$, $e^r$, $C^{-s}_1$.
Then, we can obtain a execution segment $C_0$, $e'_0$, $C^{+s}_1$, $e^r$, $C^{-s}_1$.
Note that, during $C_0, e$, $C^{+s}_1$, agent $a_x$ does not join interactions. This implies that, during $C_0$, $e'_0$, $C^{+s}_1$, $e^r$, $C^{-s}_1$, agent $a_x$ does not join interactions except last reducing.
Hence, we can obtain the required execution segment $C_0, e_0, C^{-s}_1$ such that $e_0 = e'_0$, $C^{+s}_1$, $e^r$ holds.
\end{proof}

In the next lemma, we show that a sink state does not exist.

\begin{lemma}
\label{lem:impsink1}
A sink state does not exist in $Q_p$.
\end{lemma}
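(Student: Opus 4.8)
The plan is to assume, for contradiction, that a sink state $m$ exists, and to construct a weakly-fair execution with $P$ agents that never reaches a stable configuration. Recall from Lemma \ref{lem:P-24} that exactly $P/2$ states are $f(m)$-states (one of which is $m$ itself) and exactly $P/2$ states are $\overline{f(m)}$-states. In a reduced configuration with $P$ agents each non-sink state is held by at most one agent, so a reduced configuration in which every state occurs exactly once has $P/2$ agents of each color and is balanced; the target is to make a \emph{different}, unbalanced reduced configuration recur forever. Concretely, if a non-sink state $s$ with $f(s)=\overline{f(m)}$ is absent while the agent that would carry it instead holds a second copy of $m$, then that configuration has $P/2+1$ $f(m)$-agents and $P/2-1$ $\overline{f(m)}$-agents, an imbalance of two.

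First I would produce the two configurations to toggle between. By Lemma \ref{lem:impsink}, a reduced weakly-fair execution with $P-1$ agents stabilizes to a reduced configuration whose $f(m)$-agents number $P/2-1$ and whose $\overline{f(m)}$-agents number $P/2$; since there are exactly $P/2-1$ non-sink $f(m)$-states and exactly $P/2$ $\overline{f(m)}$-states, all non-sink states are present there, and by condition 3 of a sink state (which forbids $m$ from recurring with $P-1$ agents) $m$ is absent. Adjoining one agent that stays in state $m$ turns this into a reduced $P$-agent configuration $C^*$ in which all $P$ states occur exactly once. Fix a non-sink state $s$ with $f(s)=\overline{f(m)}$, let $a_x$ be the agent holding $s$ in $C^*$, and let $C^{*-s}$ be the configuration obtained by replacing $a_x$'s state $s$ with $m$; this $C^{*-s}$ is exactly the unbalanced reduced configuration described above.

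Next I would toggle $s$ between $C^*$ and $C^{*-s}$ using Lemmas \ref{lem:name8} and \ref{lem:name9}. The $P-1$ agents other than $a_x$ can, starting from $C^{*-s}$ and keeping $a_x$ idle in $m$, recreate $s$ on some agent---this reachability is witnessed by the $P-1$-agent stable configuration above, in which $s$ is present---so the hypothesis of Lemma \ref{lem:name9} holds and yields a reduced segment from $C^*$ that ends in $C^{*-s}$ (creating a duplicate $s$ and then reducing it to $m$). Conversely, Lemma \ref{lem:name8} supplies a reduced segment from $C^{*-s}$ back to a configuration equivalent to $C^*$ in which $a_x$ never interacts. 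Alternating these two segments yields an infinite execution that passes through a configuration equivalent to $C^{*-s}$ infinitely often; since every such configuration is unbalanced by two, the execution has no stable configuration. To make the execution weakly-fair I would rotate, across successive toggles, which agent plays the idle $m$-holder and which pairs interact while $s$ is recreated, exactly as in the round-robin constructions of Lemmas \ref{lem:sink} and \ref{lem:impsink}, so that every pair of the $P$ agents interacts infinitely often. A weakly-fair execution with no stable configuration contradicts the correctness of $Alg_{sym}$, which establishes the lemma.

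The hard part will be verifying the hypotheses that drive the toggle: that some $\overline{f(m)}$-colored non-sink state $s$ can actually be recreated by the $P-1$ non-idle agents (the reachability feeding Lemma \ref{lem:name9}), and that reducedness, the absence of $s$ except at the designated configurations, and the idleness of $a_x$ hold throughout. The color requirement is precisely where this argument departs from the naming impossibility of \cite{burman2018brief}: for naming any toggleable state produces two agents sharing a name, whereas here toggling an $f(m)$-state leaves the colors balanced, so the construction must target a state of the opposite color while still maintaining weak fairness. Reconciling the color constraint with weak fairness is the main obstacle.
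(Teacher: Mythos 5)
Your construction is, in outline, the same as the paper's: stabilize $P-1$ agents, adjoin an idle agent in $m$ to obtain a reduced $P$-agent configuration, pass to the unbalanced variant in which an $\overline{f(m)}$-state $s$ is traded for a second copy of $m$, toggle $s$ in and out forever via Lemmas \ref{lem:name8} and \ref{lem:name9}, and restore weak fairness by rotating the idle agent and the scheduled pairs. The color-targeting observation (that one must destroy an $\overline{f(m)}$-state rather than an arbitrary one) is also exactly the paper's point of departure from the naming impossibility.

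The gap is the step you yourself flag as hard, and your proposed justification for it fails. You claim that, from $C^{*-s}$, the $P-1$ agents other than $a_x$ can recreate $s$ while $a_x$ stays idle because this ``is witnessed by the $(P-1)$-agent stable configuration, in which $s$ is present.'' But those agents are not in that configuration: in $C^{*-s}$ their state multiset is $\{m\}\cup(Q_p\setminus\{m,s\})$, and they are not even the same set of agents as the original $P-1$ participants, since $a_x$ was one of them and the adjoined $m$-agent was not. Nor can you appeal to correctness of $Alg_{sym}$ for this subpopulation: correctness is guaranteed only when the BS starts in its designated initial state, and in $C^{*-s}$ the BS is in an arbitrary evolved state; moreover, a segment in which $a_x$ never interacts is not weakly-fair for the $P$-agent population, so $P$-agent correctness says nothing about such a segment directly---that is precisely why Lemma \ref{lem:name8} exists. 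The paper closes this hole with two moves that are missing from your writeup: (i) it shows the unbalanced configuration is \emph{reachable} in the $P$-agent system from a legitimate initial configuration, by letting the idle agent start in state $s$ (arbitrary initial states permit this), replaying the same interactions, and then reducing the two copies of $s$ to $m$---your $C^{*-s}$, defined by fiat as a state replacement, comes with no reachability argument; (ii) it then invokes the assumed correctness of $Alg_{sym}$ on the \emph{full} population: from the reachable, non-stable configuration $C^{-s}_{u_0}$, every weakly-fair continuation must reach a stable configuration, and every stable reduced configuration must contain all $P/2$ $\overline{f(m)}$-states, in particular $s$; only after obtaining such a segment is Lemma \ref{lem:name8} applied to make the designated $m$-agent idle, so that Lemma \ref{lem:name9} can be applied. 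With (i) and (ii) substituted for your witness argument, the rest of your plan goes through as in the paper.
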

\begin{proof}
We use the idea of the impossibility proof for the naming protocol \cite{burman2018brief}.
The idea of the proof is as follows. 
For contradiction, assume that there exists a sink state $m \in Q_p$.
First, consider an execution segment $e$ with $P$ agents such that, 1) a particular agent $a_x$ does not join interactions and other $P-1$ agents interact until convergence, 2) $a_x$ has $m$ as an initial state, and 3) its final configuration $C_h$ is a reduced configuration.
Let $s$ be a $\overline{f(m)}$-state. When $a_x$ has $s$ as an initial state, by making other agents interact similarly to $e$ we can obtain $C^{+s}_h$. 
Moreover, by Lemma \ref{lem:impsink}, since the number of $\overline{f(m)}$-agents except for $a_x$ is $P/2$ in $C^{+s}_h$ and $C_h$ is reduced, there exists an agent with $s$ except for $a_x$ in $C^{+s}_h$.
Thus, $C^{-s}_h$ can be obtained by reducing $C^{+s}_h$.
Since every $\overline{f(m)}$-state must be held by one agent in a stable reduced configuration, $C^{-s}_h$ is not stable.
Thus, by Lemma \ref{lem:name8}, we can construct $C^{-s}_h$, $e^{-s}_h$, $C_{h+1}$ such that an agent $a_y \neq a_x$ with $m$ does not join interactions and there exists an agent with $s$ in $C_{h+1}$.
After that, by repeating the application of Lemma \ref{lem:name9}, we can construct a weakly-fair execution of $Alg_{sym}$ so that an agent with $s$ disappears infinitely often.
Since every $\overline{f(m)}$-state must be held by one agent in a stable reduced configuration, such execution cannot solve the uniform bipartition and thus the lemma holds.

From now on, we show the details of the proof.
For contradiction, there exists a sink state $m \in Q_p$.

Consider a population $A = \{ a_0, a_1, a_2, \ldots, a_{P} \}$ of $P$ agents and an initialized BS, where $a_0$ is the BS.
First, we consider an execution segment $e = C_0$, $C_1$, $C_2$, $\ldots$, $C_h$ such that, 
\begin{itemize}
\item a particular agent $a_x$ has state $m$ in $C_0$ and does not interact during $e$. 
\item other $P-1$ agents (and the BS) interact until convergence, which implies that, by Lemma \ref{lem:impsink}, the number of $f(m)$-agents in $A-\{a_x\}$ is $P/2-1$ and the number of $\overline{f(m)}$-agents in $A-\{a_x\}$ is $P/2$ after some configuration of $e$, and, 
\item $C_h$ is a reduced configuration.
\end{itemize}
In $C_h$, by Lemma \ref{lem:impsink}, the number of $f(m)$-agents is $P/2$ and the number of $\overline{f(m)}$-agents is $P/2$ because additional agent $a_x$ has state $m$.
Note that all $\overline{f(m)}$-states are non-sink states and, by Lemma \ref{lem:P-24}, the number of $\overline{f(m)}$-states is $P/2$.
Since $C_h$ is reduced, no two agents have the same non-sink state and thus every $\overline{f(m)}$-state is held by exactly one agent in $C_h$.

Let $s$ be a $\overline{f(m)}$-state. We consider three configurations $\Cu{0} = C_h$, $\Cuu{+s}{0}$, and $\Cuu{-s}{0}$.
Since $a_x$ does not interact in $e$, $\Cuu{+s}{0}$ can be obtained by the same interactions in $e$ if $a_x$ has state $s$ in the initial configuration (Note that this execution may not be a reduced execution).
In $\Cuu{+s}{0}$, since every $\overline{f(m)}$-state is held by exactly one agent in $A - \{a_x\}$, there exists exactly one agent $a_s \neq a_x$ with $s$.
Hence, we can obtain a reduced configuration $\Cuu{-s}{0}$ by reducing $\Cuu{+s}{0}$.
Note that, since the number of $\overline{f(m)}$-states is $P/2$, every $\overline{f(m)}$-state must be held by one agent in any stable reduced configuration with $P$ agents.
Hence, since $\Cuu{-s}{0}$ is reduced and no agent has state $s$ in $\Cuu{-s}{0}$, $\Cuu{-s}{0}$ is not stable.
Hence, there exists an execution segment from $C^{-s}_{u_0}$ that leads to a stable configuration where some agent has state $s$. This implies that, we can construct a reduced execution segment $\epsilon_1 = \Cuu{-s}{0}$, $e^{-s}_0$, $\Cu{1}$ of $Alg_{sym}$ starting from $\Cuu{-s}{0}$ as follows:
\begin{itemize}
\item $\Cu{1}$ is reduced and exactly one agent $a_y$ has state $s$ in $\Cu{1}$.
\item For any reduced configuration in $e^{-s}_0$, no agent has state $s$.
\end{itemize}
Moreover, by Lemma \ref{lem:name8}, since $a_x$ has $m$ in $\Cuu{-s}{0}$, we can construct $\epsilon_1$ such that $a_x$ does not join interactions.
Hence, by Lemma \ref{lem:name9}, we can construct a reduced execution segment $\epsilon'_1 = \Cu{0}$, $e_0$, $\Cuu{-s}{1}$ of $Alg_{sym}$. Note that, in $\Cuu{-s}{1}$, $a_y$ has state $m$.

Similarly, we can construct a reduced execution segment $\epsilon_2 = \Cuu{-s}{1}$, $e^{-s}_1$, $\Cu{2}$ such that, 
\begin{itemize}
\item $\Cu{2}$ is reduced and exactly one agent $a_z$ has state $s$ in $\Cu{2}$, 
\item for any reduced configuration in $e^{-s}_1$, no agent has state $s$.
\end{itemize}
By Lemma \ref{lem:name8}, we can construct $\epsilon_2$ such that $a_y$ does not join interactions. 
Hence, by Lemma \ref{lem:name9}, we can construct a reduced execution segment $\epsilon'_2 = \Cu{1}$, $e_1$, $\Cuu{-s}{2}$ of $Alg_{sym}$.

By repeating similar arguments, we can construct an infinite execution segment $\epsilon^* = \Cuu{-s}{0}$, $e^{-s}_0$, $\Cu{1}$, $e_1$, $\Cuu{-s}{2}$, $e^{-s}_2$, $\Cu{3}$, $\ldots$.
Recall that, for $i \ge 0$, $\Cuu{-s}{i}$ is not stable because every $\overline{f(m)}$-state must be held by one agent in a stable reduced configuration.
Thus, $\epsilon^*$ cannot reach a stable configuration.
As described above, there exists an execution segment $e_{ini}$ that reaches $C^{-s}_{u_0}$ from some initial configuration. Hence, we can construct an execution $E = e^{ini}$, $\Cuu{-s}{0}$, $e^{-s}_0$, $\Cu{1}$, $e_1$, $\Cuu{-s}{2}$, $e^{-s}_2$, $\Cu{3}$, $\ldots$ that does not reach a stable configuration. The remaining thing is to show that we can construct $E$ so that $E$ is weakly-fair. 

Recall the way to construct an execution segment $\epsilon_i=\Cuu{-s}{i-1}$, $e^{-s}_{i-1}$, $\Cu{i}$. Since $C^{-s}_{u_i}$ is reduced, any pair of agents can interact at the first interaction of $\epsilon_i$. Consequently, we can construct $\epsilon_1,\epsilon_2,\ldots$ so that every pair of agents interact infinite number of times in the first interactions of $\epsilon_1,\epsilon_2,\ldots$. In addition, a pair of agents that interact in $\epsilon_i$ also interact in $\epsilon'_i$. Hence, we can construct $E$ so that every pair of agents interact infinite number of times. This implies that $E$ is weakly-fair, but $E$ cannot solve the problem. This is a contradiction.
\end{proof}

Finally, we show that, even if a sink pair exists in $Q_p$, $Alg_{sym}$ does not work.

\begin{lemma}
\label{lem:impsink2}
A sink pair does not exist in $Q_p$.
\end{lemma}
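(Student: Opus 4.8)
The plan is to argue by contradiction: assuming a sink pair $\{m_1,m_2\}$ exists, I would show that $Alg_{sym}$ fails to solve the uniform bipartition, which together with Lemma \ref{lem:impsink1} (no sink state) and Lemma \ref{pro:sink} (exactly one sink state or one sink pair) finishes the proof of Theorem \ref{the:sym}. The first step is to pin down the colors of $m_1$ and $m_2$. By Corollary \ref{cor:P-23}, exactly $P/2-1$ non-sink-pair states are $red$ and $P/2-1$ are $blue$, so the two sink-pair states carry the remaining color budget. Combined with the standing assumption $|Q_{blue}|\le|Q_{red}|$, the possibility that both $m_1$ and $m_2$ are $blue$ is excluded, leaving exactly two cases: both $red$, or one $red$ and one $blue$.

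For the both-$red$ case I would give a direct counting contradiction. Take any reduced weakly-fair execution with $P$ agents; since $Alg_{sym}$ is correct it contains a stable configuration $C_t$, and because reduced configurations recur (after any interaction producing non-sink-pair homonyms, reducing immediately restores a reduced configuration), some reduced configuration $C'$ occurs after $C_t$. In $C'$ each non-sink-pair state is held by at most one agent, and the only $blue$ states available are the $P/2-1$ $blue$ non-sink-pair states, so $C'$ has at most $P/2-1$ $blue$ agents. But colors are invariant after the stable configuration $C_t$, where (as $P$ is even) there are exactly $P/2$ $blue$ agents, forcing $P/2$ $blue$ agents at $C'$ as well --- a contradiction.

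For the mixed case, say $m_1$ is $red$ and $m_2$ is $blue$, pure counting no longer suffices, since the sink pair now supplies one unit of slack in each color. Here I would re-run the program of Lemma \ref{lem:impsink1}, first establishing the sink-pair analogues of Lemma \ref{lem:impsink}, Lemma \ref{lem:name8} and Lemma \ref{lem:name9}: that surplus agents parked in the pair can be hidden (an agent in a sink-pair state can let another cover its role so the remaining agents behave as in a smaller population), and that a non-sink state can be forced to appear and disappear. The transition type from condition~1 of Definition \ref{def:sinkp} drives a sub-split. In the swap type, where $(m_1,m_1)\to(m_2,m_2)$ and $(m_2,m_2)\to(m_1,m_1)$, a homonym interaction of $m_1$ turns two $red$ agents $blue$ (and vice versa); stability then forbids two agents sharing $m_1$ or $m_2$, so in any reduced stable configuration each non-sink state and each sink-pair state is occupied exactly once, and the disappearing-state construction of Lemma \ref{lem:impsink1} applies almost verbatim to some $\overline{f(m_1)}$-colored non-sink state. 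In the self-loop type, where $(m_1,m_1)\to(m_1,m_1)$ and $(m_2,m_2)\to(m_2,m_2)$, I would instead stuff surplus agents into the monochromatic self-loop $m_1$ to force a persistent $red$ surplus while the other agents converge as in a population of size $P-2$, using the hiding lemma to keep the construction weakly-fair.

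The main obstacle is precisely this self-loop sub-case of the mixed case. There the clean fact underlying the sink-state proof --- that every relevant non-sink state must be present in every stable reduced configuration --- breaks down, because one can leave a $blue$ non-sink state empty by piling extra agents into the $blue$ self-loop $m_2$ (and symmetrically for $red$); hence disappearing any single fixed state no longer certifies non-stability. The delicate part is therefore to show that some color imbalance (or some perpetual color change) can always be forced despite this slack, and simultaneously to arrange the auxiliary hidden interactions so that the resulting execution is genuinely weakly-fair --- this is exactly where the sink-pair analogues of Lemma \ref{lem:name8} and Lemma \ref{lem:name9} must be proved with care rather than merely quoted.
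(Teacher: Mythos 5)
Your colour classification and your monochromatic case are exactly right and coincide with the paper's own proof: by Corollary~\ref{cor:P-23} the pair must be $red$--$red$ or mixed, and in the $red$--$red$ case the counting argument (at most $P/2-1$ $blue$ agents in any reduced configuration versus the $P/2$ demanded by stability) is precisely what the paper does. The genuine gap is in the mixed case, which is the heart of the lemma and is exactly where you yourself admit the argument is incomplete. Your self-loop sub-case strategy --- park two surplus agents in the $red$ self-loop $m_1$ while the remaining $P-2$ agents converge as a population of size $P-2$ --- cannot be repaired into a weakly-fair execution. Weak fairness forces the parked agents to interact infinitely often with the other $P-2$ agents, and the only way to do this invisibly (as in Lemma~\ref{lem:impsink}) is to let a parked agent swap roles with a legitimate agent holding the same state $m_1$. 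But by Lemma~\ref{lem:sink}, in any weakly-fair execution with at most $P-2$ agents the loop states $m_1,m_2$ occur only finitely often, so after convergence no legitimate agent ever holds $m_1$; the parked agents have no double to swap with, and any direct interaction of a parked agent with a regular agent at some state $q$ is governed by an unconstrained rule $(m_1,q)\rightarrow(p',q')$ that can move it off $m_1$ and destroy both the disguise and the colour count.

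The idea your proposal is missing, and which lets the paper dispense with the self-loop/swap sub-case split altogether, is to work with $P-1$ agents rather than $P-2$. Since $P-1$ is odd, the majority colour has $P/2$ agents in a stable configuration, while by Corollary~\ref{cor:P-23} only $P/2-1$ non-pair states carry that colour; pigeonhole therefore forces some agent $a_q$ to occupy the sink-pair state $m^*$ of the majority colour in every stable reduced configuration. The paper then adds a \emph{single} extra agent parked at $m^*$, alternates its role with $a_q$ between execution segments, and ends every segment by letting the two $m^*$-agents interact exactly \emph{twice}: condition~1 of Definition~\ref{def:sinkp} guarantees that two consecutive homonym interactions return both agents to $m^*$ whether the pair is of self-loop type or of swap type. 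The resulting execution of $P$ agents is weakly-fair and has a permanent colour imbalance of two, giving the contradiction --- with no need for sink-pair analogues of Lemmas~\ref{lem:name8} and~\ref{lem:name9}. Those analogues are themselves problematic, by the way: with a pair, ``at least two agents in sink states'' may mean one agent at $m_1$ and one at $m_2$, which are not interchangeable in the replacement argument of Lemma~\ref{lem:name8}, so even your swap sub-case would need substantially more than a verbatim transfer.
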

\begin{proof}
For contradiction, assume that there exists a sink pair in $Q_p$.

Let $m_1$ and $m_2$ be a sink pair. 
We consider two cases: 1) $f(m_1)=f(m_2)$ holds, or 2) $f(m_1)\neq f(m_2)$ holds.

First, we consider the case that $f(m_1) = f(m_2)$ holds.
By Corollary \ref{cor:P-23}, the number of $red$ (resp., $blue$) states not in the sink pair is $P/2-1$ (resp., $P/2-1$). Since we assume $|Q_{blue}|\le|Q_{red}|$, $f(m_1)=f(m_2)=red$ holds.
Hence, $|Q_{blue}|=P/2-1$ and $|Q_{red}|=P/2+1$ hold.
Consider a reduced weakly-fair execution $E^*$ of $Alg_{sym}$ with $P$ agents and an initialized BS.
Since $E^*$ is a reduced execution, a stable reduced configuration occurs infinitely often in $E^*$.
Every state not in the sink pair is held by at most one agent in any reduced configuration, and thus, at most $P/2-1$ $blue$ agents exist in any reduced configuration in $E^*$.
Hence, any reduced configuration in $E^*$ is not stable.
This is a contradiction.
Therefore, $f(m_1) = f(m_2)$ does not hold (i.e., $f(m_1) \neq f(m_2)$ holds).

Next, we consider the case that $f(m_1) \neq f(m_2)$ holds.
The idea of the proof is as follows.
We consider an execution $E$ with $P-1$ agents.
Let $m^*$ be a state in the sink pair such that the number of $f(m^*)$-agents is $P/2$ in a stable reduced configuration of $E$. 
This implies that there exists some agent with $m^*$ in the configuration.
Next, consider an execution with $P$ agents such that one additional agent has $m^*$ as an initial state and other agents behave similarly to $E$.
In the execution, the additional agent does not join the interactions until $P-1$ agents converge to a stable reduced configuration in $E$. 
At that time, one of the $P-1$ agents has state $m^*$ and the additional agent also has state $m^*$. 
We can prove that, from this configuration, $P-1$ agents cannot recognize the additional agent and hence they make the same behavior as in $E$.
In addition, the additional agent can keep state $m^*$.
Since the additional agent has $f(m^*)$-state, the number of $f(m^*)$-agents is $P/2+1$ and the number of $\overline{f(m^*)}$-agents is $P/2-1$. 
This implies that the uniform bipartition problem cannot be solved.

From now on, we show the details of the proof.
Without loss of generality, we assume that $f(m_1)=red$ and $f(m_2) = blue$ hold.
Consider a population $A = \{ a_0, a_1, a_2, \ldots, a_{P-1} \}$ of $P-1$ agents and an initialized BS, where $a_0$ is the BS.
We define a reduced weakly-fair execution $E = C_0$, $C_1$, $C_2$, $\ldots$, $C_t$, $\ldots$, $C_{t'_0}, e_{1}$, $C_{t'_1}, e_{2}$, $C_{t'_2}, e_{3}$, $\ldots$ of $Alg_{sym}$ with $A$ as follows.
\begin{itemize}
\item $C_t$ is a stable configuration.
\item For any $u \ge 0$, $C_{t'_u}$ is a particular stable reduced configuration such that $C_{t'_0} = C_{t'_1} = C_{t'_2} = \cdots$ holds. 
Note that such a configuration (i.e., a stable reduced configuration that appears infinite number of times) exists because the number of possible configurations is finite.
\item For $j > 0$, $e_{j}$ is an execution segment such that, during execution segment $C_{t'_{j-1}},e_{j},C_{t'_j}$, any pair of agents in $A$ interact at least once. This is possible because $E$ is weakly-fair.
\end{itemize}
Let $m^*$ be a state in the sink pair such that the number of $f(m^*)$-agents is $P/2$ in $C_{t'_0}$. Note that $m^*$ is uniquely decided because the number of agents is $P-1$. Since every state not in the sink pair is held by at most one agent in a reduced configuration, there exists an agent $a_q$ that has state $m^*$ in $C_{t'_0}$.

Subsequently, consider a population $A' = \{a'_0$, $a'_1$, $a'_2, \ldots$, $a'_{P} \}$ of $P$ agents and an initialized BS, where $a'_0$ is the BS.
We define a reduced weakly-fair execution $E' = C'_0$, $C'_1$, $C'_2$, $\ldots$, $C'_t$, $\ldots$ $C'_{t'_0}$, $e'_{1}$, $C'_{t'_1}$, $e'_{2}$, $C'_{t'_2}$, $e'_{3}$, $\ldots$ of $Alg_{sym}$ with $A'$.
First, we define the first part of $E'$, that is, $C'_0$, $C'_1$, $C'_2$, $\ldots$, $C'_t$, $\ldots$ $C'_{t'_0}$ as follows.
\begin{itemize}
\item In initial configuration $C'_0$, $a'_0,\ldots,a'_{P-1}$ have the same states as $a_0,\ldots,a_{P-1}$ in $C_0$ and $a'_P$ has state $m^*$. Formally, $s(a'_i, C'_0) = s(a_i, C_0)$ holds for $i$ ($P-1\ge i\ge 0$), and $s(a'_P, C'_0) = m^*$ holds.
\item From $C'_0$ to $C'_{t'_0}$, $a'_0,\ldots,a'_{P-1}$ interact similarly to $a_0,\ldots,a_{P-1}$ in $E$. Formally, for any $u$($t'_0>u\ge 0$), when $a_g$ and $a_h$ interact at $C_u \rightarrow C_{u+1}$, $a'_g$ and $a'_h$ interact at $C'_u \rightarrow C'_{u+1}$.
\end{itemize}
Clearly, $s(a'_i, C'_{t'_0}) = s(a_i, C_{t'_0})$ holds for $i$ ($P-1 \ge i \ge 0$) and $s(a'_P, C'_{t'_0}) = m^*$ hold.
This implies that the number of $f(m^*)$-agents is $P/2+1$ and the number of $\overline{f(m^*)}$-agents is $P/2-1$ in $C'_{t'_0}$.

Then, we construct the remaining part of $E'$, that is, $C'_{t'_0}$, $e'_{1}$, $C'_{t'_1}$, $e'_{2}$, $C'_{t'_2}$, $e'_{3}$, $\ldots$ as follows:
\begin{itemize}
\item For $j > 0$, we construct an execution segment $e'_{j} = \Ch{j}{1}$, $\Ch{j}{2}$, $\Ch{j}{3}$, $\ldots$, $\Ch{j}{z}$, $\Ch{j}{z+1}$, $\Ch{j}{z+2}$ by using $e_{j} = \C{j}{1}$, $\C{j}{2}$, $\C{j}{3}$, $\ldots$, $\C{j}{z}$, where $z=|e_{j}|$ holds.
Concretely, we construct $C'_{t'_{j-1}}$, $e'_{j}$, $C'_{t'_j}$ as follows:
\begin{itemize}
\item Case that $j$ is even. In this case, agents $a'_0,\ldots,a'_{P-1}$ interact in execution segment $C'_{t'_{j-1}},\Ch{j}{1}$, $\Ch{j}{2}$, $\ldots$, $\Ch{j}{z+1}$ similarly to $a_0,\ldots,a_{P-1}$ in execution segment $C_{t'_{j-1}},e_j,C_{t'_j}$. Formally, when $a_g$ and $a_h$ interact at $\C{j}{f} \rightarrow \C{j}{f+1}$ for $z>f>0$ (resp., $C_{t'_{j-1}} \rightarrow \C{j}{1}$ and $\C{j}{z} \rightarrow C_{t'_j}$), $a'_g$ and $a'_h$ interact at $\Ch{j}{f} \rightarrow \Ch{j}{f+1}$ (resp., $C'_{t'_{j-1}} \rightarrow \Ch{j}{1}$ and $\Ch{j}{z} \rightarrow \Ch{j}{z+1}$).
\item Case that $j$ is odd. In this case, $a'_P$ joins interactions instead of $a'_q$. Note that, in $C'_{t'_{j-1}}$, both $a'_P$ and $a'_q$ have state $m^*$. Formally we construct $e'_j$ as follows:
(1) when $a_g(g\neq q)$ and $a_h(h\neq q)$ interact at $\C{j}{f} \rightarrow \C{j}{f+1}$ for $z>f>0$ (resp., $C_{t'_{j-1}} \rightarrow \C{j}{1}$ and $\C{j}{z} \rightarrow C_{t'_j}$), $a'_g$ and $a'_h$ interact at $\Ch{j}{f} \rightarrow \Ch{j}{f+1}$ (resp., $C'_{t'_{j-1}} \rightarrow \Ch{j}{1}$ and $\Ch{j}{z} \rightarrow \Ch{j}{z+1}$), 
(2) when $a_q$ interacts with an agent $a_i(i\neq q)$ at $\C{j}{f} \rightarrow \C{j}{f+1}$ (resp., $C_{t'_{j-1}} \rightarrow \C{j}{1}$ and $\C{j}{z} \rightarrow C_{t'_j}$), $a'_{P}$ interacts with $a'_i$ at $\Ch{j}{f} \rightarrow \Ch{j}{f+1}$ (resp., $C'_{t'_{j-1}} \rightarrow \Ch{j}{1}$ and $\Ch{j}{z} \rightarrow \Ch{j}{z+1}$).
\item $a'_P$ and $a'_q$ interact at $\Ch{j}{z+1} \rightarrow \Ch{j}{z+2}$. Also, $a'_P$ and $a'_q$ interact at $\Ch{j}{z+2} \rightarrow C'_{t'_{j}}$.
\end{itemize}
\end{itemize}

We can inductively show that, for any $x\ge 0$, $s(a'_i, C'_{t'_x}) = s(a_i, C_{t'_x})$ holds for any $i$ ($P-1 \ge i \ge 0$) and $s(a'_q, C'_{t'_x}) = s(a'_P, C'_{t'_x}) = s(a_q, C_{t'_x}) = m^* $ holds. 
Clearly this holds for $x=0$. Assume that this holds for $x=j$, and consider the case of $x=j+1$. 
When $j+1$ is even, during execution segment $C'_{t'_{j}},\Ch{j+1}{1}$, $\Ch{j+1}{2}$, $\ldots$, $\Ch{j+1}{z+1}$, agents in $A'-\{a'_P \}$ interact similarly to $C_{t'_{j}}$, $e_{j+1}$, $C_{t'_{j+1}}$.
Hence, $s(a'_i, \Ch{j+1}{z+1}) = s(a_i, C_{t'_{j+1}})$ holds for any $i$ ($P-1\ge i\ge 0$) and $s(a'_q, \Ch{j+1}{z+1}) = s(a'_P, \Ch{j+1}{z+1}) = s(a_q, C_{t'_{j+1}}) = m^* $ holds.
$a'_P$ and $a'_q$ interact at $\Ch{j+1}{z+1} \rightarrow \Ch{j+1}{z+2}$ and $\Ch{j+1}{z+2} \rightarrow C'_{t'_{j+1}}$.
By the assumption of $m^*$, if two agents with $m^*$ interact twice, they keep their state $m^*$. 
Since $a'_P$ and $a'_q$ have state $m^*$ in $\Ch{j+1}{z+1}$, they also have state $m^*$ in $C'_{t'_{j+1}}$, and thus, $\Ch{j+1}{z+1}$ is equal to $C'_{t'_{j+1}}$.
Thus, $s(a'_i, C'_{t'_{j+1}}) = s(a_i, C_{t'_{j+1}})$ holds for any $i$ ($P-1 \ge i \ge 0$) and $s(a'_q, C'_{t'_{j+1}}) = s(a'_P, C'_{t'_{j+1}}) = s(a_q, C_{t'_{j+1}}) = m^* $ holds.
When $j+1$ is odd, during execution segment $C'_{t'_{j}},\Ch{j+1}{1}$, $\Ch{j+1}{2}$, $\ldots$, $\Ch{j+1}{z+1}$, $a'_P$ interacts instead of $a'_q$ and agents in $A'-\{a'_q \}$ behave similarly to $C_{t'_{j-1}}$, $e_j$, $C_{t'_j}$.
Hence, $s(a'_i, \Ch{j+1}{z+1}) = s(a_i, C_{t'_{j+1}})$ holds for any $i$ ($P-1\ge i\ge 0$) and $s(a'_q, \Ch{j+1}{z+1}) = s(a'_P, \Ch{j+1}{z+1}) = s(a_q, C_{t'_{j+1}}) = m^* $ holds.
Similarly, $a'_P$ and $a'_q$ interact at $\Ch{j+1}{z+1} \rightarrow \Ch{j+1}{z+2}$ and $\Ch{j+1}{z+2} \rightarrow C'_{t'_{j+1}}$, and, they keep their state $m^*$. 
Thus, $s(a'_i, C'_{t'_{j+1}}) = s(a_i, C_{t'_{j+1}})$ holds for any $i$ ($P-1 \ge i \ge 0$) and $s(a'_q, C'_{t'_{j+1}}) = s(a'_P, C'_{t'_{j+1}}) = s(a_q, C_{t'_{j+1}}) = m^* $ holds.

By the assumption, for any $j$, the number of $f(m^*)$-agents is one more than the number of $\overline{f(m^*)}$-agents in $C_{t'_j}$.
Thus, for any $j$, the number of $f(m^*)$-agents is two more than the number of $\overline{f(m^*)}$-agents in $C'_{t'_j}$.
Hence, $E'$ never converges to a stable configuration.
During the execution segment $C'_{t'_{j-1}}$, $e'_j$, when $j$ is even, agents in $A'-\{a'_{P}\}$ interact each other.
Similarly,  when $j$ is odd, agents in $A'-\{a'_q\}$ interact each other.
Moreover, for $j > 0$, at $\Ch{j}{z+1} \rightarrow \Ch{j}{z+2}$ and $\Ch{j}{z+2} \rightarrow C'_{t'_{j}}$, $a'_q$ and $a'_P$ interact.
Thus, although $E'$ is weakly-fair, $E'$ cannot solve the problem.
This is a contradiction.
Therefore, the lemma holds.
\end{proof}

By Lemma \ref{pro:sink}, there exists either a sink pair or a sink state.
However, in the both cases, $Alg_{sym}$ does not work.
Therefore, we can obtain the theorem.

\setcounter{theorem}{2}

\section{Proof of Algorithm \ref{alg1}}
\label{app:alg}

In the following, we prove the correctness of Algorithm \ref{alg1}.
\begin{theorem}
\label{the:kpart}
Algorithm \ref{alg1} solves the uniform $k$-partition problem.
This means that, in the model with an initialized BS, there exists an asymmetric protocol with $P$ states and arbitrary initial states that solves the uniform $k$-partition problem under weak fairness, where $P$ is the known upper bound of the number of agents.
\end{theorem}

To prove the theorem, we show the following two lemmas.

\begin{lemma}
\label{lem:Mp}
Let $E=C_0, C_1, \ldots$ be a weakly-fair execution of Algorithm \ref{alg1}. In configuration $C_i$ ($i \ge 0$), for any $s$ with $0 \le s \le M-1$, at least one agent with state $s$ exists.
\end{lemma}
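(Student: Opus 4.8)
The plan is to establish the statement as an invariant preserved throughout the execution, proved by induction on the configuration index $i$. Here $M$ always denotes the value of the BS variable in the configuration under consideration, so the property I would maintain is: in every configuration, each value $s$ with $0 \le s \le M-1$ is held by at least one mobile agent. For the base case $C_0$, the BS initializes $M=0$, so the index range $0 \le s \le M-1$ is empty and the claim holds vacuously. For the inductive step I would assume the claim for $C_i$ and examine the single transition $C_i \rightarrow C_{i+1}$, splitting into the two transition types of Algorithm \ref{alg1}.

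First I would treat an interaction between the BS and a mobile agent $a$ with state $x=S_a$. If $x < M$, the guard $M \le S_a$ fails, nothing changes, and the claim carries over directly. If $x \ge M$, the BS sets $S_a = M$ and increments the variable to $M' = M+1$, so I must verify the claim for the enlarged range $0 \le s \le M$. For $s = M$, agent $a$ itself now witnesses state $M$. For $s \le M-1$, the inductive hypothesis supplies a witness in $C_i$; since $a$ held $x \ge M > s$, that witness is distinct from $a$ and hence keeps its state in $C_{i+1}$. Thus every value in the enlarged range is covered.

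Second I would treat an interaction between two mobile agents $a$ and $b$. The only non-null case is $S_a = S_b = x$ with $x < P-1$, where one of them, say $a$, becomes $x+1$ while $M$ is unchanged, so I need the claim for the same range $0 \le s \le M-1$. The only state whose population could shrink is $x$; but because both $a$ and $b$ held $x$ before the step, agent $b$ still holds $x$ afterwards, so the value $x$ does not disappear, while every other value keeps or increases its population. Hence all values $s \le M-1$ remain witnessed, completing the induction.

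I expect the only point requiring care — more a point to state cleanly than a genuine obstacle — to be the two-agent collision case: the invariant survives precisely because a value is incremented \emph{only} when it is duplicated, so the surviving homonym retains it. This is exactly the mechanism the algorithm relies on to thin duplicates down to a single occurrence without ever losing a value below $M$, and the other cases are essentially bookkeeping that a freshly assigned state $M$ is witnessed by the agent that just received it.
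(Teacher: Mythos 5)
Your proof is correct and follows essentially the same route as the paper's: induction on the configuration index, with a vacuous base case ($M=0$) and a case split on whether the BS participates in the interaction. You actually spell out the two-agent case more carefully than the paper does (the paper only notes that one agent keeps its state, whereas you identify that the decremented value $x$ survives precisely because it was duplicated), but the underlying argument is identical.
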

\begin{proof}
We show the proof by induction on the index of a configuration in execution $E$.

The base case is vacuously true because $M$ is initialized to $0$ in the initial configuration $C_0$.

For the induction step, assume that the lemma holds in $C_k(0 \le k)$.
That is, in $C_k$, at least one agent with state $s$ exists for any $s$ with $0 \le s \leq M-1$.
We consider two cases of interaction at transition $C_k \rightarrow C_{k+1}$.

First, consider the case that the BS joins the interaction.
If the BS interacts with an agent with state less than $M$, the BS and the agent do not change their states.
If the BS interacts with an agent with state $M$ or more, the BS assigns $M$ to the agent and then increases $M$.
Thus, the lemma holds at $C_{k+1}$.

Next, consider the case that the BS does not join the interaction.
When two non-BS agents interact, at least one agent keeps its state.
Since $M$ is not changed, the lemma holds at $C_{k+1}$.
\end{proof}

\begin{lemma}
\label{lem:Mn}
Let $E = C_0, C_1, \ldots$ be a weakly-fair execution of Algorithm \ref{alg1}. There exists a configuration $C_i$ such that $M=n$ holds.
\end{lemma}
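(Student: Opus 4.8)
The plan is to show that the BS variable $M$ is a non-decreasing integer that is bounded above by $n$, and that it cannot stabilize at any value strictly below $n$. First I would record two facts. Since the BS changes $M$ only through the assignment $M = M+1$, the value of $M$ is non-decreasing along $E$. Moreover, Lemma \ref{lem:Mp} guarantees that in every configuration each of the states $0, 1, \ldots, M-1$ is held by at least one agent; as distinct states are held by distinct agents, this forces $M \le n$ at every configuration. Hence $M$ is a non-decreasing integer sequence bounded above by $n$, so it converges: there is an index $s$ and a value $M^* \le n$ with $M = M^*$ in every configuration from $C_s$ onward. It then remains only to exclude $M^* < n$.

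Assuming $M^* < n$ for contradiction, the first key observation I would establish is that after $C_s$ the BS's update rule becomes inert. Indeed, if the BS interacted with an agent $a$ satisfying $M^* \le S_a$, its guard would fire and increment $M$, contradicting that $M$ is constant after $C_s$; so after $C_s$ every BS interaction leaves all states unchanged. Consequently, beyond $C_s$ an agent's state can change only through the agent--agent rule, which merely increments the state of one of two homonyms, so each agent's state is non-decreasing after $C_s$. From this I would deduce that no agent ever holds a state $\ge M^*$ after $C_s$: such an agent would keep a state $\ge M^*$ by monotonicity, and by weak fairness it would eventually meet the BS, triggering an increment of $M$ and a contradiction. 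Thus from $C_s$ on, all $n$ agents have states in $\{0, \ldots, M^*-1\}$.

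The contradiction then follows from a counting argument combined with a freezing argument. Since $\sum_a S_a$ is non-decreasing and bounded by $n(M^*-1)$, it stabilizes, say from $C_{s'}$ with $s' \ge s$; because the individual states are non-decreasing while their sum is constant, every agent's state is in fact frozen after $C_{s'}$. On the other hand, all $n$ states lie in the $M^*$-element set $\{0, \ldots, M^*-1\}$ with $M^* < n$, so by the pigeonhole principle two agents $a$ and $b$ share a common state at $C_{s'}$; this value is at most $M^*-1 \le n-2 < P-1$, so the agent--agent guard $S_a = S_b$ and $S_a < P-1$ is satisfied. By weak fairness $a$ and $b$ interact at some configuration after $C_{s'}$, and since their states are frozen they still coincide there, so the interaction increments one of them, contradicting the freezing of states. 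This contradiction forces $M^* = n$, giving a configuration with $M = n$.

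The step I expect to be the main obstacle is proving cleanly that after stabilization no agent can ever hold a state $\ge M^*$, equivalently that the BS guard never fires again. This is where the interaction between the two transition rules and weak fairness must be handled with care, in particular ruling out that agent--agent increments push some agent up to $M^*$ only to be ``absorbed'' by the BS; once the non-decreasing-after-$C_s$ property and this boundedness claim are in place, the pigeonhole and freezing contradiction is routine.
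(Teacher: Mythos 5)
Your proof is correct, and while it shares the paper's opening moves, the decisive step is genuinely different. Both arguments begin the same way: $M$ is non-decreasing and bounded (via Lemma \ref{lem:Mp}), hence stabilizes at some value $M^*$; and if $M^*<n$, then the BS rule together with weak fairness forces every agent's state to stay below $M^*$ from some point on (any agent holding a state $\ge M^*$ would eventually meet the BS and bump $M$). From there the paper finishes with a cascade argument that uses Lemma \ref{lem:Mp} as a running invariant: pigeonhole gives two homonyms at some state $q$; when they meet, one moves to $q+1$, which is already occupied by Lemma \ref{lem:Mp}, creating homonyms at $q+1$, and so on until some agent reaches state $M^*$, contradicting the established bound. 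You instead introduce the potential $\sum_a S_a$, which after stabilization of $M$ is non-decreasing and bounded by $n(M^*-1)$, hence stabilizes and freezes every individual state; then pigeonhole yields two homonyms with a common state, which you correctly check is below $P-1$ so the agent--agent guard fires, and weak fairness forces them to interact and change a state, contradicting the freeze. Your route buys cleanliness: it sidesteps the interleaving bookkeeping that the paper's cascade leaves implicit (e.g., whether both $q$-agents still hold $q$ when they finally meet), because all states are provably frozen before the decisive interaction. The paper's cascade, in exchange, exhibits concretely how an agent is driven up to state $M^*$ and leans more heavily on the invariant of Lemma \ref{lem:Mp}, which your version needs only for the initial bound $M\le n$.
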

\begin{proof}
For contradiction, assume that $M$ does not become $n$ in $E$. Since $M$ is monotonically increasing, the BS eventually stops updating $M$. Let $l < n$ be the last value of $M$, and $C_j$ be the first configuration with $M=l$.

First, we can say that all agents have states less than $l$ after $C_j$. Otherwise some agent $a'$ with state $l'$ ($l' \ge l$) exists after $C_j$. From the algorithm, $a'$ never decreases its state unless it interacts with the BS. Since $E$ is weakly-fair, $a'$ eventually interacts with the BS. At that time, $a'$ has state at least $l' \ge l = M$ and consequently the BS increases $M$. This contradicts the assumption.

From Lemma \ref{lem:Mp}, for every $s$ with $0 \le s \le l-1$, at least one agent with state $s$ exists at configuration $C_j$. In addition, since each agent has one of states $0$ to $l-1$ with $l<n$, at least two agents have state $q$ for some $q < l$. From the algorithm, an agent with state less than $l$ ($=M$) changes its state only when it interacts with the agent with the same state. Hence, two agents with state $q$ eventually interact, and then one of them enters state $q+1$. If $q+1 \le l-1$ holds, at least two agents with state $q+1$ exist and similarly one of them enters state $q+2$. Hence eventually some agent enters state $l$. This is a contradiction.
\end{proof}

From Lemma \ref{lem:Mn}, the BS eventually sets $M=n$, and after that it never updates $M$. From Lemma \ref{lem:Mp}, when $M=n$ holds, for every $s$ with $0 \le s \le n-1$, exactly one agent has state $s$.
Clearly, the configuration achieves uniform $k$-partition and no agent updates its state after that.
Therefore, Theorem \ref{the:kpart} holds.

\end{document}